\def\maintexname{goodstein-pldi} %
\numberwithin{equation}{section}
\setlist[enumerate,1]{label=(\arabic*),font=\normalfont,align=left,leftmargin=0pt,labelindent=0pt,listparindent=\parindent,labelwidth=0pt,itemindent=!,topsep=2pt,parsep=0pt,itemsep=2pt,start=1}
\setlist[enumerate,2]{label=(\alph*),font=\normalfont,labelindent=*,leftmargin=*,start=1}
\setlist[itemize]{labelindent=*,leftmargin=*}
\setlist[description]{labelindent=*,leftmargin=*,itemindent=-1 em}
\newcommand{\nicehref}[2]{%
\begin{tikzpicture}[baseline=(txt.base)]
\node[text=blue!80!white!40!black,inner sep=0pt,text depth=1.1pt] (txt)
  {\href{#1}{{#2}{\hspace*{1pt}\raisebox{.0pt}{\color{blue!30!white}\tiny\faExternalLink}}}};
\begin{scope}[on background layer]
\draw[color=blue!25!white] (txt.south west) -- (txt.south east);
\end{scope}
\end{tikzpicture}%
}
\newcommand{\onlineHtmlURL}{%
  https://arxiv.org/src/2512.10748v3/anc/html/%
}
\newsavebox{\logoagdabox}
\sbox{\logoagdabox}{%
  \raisebox{-2pt}{\includegraphics[height=1em]{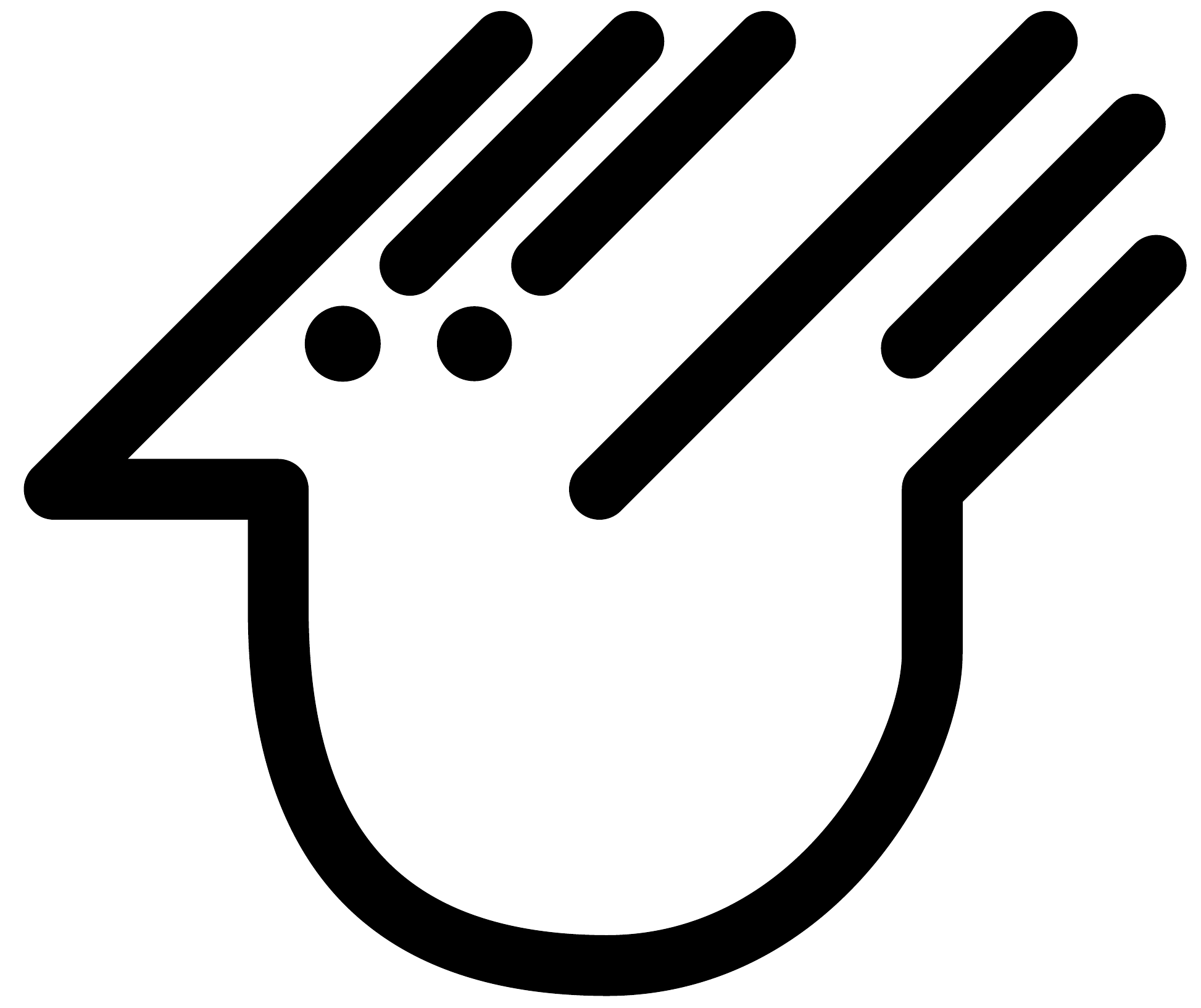}}%
}
\newcommand{\agdaref}[4][]{%
  \coqref{#2}{#3}{#4%
  }{\href{\onlineHtmlURL #2.html\##3}{\usebox{\logoagdabox}}}}
\newcommand{\agdarefcustom}[4]{%
  \coqrefcustom{#1}{#2}{#3}{\href{\onlineHtmlURL #2.html\##3}{\usebox{\logoagdabox}}}%
  {#4%
  }}
\newcommand{\proofappendixbegin}[2]{%
  \phantomsection%
  \subsection*{\textbf{#1~\autoref{#2}}}%
  \addcontentsline{toc}{subsection}{#1~\autoref{#2}}%
  \label{#2:proof}%
  \def\proofappendix@qedsymbolmissing{\qed}
}
\newcommand{\proofappendixend}{%
  \proofappendix@qedsymbolmissing%
}
\let\oldqedhere\qedhere
\def\qedhere{\global\def\proofappendix@qedsymbolmissing{}\oldqedhere}
\newenvironment{proofhere}[2][Proof of]{%
  \proofappendixbegin{#1}{#2}%
}{%
  \proofappendixend%
  \par%
}
  \newenvironment{proofappendix}[2][Proof of]{%
    \begin{proofhere}[#1]{#2}
  }{%
    \end{proofhere}
  }
\newcommand{\Set}{\ensuremath{\mathsf{Set}}}
\newcommand{\Id}{\ensuremath{\mathsf{Id}}}
\newcommand{\inl}{\ensuremath{\mathsf{inl}}}
\newcommand{\inr}{\ensuremath{\mathsf{inr}}}
\newcommand{\alg}{\ensuremath{\mathsf{alg}}}
\newcommand{\coalg}{\ensuremath{\mathsf{coalg}}}
\newcommand{\prl}{\ensuremath{\mathsf{pr}_1}}
\newcommand{\prr}{\ensuremath{\mathsf{pr}_2}}
\newcommand{\id}{\ensuremath{\mathsf{id}}}
\newcommand{\pr}{\ensuremath{\mathsf{pr}}}
\newcommand{\C}{{\ensuremath{\mathcal{C}}}}
\newcommand{\D}{{\ensuremath{\mathcal{D}}}}
\newcommand{\floor}[1]{\ensuremath{\lfloor #1 \rfloor}}
\newcommand{\fpair}[1]{\ensuremath{\ensuremath{\langle #1 \rangle}}}
\newcommand{\modop}{\ensuremath{\mathrel{\mathrm{mod}}}}
\newcommand{\Coalg}{\ensuremath{\mathsf{Coalg}}}
\newcommand{\derive}{\mathrel{\overset{+}{\Rightarrow}}}
\renewcommand{\epsilon}{\ensuremath{\varepsilon}}
\newcommand{\sort}{\mathsf{sort}}
\newcommand{\Nat}{\mathbb{N}}
\newcommand{\sle}{\sqsubseteq} %
\newcommand{\sgt}{\sqsupset} %
\newcommand{\sge}{\sqsupseteq}
\newcommand{\N}{\mathbb{N}}
\newcommand{\Z}{\mathbb{Z}}
\newcommand{\coprime}{\mathrel{\bot}}
\newcommand{\B}{\mathcal{B}}
\newcommand{\Pow}{\mathcal{P}}
\newcommand{\Powf}{\Pow_\mathsf{f}}
\newcommand{\Trees}{\mathsf{Trees}}
\renewcommand{\gcd}{\mathsf{gcd}}
\newcommand{\NgN}{{\mathord{\Nat\times \Nat}}}
\newcommand{\CYK}{\textsf{CYK}\xspace}
\newcommand{\eea}{\mathsf{eea}}
\newcommand{\Int}{\mathbb{Z}}
\newcommand{\ol}{\overline}
\newcommand{\maxsteps}{\mathsf{maxsteps}}
\newcommand{\xto}{\xrightarrow}
\newcommand{\inj}{\mathsf{in}}
\newcommand{\seq}{\subseteq}
\newcommand{\takeout}[1]{}
\DeclareMathSymbol{\mathinvertedexclamationmark}{\mathclose}{operators}{'074}
\DeclareMathSymbol{\mathexclamationmark}{\mathclose}{operators}{'041}
\newcommand{\raisedmathinvertedexclamationmark}{%
  \mathclose{\mathpalette\raised@mathinvertedexclamationmark\relax}%
}
\newcommand{\raised@mathinvertedexclamationmark}[2]{%
  \raisebox{\depth}{$\m@th#1\mathinvertedexclamationmark$}%
}
\newcommand{\hookto}{\ensuremath{\hookrightarrow}}
\newcommand{\monoto}{\ensuremath{\rightarrowtail}}
\newcommand{\set}[2][]{%
  \ifthenelse{\equal{#2}{}}{%
    \ensuremath{{#1\emptyset}}%
  }{%
    \ensuremath{{#1\{#2#1\}}}%
  }%
}
\newsavebox{\mypullbackcorner}%
\sbox{\mypullbackcorner}{%
\begin{tikzpicture}
    \draw[-] (0,0) -- (.5em,.5em) -- (0,1em);
\end{tikzpicture}%
}
\newcommand{\pullbackangle}[2][]{\arrow[phantom,to path={
                     -- ($ (\tikztostart)!1cm!#2:([xshift=8cm]\tikztostart) $)
                        node[anchor=west,pos=0.0,rotate=#2,
                        inner xsep = 0]
                        {\begin{tikzpicture}[minimum
                        height=1mm,baseline=0,#1]
    \draw[-] (0,0) -- (.5em,.5em) -- (0,1em);
                        \end{tikzpicture}}}]{}}
\tikzstyle{shiftarr}=[
\tikzset{
  loop at/.style={
    loop,
    out=#1-30,
    in=#1+30,
    looseness=6,
    every node/.append style={
      anchor=#1-180,
    },
  },
}
\newcommand{\descto}[3][]{\arrow[phantom]{#2}[font=\footnotesize,#1]{\text{\begin{tabular}{c}{}#3\end{tabular}}}}
\newcommand{\resetCurThmBraces}{%
  \gdef\curThmBraceOpen{(}%
  \gdef\curThmBraceClose{)}}
\newcommand{\removeThmBraces}{%
  \gdef\curThmBraceOpen{}%
  \gdef\curThmBraceClose{}}
\patchcmd{\thmhead@plain}{(#3)}{\curThmBraceOpen #3\curThmBraceClose }{}{}%
\patchcmd{\thmhead@acmplain}{(#3)}{\curThmBraceOpen #3\curThmBraceClose }{}{}%
\patchcmd{\thmhead@acmdefinition}{(#3)}{\curThmBraceOpen #3\curThmBraceClose }{}{}%
\newenvironment{envcite}{\removeThmBraces}{\resetCurThmBraces}
  \theoremstyle{acmdefinition}
  \newtheorem{assumption}[theorem]{Assumption}
  \newtheorem{convention}[theorem]{Convention}
  \newtheorem{observation}[theorem]{Observation}
  \newtheorem{remark}[theorem]{Remark}
  \newtheorem{notation}[theorem]{Notation}
  \renewcommand{\theHexample}{\thesection.\the\value{example}}
\title{Intrinsically Correct Algorithms and Recursive Coalgebras}
\author{Cass Alexandru}
\email{c.alexandru@cs.rptu.de}
\affiliation{%
  \institution{RPTU Kaiserslautern-Landau}
  \city{Kaiserslautern}
  \country{Germany}
}
\affiliation{%
  \institution{Radboud University Nijmegen}
  \city{Nijmegen}
  \country{The Netherlands}
}
\author{Henning Urbat}
\email{henning.urbat@fau.de}
\affiliation{%
  \institution{Friedrich-Alexander-Universität Erlangen-Nürnberg}
  \city{Erlangen}
  \country{Germany}
}
\author{Thorsten Wißmann}
\email{thorsten.wissmann@fau.de}
\affiliation{%
  \institution{Friedrich-Alexander-Universität Erlangen-Nürnberg}
  \city{Erlangen}
  \country{Germany}
}
\begin{abstract}
Recursive coalgebras provide an elegant categorical tool for modelling recursive algorithms and analysing their termination and correctness. By considering coalgebras over categories of suitably indexed families, the correctness of the corresponding algorithms follows \emph{intrinsically} just from the type of the computed maps. However, proving recursivity of the underlying coalgebras is non-trivial, and proofs are typically ad hoc. This layer of complexity impedes the formalization of coalgebraically defined recursive algorithms in proof assistants. We introduce a framework for constructing coalgebras which are \emph{intrinsically} recursive in the sense that the type of the coalgebra guarantees recursivity from the outset. Our approach is based on the novel concept of a \emph{well-founded functor} on a category of families indexed by a well-founded relation. We show as our main result that every coalgebra for a well-founded functor is recursive, and demonstrate that well-known techniques for proving recursivity and termination such as ranking functions are subsumed by this abstract setup. We present a number of case studies, including Quicksort, the Euclidian algorithm, and CYK parsing. Both the main theoretical result and selected case studies have been formalized in Cubical Agda.
\end{abstract}
\definecolor{RED}{rgb}{1,0,0}\definecolor{BLUE}{rgb}{0,0,1} %
\lstdefinelanguage{DIFcode}{ %
  moredelim=[il][\color{red}\sout]{\%DIF\ <\ }, %
  moredelim=[il][\color{blue}\uwave]{\%DIF\ >\ } %
} %
\lstdefinestyle{DIFverbatimstyle}{ %
	language=DIFcode, %
	basicstyle=\ttfamily, %
	columns=fullflexible, %
	keepspaces=true %
} %
\begin{document}

\begin{CCSXML}
  <ccs2012>
  <concept>
  <concept_id>10003752.10003790.10011740</concept_id>
  <concept_desc>Theory of computation~Type theory</concept_desc>
  <concept_significance>500</concept_significance>
  </concept>
  <concept>
  <concept_id>10003752.10003790.10002990</concept_id>
  <concept_desc>Theory of computation~Logic and verification</concept_desc>
  <concept_significance>500</concept_significance>
  </concept>
  <concept>
  <concept_id>10003752.10010124.10010131.10010137</concept_id>
  <concept_desc>Theory of computation~Categorical semantics</concept_desc>
  <concept_significance>500</concept_significance>
  </concept>
  </ccs2012>
\end{CCSXML}

\ccsdesc[500]{Theory of computation~Type theory}
\ccsdesc[500]{Theory of computation~Logic and verification}
\ccsdesc[500]{Theory of computation~Categorical semantics}

\keywords{Agda, Coalgebra, Intrinsic Verification, Termination Arguments}

\ifthenelse{\boolean{showmaintext}}{

\maketitle

\section{Introduction}

Recursion is a powerful and widely used design principle for algorithms. However, as soon as an algorithm is not \emph{structurally recursive} on its input, such as is the case for most divide-and-conquer algorithms, syntactic termination checkers as used in total functional programming languages/proof assistants will reject a recursive definition.\twnote{der satz ist noch etwas lang} Instead, \emph{well-founded recursion} must be used (for a comparison of various ways to achieve this in the Rocq proof assistant, see~\cite{leroyWellfoundedRecursionDone2024}). One disadvantage of this approach is that it negates the possibility of disentangling the recursive steps from the recursive definition per se, following the paradigm of \emph{structured recursion}~\cite{hinzeConjugateHylomorphismsMother2015,CaprettaUV06,birdAlgebraProgramming1997,meijerFunctionalProgrammingBananas1991}. To bring structured recursion to the world of total functional programs,
% drag kicking and screaming into the century of the fruitbat (Discworld reference)
a need arises for a \emph{general}, \emph{abstract}, \emph{compositional} methodology that supports this principle.

To this end, in this paper we develop novel sufficient criteria, amenable to formalization and formalized in a type-theoretical proof assistant, for defining recursive algorithms using \emph{recursive coalgebras}~\cite{osiusCategoricalSetTheory1974,amm25,taylor99,CaprettaUV06}. The latter are the conceptual foundation of structured recursion. The key idea of this abstract approach to programs is to decompose the recursive computation of a function $h\colon C\to A$ as shown in diagram \eqref{eq:rec-coalgebra}  into (1) a map $c\colon C\to FC$ (a \emph{coalgebra} for a suitable functor $F$) that splits an input from $C$ into \enquote{smaller} parts, (2) the recursive computation of $h$ on those parts, and (3) a map $a\colon FA\to A$ (an \emph{algebra} for $F$) that combines the recursively computed values back into a value of the target set $A$. The choice of the functor $F$ determines the type of data occurring in the three steps, as well as the structure of the call tree.
For example, the Quicksort algorithm (cf.\ \autoref{sec:quicksort}), which computes the map $\sort\colon Z^* \to Z^*$ sending a list of elements of some linearly ordered set \((Z, \sle)\) to its sorted permutation, corresponds to the set functor $FX=1+X\times Z\times X$ and the decomposition \eqref{eq:sort-qs}.

The coalgebra $c$ describes the choice of the pivot, say the head of the input list, and the splitting of the list into two sublists (e.g.\ $c([7,5,9,8,4])=([5,4],7,[9,8])$), and the algebra $a$ combines the recursively sorted sublists and the pivot into a single sorted list (e.g.\ $a([4,5],7,[8,9])=[4,5,7,8,9]$).

\noindent 
\begin{minipage}{.3\textwidth}
\begin{equation}\label{eq:rec-coalgebra}
\begin{tikzcd}[column sep=30, row sep=20]
C \ar[dashed]{r}{h} \ar{d}[swap]{c} &  A \\
FC \ar{r}{Fh} & FA \ar{u}[swap]{a} 
\end{tikzcd}
\end{equation}
\end{minipage}
\begin{minipage}{.01\textwidth}
  ~
\end{minipage}
\begin{minipage}{.63\textwidth}
\begin{equation}\label{eq:sort-qs}
  \begin{tikzcd}[column sep=70]
    Z^* \arrow{d}[swap]{c}  \arrow[dashed]{r}{\sort}  &  Z^*
   \\
   1 + Z^* \times Z\times Z^*
    \arrow{r}{\id+ \sort \times \id\times \sort} & 
    1 + Z^* \times Z \times Z^*
    \arrow{u}[swap]{a}
  \end{tikzcd}
\end{equation}
\end{minipage}

\smallskip\noindent Many recursive algorithms besides Quicksort can be presented diagrammatically in this style~\cite{CaprettaUV06,jeanninWellfoundedCoalgebrasRevisited2017}. The coalgebra $c$ underlying a recursive algorithm is called \emph{recursive} if, for each algebra $a$, there is a \emph{unique} morphism $h$ making diagram \eqref{eq:rec-coalgebra} commute, i.e.\ \(\forall (a \colon FA \to A).\, \exists ! (h \colon C \to A).\, h = a \cdot Fh \cdot c\). For instance, the Quicksort coalgebra $c\colon Z^*\to 1 + Z^* \times Z \times Z^*$ is recursive. We provide a proof of this using existing techniques (\autoref{prop:sort-recursive}), as well as using our novel criterion (\autoref{ex:quicksort-well-founded}).
Recursivity of a coalgebra $c$ guarantees that every recursive algorithm whose recursive branching is specified by $c$ terminates on every input and thus computes a total map, no matter what exactly the algebraic part $a$ of the algorithm does.

The (co)algebraic analysis of Quicksort can be augmented to also prove the \emph{correctness} of the algorithm. Following \citet{AlexandruCRW25}, the key observation is that diagram \eqref{eq:sort-qs} lifts from the category $\Set$ of sets to the category $\Set/\B Z$ of $\B Z$-indexed families of sets, where $\B Z$ is the set of finite multisets (bags) of integers. This entails that the map $\sort\colon Z^*\to Z^*$ computed by Quicksort preserves the multiset of elements of the input list $w$, so $\sort(w)$ is a permutation of~$w$.
The multiset index also allows expressing that the partition of elements around the pivot is indeed \emph{ordered}. This  means that the final list, obtained by collapsing a call tree equivalent to a binary search tree, is likewise ordered.
This is an instance of \emph{intrinsic} correctness: the correctness of the map $\sort$ follows immediately from its type, i.e.\ the fact that it is a morphism in $\Set/\B Z$ (cf.\ \autoref{sec:quicksort-correctness}). \citet{AlexandruCRW25} have shown that this principle extends from Quicksort to many other recursive sorting algorithms and provide an effective and uniform approach to verifying their correctness in Cubical Agda~\cite{vezzosiCubicalAgdaDependently2019}.

\subsection*{Contribution}
Our paper presents several new contributions to the theory of recursive coalgebras, directed towards enhancing their applicability to (fully formalized) correctness and termination proofs.

\paragraph*{Intrinsic Correctness via Recursive Coalgebras} We demonstrate that proofs of intrinsic correctness of recursive algorithms can be systematically based on recursive coalgebras lifted to categories of indexed families. Our approach is conceptually rather different from that of \citet{AlexandruCRW25} whose analysis of intrinsically correct sorting algorithms does not use the concept of recursivity, but rather exploits an initial algebra/final coalgebra coincidence in the category $\Set/\B Z$. The latter is specific to the list functor. The present approach is substantially more general and applies beyond the setting of sorting algorithms. We present several case studies for illustration, including Quicksort, the Euclidian algorithm, and the \CYK parsing algorithm.

\paragraph*{Well-Founded Functors} The key difficulty in working with recursive coalgebras lies in actually proving their recursivity. This is usually done on a per case basis and requires potentially complex arguments. In fact, while a number of general sufficient criteria for recursivity are known~\cite{amm25,taylor99,CaprettaUV06}, most of them are hard to apply, or do not apply at all, to the kind of indexed coalgebras occurring in correctness proofs for recursive algorithms (cf.\ \autoref{sec:challenge}). The core contribution of our paper addresses this issue: We introduce the novel concept of a \emph{well-founded functor} on a category of families indexed by a well-founded relation, such as the relation $<$ on $\B Z$ ordering multisets by their cardinality. Our main result (\autoref{allCausalRecursive}) asserts that:
\[\text{\emph{Every} coalgebra for a well-founded functor is recursive.} \]
In other words, coalgebras for well-founded functors are \emph{intrinsically} recursive. We demonstrate in our case studies that the functors underlying many recursive algorithms are well-founded, which usually follows directly from their definition. Thus the above result greatly simplifies recursivity proofs for the corresponding coalgebras, and moreover bases those proofs on a common foundation.

\paragraph*{Coalgebraic Termination Analysis}
On top of supporting proofs of intrinsic correctness, our theory of well-founded functors gives rise to a general coalgebraic account of proof methods for program termination. Terminating recursive programs admit a neat categorical abstraction in the form of \emph{well-founded coalgebras}, which are closely related to recursive coalgebras. We demonstrate that termination analysis techniques based on ranking functions~\cite{cookPodelskiRybalchenko}, including advanced techniques such as disjunctive well-foundedness~\cite{pr04}, emerge at the level of well-founded coalgebras in the category of sets, and in part even extend to abstract categories. 

\paragraph*{Formalization} Our intrinsic recursivity theorem for well-founded functors and a selection of case studies have been implemented in the proof assistant Cubical Agda. Besides providing a formal verification of the corresponding results, the implementation highlights how the coalgebraic framework developed in our paper yields a convenient and systematic foundation for the design of formalized correctness proofs.
Our formalization is available on:
\begin{center}
\nicehref{\onlineHtmlURL index.html}{\onlineHtmlURL index.html}
\end{center}
Individual formalized notions and results in the paper are marked by an icon
\href{\onlineHtmlURL index.html}{\usebox{\logoagdabox}} that is a link to the
respective file and identifier in the above linked web page. Additionally,
there is an explicit index of formalized results in \autoref{agdarefsection}\footnote{Available in the supplementary material to this paper.}.
We have focused the formalization of results to the ones relevant for the development of intrinsically correct algorithms.
In \autoref{sec:well-founded-functors} those are the results related to \emph{recursivity} of coalgebras.
The corresponding results on \emph{well-founded coalgebras} are not formalized, because they are only used for set-theoretic arguments on paper and not used in the implemented algorithms.
In \autoref{sec:well-founded-coreflection} the formalization is of the results specialized to the category of sets as modeled by Agda types -- the rationale for this is laid out in \autoref{sec:library-design}.
Of the case studies, the algorithmic ones have been machine-checked. These are the running Quicksort example as well as the GCD and CYK in \autoref{sec:case-studies}.

\section{Recursive Algorithms as Recursive Coalgebras}\label{sec:prelim}

In this section we motivate the notion of recursive coalgebra for modeling intrinsically correct algorithms.
In \ref{sec:quicksort} we give some background on recursive coalgebras and how they serve as an abstract model for recursive algorithms~\cite{taylor99,CaprettaUV06,jeanninWellfoundedCoalgebrasRevisited2017}.
In \ref{sec:well-found-coalg}, we recapitulate standard techniques \cite{amm25,taylor99} for proving recursivity of coalgebras.
In \ref{sec:quicksort-correctness}, we recapitulate the notion of intrinsic correctness.
We conclude in \ref{sec:challenge} with the observation that the indexed setting of intrinsically correct algorithms simultaneously provides a challenge to existing techniques and the motive for our novel technique for proving recursivity.
Throughout this section, we use QuickSort as a running example.

\subsection{Categorical Preliminaries}\label{sec:prelim-cat}
Throughout this paper we shall use some (very elementary) category
theory~\cite{awodey10,maclane98}; in particular, readers should be familiar
with functors, natural transformations, basic universal constructions such as
(co)products and pullbacks, and adjunctions. For the purposes of the present
paper, we focus on the following categories and functors on them:
\begin{itemize}
\item The category $\Set$ which consists of sets $X,Y,Z,\ldots $ as objects and
maps $f\colon X\to Y$ as morphisms. A functor $F\colon \Set\to\Set$ consists of
a construction that sends a set to another set, e.g. $X\mapsto X^2$ or $X\mapsto \N + X$, with the
property that $F$ is compatible with maps: for every map $f\colon X\to Y$ we
obtain a map $Ff\colon FX\to FY$ (preserving identities and function
composition).

\item For a set $I$, we let $\C^I$ denote the $I$-fold power of the category $\C$. Its objects are $I$-indexed families $X=(X_i)_{i\in I}$ of objects of $\C$, and a morphism $f\colon X\to Y$ is an $I$-indexed family of morphisms $(f_i\colon X_i\to Y_i)_{i\in I}$. Composition and identities are defined componentwise. In the case $\C=\Set$, we have the equivalence of categories
\begin{equation}
  \Set^I \simeq \Set/I
  \label{eq:familySlice}
\end{equation}
between $\Set^I$ and the \emph{slice category} $\Set/I$. The objects of the latter are pairs $(X,r)$ of a set $X$ and a map $r\colon X\to I$, and a morphism from $(X,r)$ to $(Y,s)$ is a map $f\colon X\to Y$ such that $r=s\cdot f$. The equivalence \eqref{eq:familySlice} identifies a family $X\in \Set^I$ with the pair $(\coprod_i X_i,r)\in \Set/I$ where $r(\inj_i(x))=i$. Conversely, it identifies a pair $(X,r)\in \Set/I$ with the family $(r^{-1}(i))_{i\in I}$ in $\Set^I$.

\end{itemize}

We use the categorical counterparts for many set constructions to use them in both $\Set$ and $\Set^I$:
we write
$1$ for the terminal object (e.g.\ the set $1=\set{*}$ in $\Set$) and $X\times
Y$ for the binary product of two
objects $X$ and~$Y$ (cartesian product in \Set). Dually, we write $0$ for the initial object ($\emptyset$ in $\Set$), $!`\colon
0\to X$ for the unique morphism into an object $X$, and $X+Y$ for the coproduct
of $X$ and $Y$, with injections $\inl\colon X\to X+Y$ and $\inr\colon Y\to
X+Y$. Arbitrary set-indexed coproducts are denoted by $\coprod_{i\in I} X_i$,
with injections $\inj_j\colon X_j\to \coprod_{i\in I} X_i$ for $j\in I$ (disjoint union in $\Set$). In $\Set^I$, all the above constructions correspond to performing the respective $\Set$-constructions in every component $i\in I$.
Given
a family $f_i\colon X_i\to Y$ ($i\in I$) of morphisms, we write $[f_i]_{i\in
I}\colon \coprod_{i\in I} X_i\to Y$ for the unique morphism with
$f_j=[f_i]_{i\in I}\cdot \inj_j$ for all $j\in I$. In the category $\Set$ of
sets and functions, $0$ and $1$ are given by the empty set and the singleton
set, products by cartesian products, and coproducts by disjoint unions.
We call a function \(\overline{f}\colon A \to Y\) defined as \(f \colon X \to Y\) on a subset \(A \subseteq X\) of the original domain \(X\) a \emph{restriction} of \(f\). Dually, for \(B \subseteq Y\), if the image of \(f\) lies in \(B\), we say \(f\) \emph{corestricts} to \(B\).

% For categories $\B$ and $\C$, we denote the category of functors from $\B$ to $\C$ and natural transformations by
%   \(
%     [\B,\C]
%     \text{ or }
%     \C^\B
%   \). (We use both notations because we will work with the category of
%   endofunctors on a functor category.) In particular, for a set $I$ regarded as a discrete category, 

% A \emph{lifting} of a functor $F\colon \Set\to \Set$ to the category $\Set/I$ is a functor $\bar F\colon \Set/I\to \Set/I$ making the diagram below commute, where $U$ is the forgetful functor given by $(X,r)\mapsto X$ and $f\mapsto f$:
% \[
% \begin{tikzcd}
% \Set/I \ar{r}{\bar F} \ar{d}[swap]{U} & \Set/I \ar{d}{U} \\
% \Set \ar{r}{F} & \Set 
% \end{tikzcd}
% \]
% Every map $\lambda\colon FI\to I$ (called a \emph{modality}) induces a lifting $\bar F$ of $F$ given by  
% \[ \bar F(X,r) = (FX, \lambda\cdot Fr),\qquad \bar F f = Ff. \] 
% Conversely, every lifting $\bar F$ of $F$ is induced by the modality $\lambda\colon FI\to I$ given by $\bar F(I,\id_I)=(FI,\lambda)$.

\begin{convention}\label{conv:set-functors}
For set functors $F\colon \Set\to\Set$ we tacitly assume that $F$ preserves inclusion of sets, that is, $A\seq B$ implies $FA\seq FB$ for all $A,B\in\Set$. This simplifies notation, holds in all our examples, and is essentially without loss of generality: For every set functor $F$, there is a functor~$F'$ that preserves inclusion and is naturally isomorphic to $F$ on non-empty sets~\cite[Sec.~III.4]{adamek1990automata}.
\end{convention}

\subsection{Running Example: Quicksort}\label{sec:quicksort}
To motivate the abstract coalgebraic concepts developed in the sequel, we consider as a running example the Quicksort algorithm~\cite{hoare62}, a divide-and-conquer algorithm for comparison-based
sorting. We have sketched the core ideas in the introduction; here we give a more detailed account. In general, sorting lists
over a set $Z$ with respect to a linear order $\mathord{\sle} \subseteq
Z\times Z$ amounts to a map
\[
  \sort\colon Z^* \longrightarrow Z^*
\]
on the set $Z^*=\coprod_{n\in \Nat} Z^n$ of finite lists over $Z$, sending a list to its sorted permutation. (A list $w\in Z^n$ is \emph{sorted} if $w(0)\sle w(1)\sle \cdots \sle w(n-1)$.) Quicksort computes this map recursively as 
\[ \sort(\epsilon)=\epsilon\qquad\text{and}\qquad \sort(pw)= \sort(w_{\sle p})\, p\, \sort(w_{\sgt p}) \]
for the empty list $\epsilon$ and  $p\in Z$, $w\in Z^*$. Here $w_{\sle p}$ denotes the sublist of $w$ containing those entries~$x$ with
$x\sle p$; analogously $w_{\sgt p}$ contains those entries $x$ with $x\sgt p$. The above recursive definition can be expressed diagrammatically, as observed by~\citet{CaprettaUV06}. Indeed, it states precisely that the map $\sort$ makes the rectangle \eqref{eq:sort} below commute, with the maps $c$ and $a$ given by
\begin{align}
  &c\colon Z^* \to 1+Z^*\times Z\times Z^*,&&
  c(\epsilon) = \inl, &&
  c(pw) = \inr(w_{\sle p}, p, w_{\sgt p}), \label{eq:sort-coalgebra} \\
  &a\colon 1+Z^* \times Z\times Z^*
  \rightarrow Z^*,
  &&
  a(\inl) = \epsilon,
  &&
  a(\inr(u,p,v)) = u\, p\, v. \label{eq:sort-algebra}
\end{align}
The map $c$ specifies the choice of the pivot element of a non-empty list (here the head of the list) and the splitting into two smaller sublists, while the map $a$ describes how to combine two recursively sorted sublists and the pivot back into a single list. By taking the endofunctor $T$ on $\Set$ given by
\begin{equation}\label{eq:F-quicksort} TX=1+X\times Z\times X \qquad\text{and}\qquad  Tf=\id_1+f\times \id_Z \times f, \end{equation}
we can display the commutative diagram \eqref{eq:sort} more compactly as \eqref{eq:sort-functor}.

\noindent\begin{minipage}{.60\textwidth}
\begin{equation}\label{eq:sort}
  \begin{tikzcd}[column sep=60]
    Z^* \arrow{d}[swap]{c}  \arrow{r}{\sort}  &  Z^*
   \\
   1 + Z^* \times Z \times  Z^*
    \arrow{r}{\id+ \sort \times \id \times \sort} & 
    1 + Z^* \times Z \times Z^*
    \arrow{u}[swap]{a}
  \end{tikzcd}
\end{equation}
\end{minipage}
\begin{minipage}{.01\textwidth}
~
\end{minipage}
\begin{minipage}{.3\textwidth}
 \begin{equation}\label{eq:sort-functor}
  \begin{tikzcd}[column sep=20]
    Z^* \arrow{d}[swap]{c}  \arrow{r}{\sort}  &  Z^*
   \\
   TZ^*
    \arrow{r}{T\sort} &
    TZ^*
    \arrow{u}[swap]{a}
  \end{tikzcd}
\end{equation}
\end{minipage}

\paragraph{Diagrams}
Here, a diagram in the category $\Set$ consists of sets as vertices and maps as
edges between them. A diagram \emph{commutes}, when the composition of maps
along any path between the same start and end vertex yield the same composed
map. E.g.\ in \eqref{eq:sort-functor}, the commutativity boils down to the
equality $\sort = a\cdot Tsort \cdot c$.

\subsection{Recursive Coalgebras}
The diagrammatic presentation \eqref{eq:sort} of Quicksort suggests the following categorical abstraction:

\begin{definition}[Recursive Coalgebra, \agdaref{Cubical.Categories.Functor.RecursiveCoalgebra}{isRecursive}{}] Let $F$ be an endofunctor on a category $\C$.
\begin{enumerate}
\item An \emph{$F$-algebra} is a pair $(A,a)$ of an object $A$ of $\C$ (the \emph{carrier} of the algebra) and a morphism $a\colon FA\to A$ (its \emph{structure}).
\item Dually, an \emph{$F$-coalgebra} is a pair $(C,c)$ of an object $C$ of $\C$ (\emph{the carrier} of the coalgebra) and a morphism $c\colon C\to FC$ (its \emph{structure}).
\item A \emph{coalgebra-to-algebra} morphism from an $F$-coalgebra $(C,c)$ to an $F$-algebra $(A,a)$ is a morphism $h\colon C\to A$ such that the square below commutes, i.e.\ \(h = a \cdot Fh \cdot c\):
\[
\begin{tikzcd}
C \ar{r}{h} \ar{d}[swap]{c} &  A \\
FC \ar{r}{Fh} & FA \ar{u}[swap]{a} 
\end{tikzcd}
\] 
\item A coalgebra $(C,c)$ is \emph{recursive} if for {every} algebra $(A,a)$ there exists a unique coalgebra-to-algebra morphism from $(C,c)$ to $(A,a)$.
\end{enumerate}
\end{definition}
We think of a coalgebra-to-algebra morphism as a recursive algorithm computing the function $h\colon C\to A$: The coalgebra $c$ breaks an input $x$ from $C$ into smaller parts $x_1,\ldots,x_n$ whose values $h(x_1),\ldots, h(x_n)$ are then recursively computed, and the algebra $a$ combines those values into the target value $h(x)$ in $A$. Recursivity of  $c$ asserts that, no matter what $a$ does, the map $h$ is uniquely defined by the recursive procedure specified by $c$ and $a$. For example, this is the case in Quicksort:

\begin{proposition}\label{prop:sort-recursive}
The coalgebra $c\colon Z^*\to 1+Z^*\times Z \times Z^*$ defined by \eqref{eq:sort-coalgebra} is recursive.
\end{proposition}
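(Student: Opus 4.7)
The plan is to define the required map $h\colon Z^*\to A$ by well-founded recursion on the length of the input list, and then verify both the coalgebra-to-algebra property and its uniqueness by induction on the same measure. The crucial fact that makes the recursion well-defined is that whenever $c(pw)=\inr(w_{\sle p},p,w_{\sgt p})$, both sublists $w_{\sle p}$ and $w_{\sgt p}$ are sublists of $w$ and hence have length strictly less than $|pw|$.

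Concretely, given an $F$-algebra $a\colon 1+A\times Z\times A\to A$, I would define $h\colon Z^*\to A$ by
\[
  h(\epsilon) = a(\inl) \qquad\text{and}\qquad h(pw) = a\bigl(\inr(h(w_{\sle p}),\, p,\, h(w_{\sgt p}))\bigr),
\]
which is a valid recursive definition because the length of the input strictly decreases on each recursive call. By construction the square in the definition of coalgebra-to-algebra morphism commutes: for $w=\epsilon$ we get $h(\epsilon)=a(\inl)=a(Fh(c(\epsilon)))$, and for $w=pw'$ we get $h(pw')=a(\inr(h(w'_{\sle p}),p,h(w'_{\sgt p})))=a(Fh(c(pw')))$ by the definition of $Fh$ on the coproduct $1+X\times Z\times X$.

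For uniqueness, suppose $h'\colon Z^*\to A$ is another coalgebra-to-algebra morphism, so $h' = a\cdot Fh'\cdot c$. I would prove $h(w)=h'(w)$ by strong induction on $|w|$. If $w=\epsilon$, then $h'(\epsilon)=a(Fh'(\inl))=a(\inl)=h(\epsilon)$. If $w=pw'$, then by the induction hypothesis applied to the strictly shorter lists $w'_{\sle p}$ and $w'_{\sgt p}$ we have $h(w'_{\sle p})=h'(w'_{\sle p})$ and $h(w'_{\sgt p})=h'(w'_{\sgt p})$, and hence $h'(pw')=a(\inr(h'(w'_{\sle p}),p,h'(w'_{\sgt p})))=a(\inr(h(w'_{\sle p}),p,h(w'_{\sgt p})))=h(pw')$.

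There is no real obstacle to this direct argument; the only subtlety is the appeal to well-foundedness of the length order on $Z^*$, which is what secures both the definition of $h$ and its uniqueness. This observation is exactly the template that the paper will later abstract away via the notion of a well-founded functor, so that the proof above can be replaced by an instantiation of the general intrinsic recursivity theorem instead of an ad hoc induction on list length.
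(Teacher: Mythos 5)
Your proof is correct, but it takes a different route from the paper's. The paper does not construct $h$ by hand: it first notes that $F X = 1+X\times Z\times X$ preserves wide intersections (since $+$ and $\times$ commute with pullbacks in $\Set$), then forms the \emph{canonical graph} of $c$, whose edges are $pw\to w_{\sle p}$ and $pw\to w_{\sgt p}$, observes that this graph has no infinite paths because list length strictly decreases along every edge, and concludes that $(Z^*,c)$ is well-founded by \autoref{prop:wf-set} and hence recursive by the recursion theorem (\autoref{thm:wf-vs-rec}). Your argument is the elementary unfolding of exactly that chain: you inline the well-founded recursion and the uniqueness induction that the recursion theorem packages up, and both proofs ultimately rest on the single combinatorial fact you isolate, namely $|w_{\sle p}|,|w_{\sgt p}|<|pw|$. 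What your version buys is self-containedness — no appeal to canonical graphs, preservation of wide intersections, or the recursion theorem. What the paper's version buys is reusability: the same two general results dispatch recursivity for any coalgebra whose canonical graph is well-founded, and the paper is deliberately setting up the reader to see that even this amount of work is what the later notion of a well-founded functor eliminates — a point your closing remark correctly anticipates. One small note: you silently correct the paper's typo in \eqref{eq:sort-coalgebra}, where the non-empty case is written with $\inl$ but must of course be $\inr$; your reading is the intended one.
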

It follows that the map $\sort$ is uniquely determined by its recursive specification given by the commutative diagram \eqref{eq:sort}. But how is \autoref{prop:sort-recursive}, and recursivity of coalgebras in general, actually proven? One natural approach is to analyse the \emph{well-foundedness} of the coalgebra $c$.

% \begin{lemma} \label{recursiveSplitEpi}
%   For every natural transformation $\alpha\colon G\to F$,
%   and every recursive $G$-coalgebra $c\colon C\to GC$,
%   if $\alpha_C$ is a split-epimorphism, then
%   $\alpha_C\cdot c\colon C\to FC$ is a recursive $F$-coalgebra.
% \end{lemma}
% \begin{proof}
%   For an $F$-algebra $a\colon FA\to A$, we obtain a unique $h\colon C\to A$ satisfying:
%   \[
%     \begin{tikzcd}
%       C
%       \arrow[dashed]{d}[swap]{h}
%       \arrow{rr}{c}
%       & & GC
%       \arrow{d}{Gh}
%       \\
%       A
%       & FA
%       \arrow{l}{a}
%       & GA
%       \arrow{l}{\alpha_A}
%     \end{tikzcd}
%   \]
%   This gives rise to a coalgebra-to-algebra morphism for $F$:
%   \[
%     \begin{tikzcd}
%       C
%       \arrow{r}{c}
%       \arrow{dd}[swap]{h}
%       & GC
%       \arrow{d}{Gh}
%       \arrow{r}{\alpha_C}
%       & FC
%       \arrow{dd}{Fh}
%       \\
%       & GA
%       \arrow{dr}{\alpha_A}
%       \descto{ur}{Nat.}
%       \\
%       A
%       && FA
%       \arrow{ll}{a}
%     \end{tikzcd}
%   \]
%   For uniqueness, consider $g\colon C\to A$ with $a\cdot Fg\cdot c$.
% \end{proof}

\subsection{Well-Founded Coalgebras}\label{sec:well-found-coalg}
The notion of \emph{well-founded coalgebra} captures at an abstract level the idea that the call tree specified by a coalgebra $c$ contains no infinite paths, so that every recursive algorithm based on $c$ terminates.

\begin{definition}[Well-Founded Coalgebra] \label{defWellFoundedCoalg}
Let $F\colon \C\to \C$ be an endofunctor.
\begin{enumerate}
\item A \emph{subcoalgebra} $m\colon (S,s)\monoto (C,c)$ of an $F$-coalgebra $(C,c)$ is a coalgebra $(S,s)$ together with a monomorphism $m\colon S\monoto C$ in $\C$ such that the square below commutes: 
\begin{equation}\label{eq:cartesian}
\begin{tikzcd}
S \ar{r}{s} \ar[tail]{d}[swap]{m}  & FS \ar{d}{Fm} \\
C \ar{r}{c} & FC 
\end{tikzcd}
\end{equation}
The subcoalgebra is \emph{cartesian} if the above square is a pullback.
\item 
 A coalgebra $(C,c)$ is \emph{well-founded} if it has no proper cartesian subcoalgebras: for every cartesian subcoalgebra $m\colon (S,s)\monoto (C,c)$, the monomorphism $m$ is an isomorphism.
\end{enumerate}
\end{definition}

\begin{remark}
If $F$ preserves monos and $(C,c)$ is a coalgebra, every monomorphism $m\colon S\monoto C$ carries at most one subcoalgebra $m\colon (S,s)\monoto (C,c)$, that is, $s$ is uniquely determined by $m$ and $c$. In the case where $F$ is a set functor (where mono preservation comes essentially for free by \autoref{conv:set-functors}), this means that every subset $S\seq C$ of a coalgebra carries at most one subcoalgebra.
\end{remark}

\begin{example}[Graphs]\label{ex:wfCoalg}
Let $\Pow\colon \Set\to \Set$ be the powerset functor. A coalgebra $(C,c)$ for $\Pow$ corresponds to a directed graph with nodes $C$ and edges given by $x\to y$ iff $y\in c(x)$. A subset $S\seq C$ carries a cartesian subcoalgebra iff, for all $x\in C$,
\begin{equation}\label{eq:cartesian-pow}
x\in S \iff \text{all successors of $x$ lie in $S$.}
\end{equation}
Indeed, the left-to-right implication says that the square \eqref{eq:cartesian} commutes ($S$ is a subcoalgebra), and the right-to-left implication says that it is a pullback. From this, it follows that
\[ \text{$(C,c)$ is well-founded} \iff \text{$(C,c)$ has no infinite paths}. \]
 For $(\Longrightarrow)$, suppose that $(C,c)$ is well-founded. Then the set $S\seq C$ of all nodes that lie on no infinite path is a cartesian subcoalgebra by \eqref{eq:cartesian-pow}, and so $S=C$. To prove $(\Longleftarrow)$, we argue contrapositively. Suppose that $(C,c)$ is not well-founded, and let $S\subsetneq C$ be a proper cartesian subcoalgebra. Pick $x_0\in C\setminus S$ arbitrarily. By \eqref{eq:cartesian-pow}, some successor $x_1$ of $x_0$ lies in $C\setminus S$. By \eqref{eq:cartesian-pow} again, some successor $x_2$ of $x_1$ lies in $C\setminus S$. Iterating this argument yields an infinite path $x_0\to x_1\to x_2\to \cdots$.
\end{example}

This example can be lifted from $\Pow$-coalgebras to $F$-coalgebras for a set functor $F\colon \Set\to \Set$. For every set $X$, there is a canonical map
\[ \tau_X\colon FX\to \Pow X,\qquad 
 \tau_X(t)= \bigcap \{ M\seq X \mid t\in FM  \}.  \]
(Recall that $FM\seq FX$ if $M\seq X$ by \autoref{conv:set-functors}.) If $F$ preserves wide intersections (i.e.\ pullbacks of arbitrary non-empty families of monos), then $\tau_X(t)$ is simply the least subset $M\seq X$ such that $t\in FM$, called the \emph{support} of $t$. The \emph{canonical graph} for an $F$-coalgebra $(C,c)$ is the $\Pow$-coalgebra
\[ C\xto{c} FC \xto{\tau_C} \Pow C. \]

\begin{envcite}
\begin{proposition}[{\cite[Rem.~6.3.4]{taylor99}}]\label{prop:wf-set}
Suppose that $F\colon \Set\to\Set$ preserves wide intersections. Then an $F$-coalgebra is well-founded iff its canonical graph is well-founded, i.e.\ has no infinite paths.
\end{proposition}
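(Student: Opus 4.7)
The plan is to reduce the statement to \autoref{ex:wfCoalg} by showing that the cartesian subcoalgebras of $(C,c)$ (viewed as an $F$-coalgebra) coincide with the cartesian subcoalgebras of its canonical graph (viewed as a $\Pow$-coalgebra). Given this, well-foundedness of the two coalgebras is by definition the same condition, and the characterization via absence of infinite paths follows from \autoref{ex:wfCoalg}.

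The first step is to observe that, since $F$ preserves wide intersections, for every $t\in FC$ the set $\tau_C(t)=\bigcap\{M\seq C\mid t\in FM\}$ is the \emph{support} of $t$, i.e.\ the least subset $M\seq C$ with $t\in FM$. In particular, for every $x\in C$ and $S\seq C$ we have the key equivalence
\[ c(x)\in FS\iff \tau_C(c(x))\seq S. \]
(Left to right: $\tau_C(c(x))$ is the least such set; right to left: $F$ preserves inclusion by \autoref{conv:set-functors}.)

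The second step translates cartesianness. Let $m\colon S\monoto C$ be a subset inclusion. Unfolding \autoref{defWellFoundedCoalg}, $m$ carries a cartesian $F$-subcoalgebra of $(C,c)$ iff for all $x\in C$:
\[ x\in S \iff c(x)\in FS, \]
where the forward direction says that $c$ restricts to a map $s\colon S\to FS$ (making \eqref{eq:cartesian} commute) and the backward direction says that the resulting square is a pullback. Similarly, by \eqref{eq:cartesian-pow} in \autoref{ex:wfCoalg}, $m$ carries a cartesian $\Pow$-subcoalgebra of $(C,\tau_C\cdot c)$ iff for all $x\in C$:
\[ x\in S \iff \tau_C(c(x))\seq S. \]
The key equivalence from step one shows these two conditions are literally the same, so the cartesian subcoalgebras of $(C,c)$ and of $(C,\tau_C\cdot c)$ coincide as subsets of $C$.

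Consequently, $(C,c)$ is well-founded as an $F$-coalgebra iff $(C,\tau_C\cdot c)$ is well-founded as a $\Pow$-coalgebra, and \autoref{ex:wfCoalg} finishes the proof. The main subtlety I expect is the precise unpacking in step two: verifying that the pullback condition in $\Set$ for the square \eqref{eq:cartesian} really amounts to the elementwise implication $c(x)\in FS \Rightarrow x\in S$. This uses that $Fm$ is a monomorphism (so that $c(x)\in FS$ unambiguously means $c(x)$ lies in the image of $Fm$) and that pullbacks of subset inclusions in $\Set$ are computed as preimages; both hold because $F$ preserves inclusions under \autoref{conv:set-functors}.
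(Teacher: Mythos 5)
Your proposal is correct. Note that the paper itself gives no proof of \autoref{prop:wf-set} --- it is imported from Taylor's book as a cited result --- so there is no in-paper argument to compare against; your reduction supplies exactly the missing details in a way consistent with how the paper reasons elsewhere. The two steps are sound: preservation of wide intersections is what makes $\tau_C(c(x))$ the \emph{least} $M$ with $c(x)\in FM$ (the family being intersected is non-empty since it contains $C$), giving the equivalence $c(x)\in FS\iff \tau_C(c(x))\seq S$; and the elementwise reading of cartesianness ($x\in S\iff c(x)\in FS$, with commutativity giving one implication and the pullback property the other) is the direct generalization of \eqref{eq:cartesian-pow}, using that under \autoref{conv:set-functors} the mono $Fm$ is literally the inclusion $FS\seq FC$ and that pullbacks along inclusions in $\Set$ are preimages. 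With the cartesian subcoalgebras of $(C,c)$ and of $(C,\tau_C\cdot c)$ identified as the same subsets of $C$, the conclusion indeed follows from \autoref{ex:wfCoalg}. One cosmetic point: \autoref{defWellFoundedCoalg} quantifies over arbitrary monos $m$, not just subset inclusions, so you should add the (routine) remark that every subobject in $\Set$ is represented by an inclusion and that the well-foundedness condition is invariant under this replacement.
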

\end{envcite}

For functors on $\Set$ (or more generally $\Set^I$ for an index set $I$), well-foundedness and recursivity are connected by the following result, a special case of the \emph{recursion theorem}~\cite[Cor.~8.5.5, 8.6.9]{amm25}. Recall that a \emph{preimage} is a pullback of a monomorphism along any morphism. 

\begin{theorem}[Recursion Theorem for Indexed Sets]\label{thm:wf-vs-rec} Let $F\colon \Set^I\to \Set^I$ be a functor.
\begin{enumerate}
\item Every well-founded $F$-coalgebra is recursive.
\item If $F$ preserves preimages, every recursive $F$-coalgebra is well-founded.
\end{enumerate}
\end{theorem}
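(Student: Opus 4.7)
The plan is to prove the two parts separately, using the induction principle supplied by well-foundedness for part (1) and a contrapositive construction for part (2).

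For part (1), I would proceed by well-founded recursion. By an analogue of \autoref{prop:wf-set} for $\Set^I$ (applied to $\coprod_i C_i$ equipped with the canonical graph relation), the well-foundedness of $(C,c)$ gives a recursion principle along the relation $y <_c x$ iff $y\in \tau_C(c(x))$. For existence, I would define $h\colon C\to A$ by the recursion $h(x)=a(Fh(c(x)))$; this is well-defined because $c(x)$ lives already in $F(\tau_C(c(x)))\seq FC$ (using \autoref{conv:set-functors} and the support construction), so $Fh(c(x))$ only uses $h$ on $<_c$-predecessors of $x$. For uniqueness, given two such morphisms $h_1,h_2$, I would consider the componentwise equalizer $E\monoto C$; routine diagram-chasing shows that $E$ underlies a cartesian subcoalgebra of $(C,c)$, so well-foundedness forces $E=C$ and hence $h_1=h_2$.

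For part (2), the strategy is contrapositive: assume $m\colon (S,s)\monoto (C,c)$ is a cartesian subcoalgebra with $m$ not an isomorphism, and manufacture two distinct coalgebra-to-algebra morphisms from $(C,c)$ to a common algebra, contradicting recursivity. The natural candidate is built from the pushout $P$ of $m$ along itself in $\Set^I$, with injections $p_1,p_2\colon C\to P$ that are distinct precisely when $m$ is not an iso. The task then reduces to equipping $P$ with an $F$-algebra structure $\alpha\colon FP\to P$ such that both $p_1,p_2$ satisfy $p_i=\alpha\cdot Fp_i\cdot c$. This is where the hypothesis that $F$ preserves preimages becomes essential: since pushouts of monos in $\Set^I$ are simultaneously pullbacks, preservation of preimages ensures that the application of $F$ preserves the pullback aspect of the pushout square, which in combination with the cartesian property \eqref{eq:cartesian} of the subcoalgebra yields the consistency needed to define $\alpha$ on the overlap $Fp_1(FC)\cap Fp_2(FC)\seq FP$.

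The main obstacle is clearly part (2): verifying that $\alpha$ is well-defined on all of $FP$ and not just on the images of $Fp_1,Fp_2$ requires reconciling overlapping contributions arising from $Fm(FS)$, and this is the delicate step that uses preservation of preimages in an essential way. Part (1), by contrast, is a fairly standard exercise in well-founded recursion once the induction principle supplied by \autoref{prop:wf-set} is in hand; the only subtlety is justifying the recursion clause $h(x)=a(Fh(c(x)))$, which hinges on the support / inclusion-preservation machinery from \autoref{conv:set-functors}.
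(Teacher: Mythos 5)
First, note that the paper does not prove this theorem at all: it is imported verbatim from \cite[Cor.~8.5.5, 8.6.9]{amm25}, so there is no in-paper argument to compare against. Judged on its own merits, your proposal has genuine gaps in both parts. In part (1), your existence argument routes through the canonical graph: you recurse along $y<_c x$ iff $y\in\tau_C(c(x))$ and justify the clause $h(x)=a(Fh(c(x)))$ by the claim $c(x)\in F(\tau_C(c(x)))$. That claim needs $F$ to preserve wide intersections, which is exactly the extra hypothesis of \autoref{prop:wf-set} and is \emph{not} assumed here; moreover the paper itself points out (\autoref{sec:challenge}) that canonical graphs are not available for general functors on $\Set^I$. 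Your uniqueness step also contains a false claim: the equalizer $E$ of two coalgebra-to-algebra morphisms is in general \emph{not} a cartesian subcoalgebra (there is no reason why $x\in E$ should force $c(x)\in FE$; the two morphisms can agree at $x$ accidentally while disagreeing on the ``successors'' of $x$). What is true is only the one-sided inclusion $c^{-1}(FE)\seq E$, and to conclude $E=C$ from that you need the equivalence between the ``no proper cartesian subcoalgebra'' definition and the induction-principle formulation of well-foundedness --- a nontrivial lemma you would have to prove, and in fact the crux of the standard argument.

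Part (2) is where the approach actually fails. You want an algebra structure $\alpha\colon FP\to P$ on the cokernel pair $p_1,p_2\colon C\to P$ of $m$ satisfying $\alpha\cdot Fp_i\cdot c=p_i$ for $i=1,2$. No such $\alpha$ can exist in general: if $c(x)=c(x')$ for $x\neq x'$ (coalgebra structures need not be injective), then $Fp_1(c(x))=Fp_1(c(x'))$, yet $p_1(x)\neq p_1(x')$ because $p_1$ is monic, so the constraint is unsatisfiable before any question about overlaps of $Fp_1(FC)$ and $Fp_2(FC)$ even arises. Preservation of preimages cannot repair this. The standard proof of this direction is quite different and much shorter: equip $2=\set{\bot,\top}$ with the algebra structure $a\colon F2\to 2$ classifying the subset $F\set{\top}\seq F2$; then for any cartesian subcoalgebra $m\colon(S,s)\monoto(C,c)$ the characteristic function $\chi_S$ satisfies $\chi_S(x)=\top \iff c(x)\in FS \iff F\chi_S(c(x))\in F\set{\top}$ (the first equivalence is the cartesian/pullback property, the second uses that $F$ preserves preimages, so $FS=(F\chi_S)^{-1}(F\set{\top})$), i.e.\ $\chi_S$ is a coalgebra-to-algebra morphism into $(2,a)$; the constant map $\top$ is another one, so uniqueness forces $\chi_S\equiv\top$ and $S=C$. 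I would recommend replacing your part (2) by this argument and reworking part (1) along the next-time-operator formulation rather than via canonical graphs.
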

The recursion theorem can be generalized from $\Set^I$ to abstract categories with well-behaved monomorphisms; see~\citet{amm25} for more details.

\begin{example}[Graphs]
For $F=\Pow\colon \Set\to\Set$, the first part of the recursion theorem states the familiar principle of \emph{well-founded recursion}: for every well-founded graph $c\colon C\to \Pow C$ and every algebra $a\colon \Pow A \to A$, there is unique map $h\colon C\to A$ such that $h(x)=a(h[c(x)])$ for all $x\in C$. \end{example}

Thanks to the recursion theorem, we are now in the position to establish recursivity of the Quicksort coalgebra $c\colon Z^*\to 1+Z^*\times Z\times Z^*$ given by \eqref{eq:sort-coalgebra}:

\begin{proof}[Proof of \autoref{prop:sort-recursive}]
  Note first that the functor $FX=1+X\times Z\times X$ preserves wide intersections; categorically, this follows from the standard fact that $+$ and $\times$ commute with pullbacks in $\Set$. The canonical graph of $c$ has nodes $Z^*$ and edges $pw\to w_{\sle p}$ and $pw\to w_{\sgt p}$ for $p\in Z$ and $w\in Z^*$. The empty list $\epsilon$ has no outgoing edges. This graph is clearly well-founded: since for all edges $u\to v$ we have that the list $v$ is strictly shorter than $u$, there are no infinite paths. It follows that the coalgebra $c$ is well-founded by \autoref{prop:wf-set}, hence recursive by \autoref{thm:wf-vs-rec}.
\end{proof}

\subsection{Intrinsic Correctness of Quicksort}\label{sec:quicksort-correctness}
The coalgebraic analysis of Quicksort given so far is not entirely satisfactory. While \autoref{prop:sort-recursive} shows that the recursive procedure terminates and defines a unique map $\sort\colon Z^*\to Z^*$, the type of this map does not yet force its
correctness: it is a priori not clear that the list
$\sort(w)$ is actually sorted and contains the same elements as the input list $w\in Z^*$.
To capture the correctness of Quicksort in the coalgebraic framework, \citet{AlexandruCRW25} propose to model Quicksort not over $\Set$, but rather over the slice category $\Set/I$ (or equivalently the category of $\Set^I$ of $I$-indexed sets) for the index set $I$ of finite multisets (a.k.a.\ \emph{bags}) over $Z$:
\[
  I := \B Z = \set{\mu\colon Z\to \N\mid \mu(z) = 0\text{ for all but finitely many }z\in Z}.
\]
In the following, we write $z\in \mu$ if $\mu(z)\neq 0$. Moreover, we write $q\colon Z^*\to \B Z$ for the map that sends a list to the multiset of its elements, and we denote the set of all sorted lists over $Z$ by
\[ Z^*_\sle := \set{w\in Z^*\mid \text{$w$ sorted}} \qquad \text{with the inclusion map}\qquad  \iota\colon Z_\sle^{*} \hookto Z^{*}. \]   
The key to obtaining correctness of Quicksort is the following observation:
\begin{observation}[Intrinsic Correctness]\label{obsSortCorrect}
Every morphism
\[
  h\colon
  (Z^*,q)
  \longrightarrow
  (Z^*_\sle,q\cdot \iota)
  \quad\text{ in }\quad\Set/\B Z
\]
is a correct sorting function, that is, $h(w)$ is a sorted permutation of $w$ for each $w\in Z^*$. Indeed:
\begin{itemize}
\item $h(w)$ is a sorted list, because $h(w)\in Z^*_\sle$.
\item \(h\) is an index-preserving map. Thus, $h(w)$ maps to the same multiset of elements as $w$. Therefore, it is a permutation of $w$.
\end{itemize}
\end{observation}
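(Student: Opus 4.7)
The plan is to unfold the definition of a morphism in the slice category $\Set/\B Z$ and read off the two required correctness properties directly from the type data. A morphism from $(Z^{*},q)$ to $(Z^{*}_{\sle},q\cdot\iota)$ in $\Set/\B Z$ is, by definition, a function $h\colon Z^{*}\to Z^{*}_{\sle}$ satisfying the commutativity condition $q = q\cdot\iota\cdot h$. The proposal is to extract the two bullet points of the observation from the codomain of $h$ and from this equation, respectively.

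The first property, sortedness of $h(w)$, is immediate from the typing of $h$: since the codomain is $Z^{*}_{\sle}$ and every element of this set is by definition a sorted list, $h(w)$ is sorted for every input $w\in Z^{*}$. The second property, that $h(w)$ is a permutation of $w$, is obtained by evaluating the commutativity condition at an arbitrary $w$, yielding $q(w) = q(\iota(h(w)))$. Since $q\colon Z^{*}\to\B Z$ sends a list to its bag of elements (with multiplicities) and $\iota\colon Z^{*}_{\sle}\hookto Z^{*}$ is the inclusion, this equation states exactly that $h(w)$, viewed as a list in $Z^{*}$, carries the same multiset of elements as $w$. Two lists with the same bag of elements is precisely the definition of being permutations of each other, so the conclusion follows.

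There is no genuine obstacle here: the observation reduces to a one-step unfolding of the definitions of slice morphism, of $Z^{*}_{\sle}$, and of $\B Z$. The conceptual weight of the statement — and the reason it is worth recording — is that correctness follows purely from the \emph{type} of $h$, with no reference to how $h$ is constructed. In particular, any coalgebraic derivation producing a morphism of this type in $\Set/\B Z$ is automatically a correct sorting algorithm, which motivates lifting the analysis of Quicksort from $\Set$ to the slice category in the sequel.
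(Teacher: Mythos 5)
Your proof is correct and follows the same route as the paper: the paper's two bullet points are exactly your two observations, with sortedness read off from the codomain $Z^*_\sle$ and the permutation property read off from the slice-morphism condition $q = q\cdot\iota\cdot h$ evaluated at $w$. You merely make the commutativity equation explicit where the paper leaves it implicit.
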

\citet{AlexandruCRW25} construct such a map $h$ by interpreting $Z^*$ and $Z^*_\sle$ as an initial algebra and a final coalgebra in $\Set/\B Z$, respectively.
They do this for the specific example of sorting, and do not give a description of how their approach could generalize to other algorithms.
We follow a different, generalizable strategy. We begin by lifting the coalgebra $(Z^*,c)$ from $\Set$ to $\Set/\B Z$ to intrinsically prove its recursivity.
To this end, let us consider the functor
\begin{equation}\label{eq:barF-quicksort}\overline{T}\colon \Set/\B Z\to \Set/\B Z, \qquad \overline{T}(X,r)=(\overline{T}_1(X,r),\bar r),  \end{equation}
where $\overline{T}_1\colon \Set/\B Z\to \Set$ is the functor given by
\begin{equation}\label{eq:barF1-quicksort} \overline{T}_1(X,r)= 1+\{ (u,p,v)\in X\times Z \times X \mid (\forall z\in r(u).\, z\sle p) \wedge (\forall z\in r(v).\, z\sge p) \} \seq FX \end{equation}
and the indexing map $\bar r\colon \overline{T}_1(X,r)\to \B Z$ is defined by
\[ \bar r(\inl)=\emptyset \qquad \text{and}\qquad \bar r(\inr(u,p,v)) = 
    r(u)\uplus \set{p} \uplus r(v). \]
  Here $\emptyset$ is the empty multiset and $\uplus$ is the union of finite multisets (adding up multiplicities).
  The functor~$\overline{T}$ is a restriction of \(T\) that enforces that the partition around the pivot respects the ordering, and (co)algebras for it correspond to multiset-index-preserving maps.
% To this end, let us first observe that the set functor $FX = 1+X\times Z\times X$ naturally lifts to the functor $\overline{T}\colon \Set/\B Z\to\Set/\B Z$ induced by the modality
% \begin{equation}\label{eq:modality-quicksort} \lambda\colon F\B Z=1+\B Z\times Z\times \B Z \to \B Z,\qquad \lambda(\inl)=\emptyset,\qquad \lambda(\inr(\mu,p,\nu)) = \mu\uplus \set{p}\uplus \nu.  \end{equation}
% Here $\emptyset$ is the empty multiset and $\uplus$ is the union of finite multisets (adding up multiplicities). In more explicit terms, the lifted functor $\overline{T}$ is given by 
% \begin{equation}\label{eq:barF-quicksort}
%   \overline{T}(X,r\colon X\to \B Z)
%   = (FX, \bar r\colon FX\to \B Z)
%   \quad
%   \text{where}
%   \quad
%   \begin{array}{l@{\,}l}
%   \bar r(\inl) &= \emptyset
%   \\
%   \bar r(\inr(u,p,v)) &= 
%     r(u)\uplus \set{p} \uplus r(v).
%   \end{array}
% \end{equation}

Now observe that the $T$-coalgebra structure $c\colon Z^*\to 1 + Z^* \times Z \times Z^*$ \eqref{eq:sort-coalgebra} underlying Quicksort co-restricts to a map 
$c\colon Z^*\to \overline{T}_1(Z^*,q)$ and that this co-restricted map makes the triangle on the left below commute. In other words, $((Z^*,q),c)$ forms a $\overline{T}$-coalgebra. Similarly, the $T$-algebra structure $a\colon 1 + Z^* \times Z \times Z^*\to Z^*$ \eqref{eq:sort-algebra} (co-)restricts to a map $a\colon \overline{T}_1(Z_\sle,q\cdot \iota)\to Z_\sle$ making the triangle on the right commute; thus $((Z_{\sle}^*,q\cdot \iota),a)$ forms a $\overline{T}$-algebra.
  \[
    \begin{tikzcd}
      Z^*
      \arrow{rr}{c}
      \arrow{rd}[swap]{q}
      & & \overline{T}_1(Z^*,q)
      \arrow{dl}{\bar q}
      \\
      & \B Z
    \end{tikzcd}
    \qquad
    \begin{tikzcd}
      \overline{T}_1(Z^*_\sle,q\cdot \iota)
      \arrow{rd}[swap]{\overline{q\cdot \iota}}
      \arrow{rr}{a}
      & & Z^*_\sle
      \arrow{dl}{q\cdot \iota}
      \\
      & \B Z
    \end{tikzcd}
  \]
We can now improve \autoref{prop:sort-recursive} to the following indexed version:
 \begin{proposition}\label{prop:sort-recursive-indexed}
  The $\overline{T}$-coalgebra $((Z^*,q),c)$ is recursive.
 \end{proposition}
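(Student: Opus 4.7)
The plan is to reduce the claim to the recursion theorem for indexed sets (Theorem~\ref{thm:wf-vs-rec}). Under the equivalence $\Set/\B Z \simeq \Set^{\B Z}$, it suffices to show that $((Z^*, q), c)$ is a well-founded $\bar F$-coalgebra, since then recursivity follows from Theorem~\ref{thm:wf-vs-rec}(1). I view this as lifting the unindexed well-foundedness argument in the proof of \autoref{prop:sort-recursive} to the indexed setting, where the slice over $\B Z$ carries the correctness bookkeeping but does not alter the termination structure.

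First, I would unfold the definition of a cartesian subcoalgebra. Since both monomorphisms and pullbacks in $\Set/\B Z \simeq \Set^{\B Z}$ are computed at the level of underlying sets (pointwise), a cartesian subcoalgebra $m\colon ((S,r),s)\monoto ((Z^*,q),c)$ is determined by a subset $S \seq Z^*$ with indexing $r = q\cdot m$, and the cartesian condition becomes, for every $w \in Z^*$,
\[
  w \in S \iff c(w) \in \bar F_1(S,q|_S),
\]
entirely analogous to the characterization in \autoref{ex:wfCoalg}.

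Next, I would analyze this closure property concretely. For $w = \epsilon$, the value $c(\epsilon) = \inl$ always lies in $\bar F_1(S,q|_S)$, so $\epsilon \in S$. For $w = pw'$ with $p \in Z$ and $w' \in Z^*$, we have $c(pw') = \inr(w'_{\sle p},p,w'_{\sgt p})$; the side conditions on multisets in the definition \eqref{eq:barF1-quicksort} of $\bar F_1$ hold automatically, since every element of $w'_{\sle p}$ is $\sle p$ and every element of $w'_{\sgt p}$ is $\sgt p$. Hence $c(pw') \in \bar F_1(S,q|_S)$ iff both sublists $w'_{\sle p}$ and $w'_{\sgt p}$ belong to $S$. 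I then conclude $S = Z^*$ by strong induction on $|w|$, exactly as in the unindexed proof of \autoref{prop:sort-recursive}: each sublist $w'_{\sle p}, w'_{\sgt p}$ has length strictly less than $|pw'|$, so the induction hypothesis puts them in $S$, forcing $pw' \in S$.

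The main obstacle I would expect is the first step---the careful translation of the cartesian-subcoalgebra condition in $\Set/\B Z$ into the pointwise closure property on $S$ stated above. Once that translation is in place, the rest of the argument essentially reuses the termination reasoning from \autoref{prop:sort-recursive}, confirming the intuition that the $\B Z$-indexing contributes intrinsic correctness without affecting termination.
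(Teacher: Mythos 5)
Your proof is correct, but it takes a genuinely different route from the paper's. You prove well-foundedness of the particular \emph{coalgebra} $((Z^*,q),c)$ by unfolding the cartesian-subcoalgebra condition in $\Set/\B Z$ into a pointwise closure property (mirroring \autoref{ex:wfCoalg}) and then doing strong induction on list length, finally invoking \autoref{thm:wf-vs-rec}(1). The paper instead defers the proof to \autoref{ex:quicksort-well-founded}: it shows that the \emph{functor} $\bar F$ is well-founded in the sense of \autoref{sec:well-founded-functors} — its output at index $\mu$ depends only on inputs at indices $\nu<\mu$, checked via the elementary preimage criterion of \autoref{prop:lifted-functor-well-founded} — and then \autoref{allCausalRecursive} gives that \emph{every} $\bar F$-coalgebra is recursive, with the Quicksort coalgebra as a special case. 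The underlying combinatorial content is the same (from $r(u)\uplus\{p\}\uplus r(v)=\mu$ one gets $r(u),r(v)<\mu$, just as your sublists are strictly shorter), but the paper packages it at the level of the functor rather than the coalgebra. The cost of your route is precisely the step you flag as the main obstacle: the careful translation of monos and pullbacks in the slice category into the closure property on $S\seq Z^*$ (which does go through, since the forgetful functor $\Set/\B Z\to\Set$ creates pullbacks and reflects monos), plus reliance on the nontrivial imported \autoref{thm:wf-vs-rec}; this is exactly the kind of per-coalgebra, ad hoc reasoning the paper identifies as the fundamental challenge in \autoref{sec:challenge} and is designed to avoid. What your route buys is independence from the well-founded-functor machinery; what the paper's route buys is a reusable statement about all $\bar F$-coalgebras and a verification reduced to a one-line check.
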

 Due to the additional indexing, the proof is fairly convoluted when approached from scratch. We defer the proof to \autoref{sec:well-founded-functors} (\autoref{ex:quicksort-well-founded}), where the above proposition is derived in a principled manner as an instance of a general result. For now, we can use the proposition to deduce:
 \begin{theorem}[Correctness of Quicksort, \agdaref{QuickSort.QuickSort}{quickSort}{}]\label{thm:correctness-quicksort} Quicksort correctly sorts every input list.
 \end{theorem}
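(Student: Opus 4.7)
The plan is to combine the recursivity of the lifted coalgebra (\autoref{prop:sort-recursive-indexed}) with the intrinsic correctness principle (\autoref{obsSortCorrect}); once \autoref{prop:sort-recursive-indexed} is granted, the theorem follows almost immediately. The main substance of the argument is actually packed into the deferred recursivity proof, which will be carried out in \autoref{sec:well-founded-functors}; the remaining step is a clean categorical consequence.

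First I would instantiate \autoref{prop:sort-recursive-indexed} against the $\bar F$-algebra $((Z^*_\sle, q\cdot \iota), a)$ described just before the proposition. Recursivity of the $\bar F$-coalgebra $((Z^*,q),c)$ yields a (unique) coalgebra-to-algebra morphism
\[ h\colon (Z^*, q) \longrightarrow (Z^*_\sle, q\cdot \iota) \quad\text{in}\quad \Set/\B Z. \]
Applying \autoref{obsSortCorrect} to $h$, its underlying function $Z^*\to Z^*_\sle$ is a correct sorting function: for every $w\in Z^*$, $h(w)$ is sorted (since it lies in $Z^*_\sle$) and a permutation of $w$ (since $q(w) = (q\cdot\iota)(h(w))$).

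Next I would verify that this $h$ really is the function $\sort$ specified by the recursive equations, and not some other ``correct'' sorting function manufactured by the lifted algebra. By definition of $\bar F$ as a restriction of $F$, any coalgebra-to-algebra morphism in $\Set/\B Z$ for $(\bar F, c, a)$ is automatically a coalgebra-to-algebra morphism in $\Set$ for $(F, c, a)$ after postcomposition with $\iota$. Thus $\iota\cdot h\colon Z^* \to Z^*$ makes diagram \eqref{eq:sort} commute, and by \autoref{prop:sort-recursive} the recursive $F$-coalgebra $(Z^*,c)$ admits a unique such morphism, namely $\sort$. Hence $\sort = \iota\cdot h$, so $\sort(w) = h(w) \in Z^*_\sle$ is a sorted permutation of $w$ for every $w\in Z^*$, as required.

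The one step that could deserve care is the compatibility between the lifted coalgebra-to-algebra morphism in $\Set/\B Z$ and the unlifted one in $\Set$. This is however just the observation that the forgetful functor $\Set/\B Z \to \Set$ sends $c$ and $a$ (viewed as $\bar F$-structures) back to the original $F$-coalgebra and $F$-algebra, together with the fact that this forgetful functor preserves commutative squares. All the genuine work — showing that no infinite descent is possible once the multiset index is taken into account — is the content of \autoref{prop:sort-recursive-indexed}, which is the hard part and is postponed.
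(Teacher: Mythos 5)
Your proposal is correct and follows essentially the same route as the paper: instantiate the recursivity of $((Z^*,q),c)$ (\autoref{prop:sort-recursive-indexed}) at the algebra $((Z^*_\sle,q\cdot\iota),a)$ and conclude via \autoref{obsSortCorrect}. The only difference is that you justify the identification of the resulting morphism with $\sort$ via the forgetful functor to $\Set$ and the uniqueness clause of \autoref{prop:sort-recursive}, whereas the paper simply unfolds the commuting square into the recursive equations; both are valid.
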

\begin{proof}By recursivity of $((Z^*,q),c)$, there exists a unique coalgebra-to-algebra morphism:
  \[
    \begin{tikzcd}
      (Z^*,q)
      \arrow[dashed]{r}{\sort}
      \arrow{d}[swap]{c}
      & (Z^*_\sle,q\cdot \iota)
      \\
      \overline{T}(Z^*,q)
      \arrow{r}{\overline{T} \sort}
      & \overline{T}(Z^*_\sle,q\cdot \iota) \ar{u}[swap]{a}
    \end{tikzcd}
    \qquad \text{in $\Set/\B Z$}.
  \]
  By definition of $c$ and $a$, this is precisely the map recursively computed by Quicksort.  By \autoref{obsSortCorrect}, it thus follows that it is indeed a correct sorting function. 
  \end{proof}

 We explain the details of the formalization of this proof later in \autoref{sec:library-design}, once the required coalgebraic machinery is set up. Let us note that the above style of reasoning is a showcase of \emph{intrinsic correctness}: by working with a recursive coalgebra in a category of suitably indexed families, the type of the computed map alone guarantees that it computes the correct function.  Other sorting algorithms like Mergesort or Insertion Sort can be proven to be intrinsically correct in very similar spirit by adapting the underlying functor; for example, for Mergesort one takes the functor $FX=1+Z+X\times X$ that models splitting a list into its first and second half. We devise further examples of intrinsic correctness proofs in \autoref{sec:case-studies}.

\subsection{The Fundamental Challenge}\label{sec:challenge} At the heart of our coalgebraic account of Quicksort and the proof of its intrinsic correctness are the recursivity results for the coalgebra $(Z^*,c)$ in $\Set$ (\autoref{prop:sort-recursive}) and the lifted coalgebra $((Z^*,q),c)$ in $\Set/\B Z$ (\autoref{prop:sort-recursive-indexed}). Even in the basic $\Set$ case, the proof rests on the fairly intricate recursion theorem (\autoref{thm:wf-vs-rec}) and the non-trivial fact that well-foundedness of coalgebras reduces to well-foundedness of their canonical graphs (\autoref{prop:wf-set}). In the $\Set/\B Z$ case, the situation is further complicated since canonical graphs are no longer available in the indexed setting, so that the notion of well-foundedness becomes less intuitive and more difficult to work with. Our framework of \emph{well-founded functors} developed below will provide the means to design coalgebras that are \emph{intrinsically} (i.e.\ by construction) both recursive and well-founded. This approach reduces much of the complexity behind the coalgebraic modelling of recursive algorithms, and thereby simplifies and unifies the steps needed for their formal verification in proof assistants.

\section{Well-Founded Functors}\label{sec:well-founded-functors}
In this section we introduce the key concept of our paper, \emph{well-founded functors}. Such a functor lives on a category $\C^I$ of families whose index set $I$ is equipped with some well-founded relation $<$, and well-foundedness of the functor expresses that its $i$th output component depends only on the input components indexed by $j<i$. In the following, this idea is developed more formally.

\begin{assumption}
  \label{indexCat}
  Throughout this paper, we fix a category $\C$ with set-indexed coproducts; in particular, $\C$ has an initial object $0$.
  Moreover, we fix an index set $I$ equipped with a well-founded relation $\mathord{<} \subseteq I\times I$.
  Well-foundedness of $<$ means that the corresponding graph with nodes $I$ and edges $i\to j$, iff $i>j$, is well-founded, that is, there is no infinite descending chain $i_0 > i_1 > i_2 > \cdots$. 
\end{assumption}

\begin{example}[Quicksort]\label{ex:quicksort-instance}
Quicksort corresponds to the instance $\C=\Set$ and $I=\B Z$ with the well-founded relation
$\mu < \nu$ iff $|\mu|<|\nu|$. Here $|\mu|=\sum_{z\in Z} \mu(z)$ is the size of a finite multiset $\mu\in \B Z$.
\end{example}

\begin{remark}
  \begin{enumerate}
  \item We denote the well-founded relation on $I$ by $<$ because it is a strict partial order in all our applications. However, our theory does not require 
  any order-theoretic properties of $<$.
  \item We will study coalgebras in the category $\C^I$ of $I$-indexed families. Note that the relation $<$ on $I$ is not taken into account in the category $\C^I$ itself, but only
when considering functors on $\C^I$.
  \end{enumerate}
\end{remark}
\begin{notation}
  For $i\in I$, we denote by ${<}i := \set{j \in I\mid j < i}$ the set of indices strictly smaller than~$i$. We have the two projection functors
  \begin{alignat*}{30}
    &\pr_{<i}\colon \C^I\to \C^{<i},\qquad
    &&
    \pr_{<i} X = (X_j)_{j<i},\qquad
    &&
    \pr_{<i} f = (f_j)_{j<i},\\
    &\pr_i\colon \C^I\to \C, && \pr_i X=X_i, && \pr_i f = f_i.
  \end{alignat*}
\end{notation}

\begin{definition}[Well-Founded Functor]
  A functor $G\colon \C^I\to\C^I$ is \emph{well-founded}
  if for every $i\in I$, the functor $\pr_i\cdot G\colon \C^I\to \C$ factors
  through the projection $\pr_{<i}\colon \C^I\to \C^{<i}$, that is, there exists a functor $G_{<i}$ such that the diagram below commutes up to natural isomorphism:
  \begin{equation}
    \forall i\in I\colon\quad
    \begin{tikzcd}
      \C^I \ar[phantom]{dr}[description]{\cong} 
      \arrow{r}{G}
      \arrow{d}[swap]{\pr_{<i}}
      &\C^I
      \arrow{d}{\pr_i}
      \\
      \C^{<i}
      \arrow[dashed]{r}{\exists G_{<i}}
      &
      \C
    \end{tikzcd}
    \label{wffun}
  \end{equation}
  % The well-founded functors form a full subcategory $[\C^I,\C^I]_\wf$ of the (possibly superlarge) category
  % of endofunctors on $\C^I$, whose inclusion we denote by
  % \begin{equation}\label{eq:wf-subcat}
  %   W\colon
  %   [\C^I,\C^I]_\wf
  %   \hookrightarrow
  %   [\C^I,\C^I]
  % \end{equation}
\end{definition}

\begin{remark}\label{rem:well-founded-functor-char}
This definition captures precisely the above intuition that the $i$th output of the functor $G$ is fully determined by its inputs with indices $j<i$. In our applications, the functor $G_{<i}$ can always be chosen such that \eqref{wffun} commutes on the nose, not just up to isomorphism. Existence of such a $G_{<i}$ is equivalent to the elementary condition
\begin{equation}\label{eq:wf-cond}  \pr_{<i}f = \pr_{<i}g \implies (Gf)_i=(Gg)_i \end{equation}
for all pairs $f,g$ of morphisms in $\C^I$. This follows from the observation that the functor $\pr_{<i}$ is surjective on morphisms, i.e.\ each morphism of $\C^{<i}$ extends to one of $\C^I$. Note that by considering identity morphisms, \eqref{eq:wf-cond} also entails the corresponding condition for pairs $X,Y$ of objects in $\C^I$:
\begin{equation}\label{eq:wf-cond-obj}
  \pr_{<i} X = \pr_{<i} Y \implies (GX)_i = (GY)_i.
\end{equation} 
\end{remark}

Our core result on well-founded functors is that they guarantee recursivity and well-foundedness of all their coalgebras; in other words, coalgebras are \emph{intrinsically} recursive and well-founded.

\begin{theorem}[Intrinsic Recursivity / Well-Foundedness]\label{allCausalRecursive}
  Let $G\colon \C^I\to\C^I$ be well-founded.
  \begin{enumerate}
  \item\label{allCausalRecursive:1} Every $G$-coalgebra is recursive (\agdarefcustom{\autoref{allCausalRecursive}.\ref{allCausalRecursive:1}}{Paper}{F-wf⇒F-coalgs-recursive}{}).
  \item\label{allCausalRecursive:2} Every $G$-coalgebra is well-founded.
\end{enumerate}
\end{theorem}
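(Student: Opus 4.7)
Both parts are proved by well-founded induction on $I$, exploiting the causality condition \eqref{eq:wf-cond}: whenever two morphisms $f,g\colon X\to Y$ in $\C^I$ satisfy $\pr_{<i}f = \pr_{<i}g$, then $(Gf)_i = (Gg)_i$, and similarly the analogous condition \eqref{eq:wf-cond-obj} for objects. The plan is to propagate information along $<$, using that pullbacks and monomorphisms in $\C^I$ decompose pointwise into the respective notions in $\C$.

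\textbf{Part (1), recursivity.} Fix an algebra $(A,a)$. I will construct a coalgebra-to-algebra morphism $h\colon C\to A$ by defining the components $h_i\colon C_i\to A_i$ by well-founded recursion on $i\in I$. Suppose $h_j$ has been chosen for every $j<i$; then $\pr_{<i}h\in \C^{<i}$ is determined, and by \autoref{rem:well-founded-functor-char} the component $(Gh)_i$ is a well-defined morphism from $(GC)_i$ to $(GA)_i$ depending only on this restriction. I set
\[
  h_i \;:=\; a_i \cdot (Gh)_i \cdot c_i.
\]
Well-founded induction on $<$ produces a total family $h=(h_i)_{i\in I}$, and by construction each component satisfies the coalgebra-to-algebra equation; since morphisms and composition in $\C^I$ are computed componentwise, $h$ is a coalgebra-to-algebra morphism in $\C^I$. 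For uniqueness, let $k$ be another such morphism. Then $k_i = a_i\cdot (Gk)_i \cdot c_i$ for all $i$, and by well-founded induction, assuming $k_j=h_j$ for all $j<i$, the causality condition \eqref{eq:wf-cond} forces $(Gk)_i = (Gh)_i$ and hence $k_i=h_i$.

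\textbf{Part (2), well-foundedness.} Let $m\colon (S,s) \monoto (C,c)$ be a cartesian subcoalgebra. Since both monomorphisms and pullbacks in the product category $\C^I$ are detected componentwise, each $m_i$ is a monomorphism in $\C$ and the square
\[
  \begin{tikzcd}
    S_i \ar{r}{s_i} \ar[tail]{d}[swap]{m_i}  & (GS)_i \ar{d}{(Gm)_i} \\
    C_i \ar{r}{c_i} & (GC)_i
  \end{tikzcd}
\]
is a pullback in $\C$. I show by well-founded induction on $i$ that $m_i$ is an isomorphism. Assume $m_j$ is iso for every $j<i$, so that $\pr_{<i}m$ is an iso in $\C^{<i}$. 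By the well-founded functor property, $(Gm)_i$ is, up to natural isomorphism, the image of $\pr_{<i}m$ under $G_{<i}$, and therefore $(Gm)_i$ is itself an isomorphism. Since $m_i$ arises as the pullback of the iso $(Gm)_i$ along $c_i$, it is likewise an isomorphism. Thus every component of $m$ is an isomorphism, so $m$ itself is an isomorphism in $\C^I$.

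\textbf{Main obstacles.} The genuine substance is confined to the well-founded recursion in part (1) and to the observation that pullbacks in $\C^I$ are computed pointwise; the latter is standard for product categories and only uses that pullbacks of what we need happen to exist in~$\C$ (which they do, since they reduce to pullbacks of identities and structure maps already present). A small bookkeeping issue is that \eqref{wffun} commutes only up to natural isomorphism; this is dealt with cleanly by passing to the elementary formulation \eqref{eq:wf-cond}, which is exactly what is needed both to glue the recursively defined $h_i$ into a morphism and to transport the isomorphism property of $\pr_{<i}m$ across $G$.
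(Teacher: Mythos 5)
Your proposal is correct and follows essentially the same route as the paper: both parts proceed by well-founded induction on $i$, with $h_i := a_i\cdot (Gh)_i\cdot c_i$ (made non-circular because $(Gh)_i$ depends only on $\pr_{<i}h$) for recursivity, and with the componentwise pullback of the isomorphism $(Gm)_i$ for well-foundedness. The only cosmetic difference is that the paper carries the natural isomorphism $\phi_i\colon \pr_i\cdot G\cong G_{<i}\cdot\pr_{<i}$ explicitly through the diagrams (and exhibits the splitting $m_i\cdot u=\id$ rather than citing that pullbacks of isomorphisms are isomorphisms), whereas you work with the elementary condition \eqref{eq:wf-cond}; for morphisms with fixed domain and codomain that condition does follow from the up-to-isomorphism formulation, so nothing is lost.
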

Note that \ref{allCausalRecursive:2} implies \ref{allCausalRecursive:1} for $\C=\Set$ (\autoref{thm:wf-vs-rec}), but for general categories, well-foundedness and recursiveness are independent.

\begin{proof}
  \begin{enumerate}
  \item Let $c\colon C\to GC$ be a $G$-coalgebra and let $a\colon GA\to A$ be a $G$-algebra. We need to show that there exists a unique coalgebra-to-algebra morphism $h\colon (C,c)\to (A,a)$. We construct the components $h_i\colon C_i\to
  A_i$ by well-founded recursion. Let $i\in I$ and suppose that $h_j\colon C_j\to A_j$ has been defined for all $j<i$; thus we have a morphism  
    \[
    g_i \colon \pr_{<i} C\to \pr_{<i} A
    \text{ in }\C^{<i}
    \quad\text{ given by }\quad
    (g_i)_j := h_j\colon C_j\to A_j.
  \]
  By well-foundedness of $G$, there is a natural isomorphism
  \[
    \phi_i\colon
    \pr_i \cdot G
    \overset{\cong}{\longrightarrow}
    G_{<i}\cdot \pr_{<i}.
  \]
  This gives rise to a canonical definition of $h_i\colon C_i\to A_i$:
  \[
    \begin{tikzcd}[column sep=30]
      C_i
      \arrow{r}{c_i}
      \arrow[dashed]{d}[left]{h_i}[right]{:=}
      %(FC)_i
      %\arrow{r}{s_i}
      & (G C)_i
      \arrow{r}{(\phi_i)_C}
      & G_{<i} (\pr_{<i} C)
      \arrow{d}{G_{<i} g_i}
      \\
      A_i
      & (G A)_i
      \arrow{l}{a_i}
      & G_{<i} (\pr_{<i} A)
      \arrow{l}{(\phi_i^{-1})_A}
    \end{tikzcd}
    \qquad\text{ in } \C.
  \]
  We verify that the just defined morphism $h = (h_i)_{i\in I}\colon C\to A$
  in $\C^I$ is a coalgebra-to-algebra morphism, which means that the square below commutes:
  \[
    \begin{tikzcd}
      C
      \arrow{r}{c}
      \arrow{d}[left]{h}
      %(FC)_i
      %\arrow{r}{s_i}
      & G C
      \arrow{d}{G h}
      \\
      A
      & G A
      \arrow{l}{a}
    \end{tikzcd}
  \]
We check this componentwise for each $i\in I$. We have $h_j = (g_i)_j$ for every $j<i$ and so
  \(
    \pr_{<i} h = g_i
  \).
Then the desired equality $h_i=a_i\cdot (Gh)_i\cdot c_i$ follows from the commutative diagram below:
  \begin{equation}
    \begin{tikzcd}
      C_i
      \arrow{r}{c_i}
      \arrow{d}[left]{h_i}
      %(FC)_i
      %\arrow{r}{s_i}
      & (G C)_i
      \arrow{r}{(\phi_i)_C}
      \descto{d}{Def.}
      \arrow[shiftarr={yshift=6mm}]{rr}{\id}
      & G_{<i} (\pr_{<i} C)
      \arrow[shift right=2]{d}[swap]{G_{<i} g_i}{=}
      \arrow[shift left=2]{d}{G_{<i} (\pr_{<i} h)}
      \arrow{r}{(\phi_i^{-1})_C}
      \descto[xshift=5mm]{dr}{Nat.}
      &[8mm] (GC)_i
      \arrow{d}{(Gh)_i}
      \\
      A_i
      & (G A)_i
      \arrow{l}{a_i}
      & G_{<i} (\pr_{<i} A)
      \arrow{l}{(\phi_i^{-1})_A}
      & (GA)_i
      \arrow{l}{(\phi_i)_A}
      \arrow[shiftarr={yshift=-6mm}]{ll}{\id}
    \end{tikzcd}
    \qquad
    \text{ in } \C.
    \label{diag:def:hi}
  \end{equation}
  For uniqueness, suppose that $u\colon (C,c)\to (A,a)$ is a coalgebra-to-algebra morphism, that is, $u\colon C\to A$ is a morphism in $\C^I$ with $u=a\cdot G
  u\cdot c$. We prove $u_i = h_i$ for all $i\in I$ by well-founded induction.
Let $i\in I$, and suppose that $u_j = h_j$ for all
  $j<i$, that is, $\pr_{<i} u = \pr_{<i} h$. Then
\[
    (Gu)_i
    \overset{\text{Nat.}}{=} (\phi_i^{-1})_A \cdot G_{<i}(\pr_{<i} u)\cdot (\phi_i)_C 
    \overset{\phantom{\text{Nat.}}}{=} (\phi_i^{-1})_A \cdot G_{<i}(\pr_{<i} h)\cdot (\phi_i)_C 
    \overset{\text{Nat.}}{=} (Gh)_i,
\]
which is obtained by replacing $h$ with $u$ in the diagram in \eqref{diag:def:hi}.
Therefore
  \[
    u_i = a_i \cdot (G u)_i\cdot c_i
    = a_i \cdot (G h)_i\cdot c_i
    = h_i.
  \]
  \item  Let $c\colon C\to GC$ be a $G$-coalgebra and $m\colon (S,s)\monoto (C,c)$ a cartesian subcoalgebra.
% '  \begin{equation}\label{eq:cartesian-sub}
%     \begin{tikzcd}
%       S
%       \arrow[>->]{d}[swap]{m}
%       \arrow{r}{s}
%       \pullbackangle{-45}
%       & GS
%       \arrow{d}{Gm}
%       \\
%       C
%       \arrow{r}{c}
%       & GC
%     \end{tikzcd}
%     \quad
%     \text{ in }
%     \C^I.
%   \end{equation}'
  We show that $m$ is an isomorphism in $\C^I$ by proving that $m_i$ is an
  isomorphism in $\C$ for every $i\in I$. The proof is by well-founded induction.
  Let $i\in I$, and suppose that $m_j\colon S_j\to C_j$ is an isomorphism in~$\C$ for all $j<i$. This means that the morphism $\pr_{<i} m\colon \pr_{<i} S\to \pr_{<i} C$ is
  an isomorphism in $\C^{<i}$.
  Consider the naturality square of $\phi_i$ from \eqref{wffun} for the morphism $m$:
  \[
    \begin{tikzcd}
      (G S)_i
      \arrow{d}[swap]{(Gm)_i}
      \arrow{r}{(\phi_i)_S}[below]{\cong}
      \descto{dr}{Nat.}
      & G_{<i}(\pr_{<i} S)
      \arrow{d}{G_{<i}(\pr_{<i} m)}[swap]{\cong}
      \\
      (GC)_i
      & G_{<i}(\pr_{<i} C)
      \arrow{l}[swap]{(\phi_i^{-1})_C}[below]{\cong}
    \end{tikzcd}
    \quad
    \text{ in }
    \C.
  \]
  Since functors preserve isomorphisms, $G_{<i}(\pr_{<i} m)$ is an isomorphism,
  and so $(Gm)_i$ is an isomorphism, being the composite of three
  isomorphisms. Since the lower square in the diagram below is a pullback and the outside commutes, we get a morphism $u$ making the remaining parts commute:
  \[
    \begin{tikzcd}
      C_i
      \arrow[bend right=20]{ddr}[swap]{\id}
      \arrow{r}{c_i}
      \arrow[dashed]{dr}{u}
      &
      (GC)_i
      \arrow[bend left=20]{dr}{(Gm)_i^{-1}}
      \\
      &
      S_i
      \arrow[>->]{d}[swap]{m_i}
      \arrow{r}{s_i}
      \pullbackangle{-45}
      & (G S)_i
      \arrow{d}{(Gm)_i}
      \\
      & C_i
      \arrow{r}{c_i}
      & (GC)_i
    \end{tikzcd}
    \quad
    \text{ in }
    \C.
  \]
Thus the monomorphism $m_i$ is a split epimorphism ($m_i\cdot u=\id$), and so it is an isomorphism. \qedhere
\end{enumerate}
\end{proof}

In many applications, the above theorem is instantiated to endofunctors $G$ on the slice category $\Set/I$. By extension, we say that such a functor is \emph{well-founded} if the corresponding endofunctor on $\Set^I$ obtained by pre- and postcomposing with the equivalence $\Set^I\simeq \Set/I$ \eqref{eq:familySlice} is well-founded. Thanks to the following sufficient criterion, checking well-foundedness is usually easy:

\begin{proposition}[Well-Foundedness of Functors on $\Set/I$]\label{prop:lifted-functor-well-founded}
If $G\colon \Set/I\to \Set/I$ satisfies
\[ \forall (X,r)\in \Set/I.\, \forall i\in I.\, \bar r^{-1}(i) \seq G_1(r^{-1}({<}i),r_{{<}i}),\] 
where $G(X,r)=(G_1(X,r),\bar r)$ and $r_{{<}i}\colon r^{-1}({<}i)\to I$ is the restriction of $r$, then $G$ is well-founded. (Note that $G_1(r^{-1}({<}i),r_{{<}i})\seq G_1(X,r)$ by \autoref{conv:set-functors}.)
\end{proposition}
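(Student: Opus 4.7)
The plan is to verify the elementary characterization of well-foundedness from \autoref{rem:well-founded-functor-char}, i.e.\ condition \eqref{eq:wf-cond}, transported along the equivalence $\Set^I\simeq\Set/I$ of \eqref{eq:familySlice}. Under this equivalence, $\pr_j$ corresponds to taking the fiber $r^{-1}(j)$ of the indexing map, and for a morphism $f\colon (X,r)\to(Y,s)$ in $\Set/I$ the component $\pr_j f$ is the restriction of $f$ to $r^{-1}(j)\to s^{-1}(j)$. Concretely, I need to prove that if $f,g\colon(X,r)\to(Y,s)$ coincide on the subset $r^{-1}({<}i)\subseteq X$, then the maps $Gf$ and $Gg$ coincide on the fiber $\bar r^{-1}(i)$ of $G(X,r)$.

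To carry this out, I will form the subobject inclusions $m\colon (r^{-1}({<}i),r_{<i})\hookto (X,r)$ and $n\colon (s^{-1}({<}i),s_{<i})\hookto (Y,s)$, and let $f'\colon (r^{-1}({<}i),r_{<i})\to (s^{-1}({<}i),s_{<i})$ denote the common restriction of $f$ and $g$, so that $f\cdot m = n\cdot f' = g\cdot m$. Applying $G$ yields $Gf\cdot Gm = Gn\cdot Gf' = Gg\cdot Gm$. Since $G_1$ preserves inclusions (already implicitly assumed in the statement via \autoref{conv:set-functors}), $Gm$ acts set-theoretically as the inclusion $G_1(r^{-1}({<}i),r_{<i})\hookto G_1(X,r)$. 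The hypothesis then says exactly that every element $x\in\bar r^{-1}(i)$ lies in the domain of $Gm$ and satisfies $Gm(x)=x$, whence $Gf(x)=Gg(x)$, giving $(Gf)_i=(Gg)_i$ as required.

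The main subtlety is the upgrade from the object-level containment $\bar r^{-1}(i)\subseteq G_1(r^{-1}({<}i),r_{<i})$ supplied by the hypothesis to a morphism-level coincidence of the restrictions of $Gf$ and $Gg$ to the $i$-th fiber. This hinges entirely on $G_1$ preserving inclusions, so that $Gm$ can be treated as the identity on the fiber $\bar r^{-1}(i)$; with that in place, the rest is just functoriality applied to the factorization $f\cdot m = n\cdot f' = g\cdot m$.
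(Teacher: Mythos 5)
Your proposal is correct and follows essentially the same route as the paper: both verify condition \eqref{eq:wf-cond} by using the hypothesis to factor the action of $Gf$ on the fiber $\bar r^{-1}(i)$ through $G_1(r^{-1}({<}i),r_{<i})$, with \autoref{conv:set-functors} supplying that the relevant arrows are genuine set inclusions. Your formulation with two morphisms $f,g$ agreeing on $r^{-1}({<}i)$ is just a direct restatement of the paper's claim that $(Gf)_i$ is uniquely determined by the restrictions $f_j$ for $j<i$.
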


\begin{proofappendix}{prop:lifted-functor-well-founded}
We prove that $G$ satisfies the criterion \eqref{eq:wf-cond} of \autoref{rem:well-founded-functor-char}. Given a morphism $f\colon (X,r)\to (Y,s)$ of $\Set/I$, we write $(f_i\colon r^{-1}(i)\to s^{-1}(i))_{i\in I}$ for the corresponding morphism of $\Set^I$ under the equivalence $\Set^I\simeq \Set/I$; thus $f_i$ is the domain-codomain restriction of the map $f\colon X\to Y$ to the preimages $r^{-1}(i)\seq X$ and $s^{-1}(i)\seq Y$. Moreover, we write $r_i\colon r^{-1}(i)\to I$ and $s_i\colon s^{-1}(i)\to I$ for the restrictions of $r$ and $s$. Then for the morphism $Gf\colon G(X,r)\to G(Y,r)$ of $\Set/I$, whose underlying map is $G_1 f\colon G_1(X,r)\to G_1(Y,s)$, we have the following commutative diagram for each $i\in I$ (the unlabelled arrows are set inclusions):
\[
\begin{tikzcd}[column sep=50]
\bar r^{-1}(i) \ar{r}{(Gf)_i=(G_1f)_i} \ar[shiftarr={xshift=-60},tail]{ddd}  \ar[tail]{d} & \bar s^{-1}(i) \ar[shiftarr={xshift=60},tail]{ddd} \ar[tail]{d} \\
G_1(r^{-1}({<}i),r_{{<}i}) \ar[equals]{d} & G_1(s^{-1}({<}i),s_{{<}i}) \ar[equals]{d} \\
G_1(\coprod_{j<i} r^{-1}(j),[r_j]_{j<i}) \ar[tail]{d} \ar{r}{G_1(\coprod_{j<i} f_j)} & G_1(\coprod_{j<i} s^{-1}(j),[s_j]_{j<i}) \ar[tail]{d} \\
G_1(X,r) \ar{r}{Gf} & G_1(Y,s)  
\end{tikzcd}
\]
The outside commutes by definition of $(Gf)_i$, and the lower cell by definition of $f_j$. Therefore the upper cell commutes, which shows that $(Gf)_i$ is uniquely determined by the morphisms $f_j$ for $j<i$. This proves \eqref{eq:wf-cond}, so $\overline{T}$ is well-founded.
\end{proofappendix}

\begin{example}[Quicksort, \agdaref{QuickSort.QuickSort}{QuickSortFunctorWellfounded}{}]\label{ex:quicksort-well-founded}
The functor $\overline{T}$ \eqref{eq:barF-quicksort} on $\Set/\B Z$ used for modelling Quicksort is well-founded. To see this, we check the criterion of \autoref{prop:lifted-functor-well-founded}. We need to show:
\[\text{For $(X,r)\in \Set/\B Z$ and $\mu\in \B Z$, every element of the preimage $\bar r^{-1}(\mu)$ lies in $\overline{T}_1(r^{-1}({<}\mu),r_{<\mu})$}.\]
This is immediate from the definition of $\overline{T}_1$ \eqref{eq:barF1-quicksort}. Indeed,
if $\mu=\emptyset$, the only element sent by $\bar r$ to $\mu$ is $\inl$, which lies in $1$ and thus in $\overline{T}_1(r^{-1}({<}\mu),r_{<\mu})$. If $\mu\neq \emptyset$, then $\inr(u,p,v)$ is sent by $\bar r$ to $\mu$ iff $r(u)\uplus \{p\}\uplus r(v)=\mu$, which entails $r(u),r(v)<\mu$, and thus $\inr(u,p,v)$ lies in $\overline{T}_1(r^{-1}({<}\mu),r_{<\mu})$.

 We can now apply \autoref{allCausalRecursive} to get the missing proof of \autoref{prop:sort-recursive-indexed}: the indexed coalgebra $((Z^*,q),c)$ modelling Quicksort over $\Set/\B Z$ is recursive.
% In \autoref{sec:quicksort-correctness} we have used the functor $\overline{T}$ \eqref{eq:barF-quicksort} on $\Set/\B Z$ to model the intrinsic correctness of Quicksort. The functor $\overline{T}$ is well-founded. To see this, we only need to check the criterion of \autoref{prop:lifted-functor-well-founded}, which is straightforward:
% \begin{enumerate}
% \item The set functor $FX=1+X\times Z\times X$ preserves preimages (like every polynomial functor).
% \item The modality $\lambda\colon 1+\B Z\times Z\times \B Z \to \B Z$ for $\overline{T}$, given by $\inl\mapsto \emptyset$ and $\inr(\kappa,p,\nu) \mapsto \kappa \uplus \{p\} \uplus \nu$, satisfies $\lambda^{-1}(\mu)\seq F({<}\mu)$ for $\mu\in \B Z$. Indeed, this just says that $\kappa \uplus \{p\} \uplus \nu=\mu$ implies $\kappa,\nu<\mu$.
% \end{enumerate}
% We can now apply \autoref{allCausalRecursive} to get the missing proof of \autoref{prop:sort-recursive-indexed}: the indexed coalgebra $((Z^*,q),c)$ modelling Quicksort over $\Set/\B Z$ is recursive.
\end{example}

\section{Well-Founded Coreflection}\label{sec:well-founded-coreflection}
If a functor is not well-founded, it can always be transformed into a well-founded functor in a universal manner by forming its \emph{well-founded coreflection}.
This rather simple construction is instrumental for the standard technique we develop for proving well-foundedness of a functor.
\begin{definition}
  \label{def:trunc}
  For every $i\in I$, we define the inclusion functor $J_{<i}\colon \C^{<i}\to
  \C^I$ and the truncation functor $T_{<i}\colon \C^I\to \C^I$ as follows (recall that $\C$ has an initial object $0$ by \autoref{indexCat}):
  \[
    (J_{<i} Y)_j := \begin{cases}
      Y_j &\text{if }j < i,\\
      0&\text{otherwise},\\
    \end{cases}
    \qquad
    T_{<i} := J_{<i}\cdot \pr_{<i},
    \qquad
    (T_{<i} X)_j = \begin{cases}
      X_j &\text{if }j < i,\\
      0&\text{otherwise}.\\
    \end{cases}
  \]
\end{definition}
\begin{remark}
  \label{truncationAdjunction}
The functor $J_{<i}$ is left adjoint to $\pr_{<i}$ and so $\C^{<i}$ is a
\emph{coreflective} subcategory~\cite[18.2(1)]{joyofcats}. By this adjunction, the
truncation functor carries the structure of a comonad:
  \[
    \begin{tikzcd}[sep=15mm]
      \C^{<i}
      \arrow[bend left=20]{r}[alias=L]{J_{<i}}
      & \C^I
      \arrow[bend left=20]{l}[alias=R]{\pr_{<i}}
      \arrow[from=R,to=L,phantom]{}[sloped,description]{\vdash}
      \arrow[loop at=0,looseness=6]{}[right]{T_{<i}}
    \end{tikzcd}
  \]
The counit $t_i\colon T_{<i}\to \Id$ is the natural transformation with components
\[
  t_{i,X,j}\colon (T_{<i} X)_j \to X_j,
  \qquad
  t_{i,X,j} = \begin{cases}
    \id_{X_j}\colon X_j\to X_j &\text{if }j < i, \\
    !`\colon 0\to X_j &\text{otherwise}.
  \end{cases}
\]
\end{remark}

\begin{definition}[Well-Founded Coreflection]
  Let $G\colon \C^I\to \C^I$ be a functor. The \emph{well-founded coreflection} of $G$ is the functor
  \[
    G_\downarrow\colon \C^I \to \C^I,
    \qquad
    (G_\downarrow X)_i := (G T_{<i} X)_i.
  \]
The functors $G_\downarrow$ and $G$ are connected by the natural transformation $\epsilon_G\colon G_\downarrow \to G$ with components
  \[
    (\epsilon_{G,X})_i\colon (GT_{<i} X)_i \to (GX)_i,
    \qquad
    (\epsilon_{G,X})_i := (Gt_{i,X})_i.
  \]
\end{definition}
We refer to the functor $G_\downarrow$ as a \emph{coreflection}
because the next result shows that under a mild condition, the well-founded functors form a coreflective subcategory of the category of all endofunctors.% <- space im source code wichtig!
\begin{theorem}[Well-Founded Coreflection]\label{prop:well-founded-coflection} Let $G\colon\C^I\to\C^I$ be a functor.
 \begin{enumerate}
  \item\label{prop:well-founded-coreflection:1}  The functor $G_\downarrow\colon \C^I\to\C^I$
  is well-founded.
  \item\label{prop:well-founded-coreflection:2} The functor $G$ is well-founded iff $\epsilon_G\colon G_\downarrow\cong G$ is a natural isomorphism.
  \item\label{prop:well-founded-coreflection:3} If $\epsilon_G\colon G_\downarrow\to G$ is componentwise monic, then $\epsilon_G$ is co-universal.
 \end{enumerate}
\end{theorem}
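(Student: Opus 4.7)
The plan is to handle the three parts in order, each building on the previous; collectively, they assert that $G_\downarrow$ is the universal well-founded functor mapping to $G$ whenever that makes sense.

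For part \ref{prop:well-founded-coreflection:1}, I would unfold $T_{<i} = J_{<i}\cdot \pr_{<i}$ from \autoref{def:trunc} to rewrite $(G_\downarrow X)_i = (G\cdot J_{<i}\cdot \pr_{<i})(X)_i$. Hence $\pr_i\cdot G_\downarrow$ factors through $\pr_{<i}$ via the functor $(G_\downarrow)_{<i} := \pr_i\cdot G\cdot J_{<i}\colon \C^{<i}\to \C$, and the factorization commutes on the nose (not merely up to isomorphism).

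For part \ref{prop:well-founded-coreflection:2}, the $(\Leftarrow)$ direction follows from part \ref{prop:well-founded-coreflection:1} since well-foundedness transfers along any natural isomorphism: compose the witnessing iso $\phi_i^\downarrow$ of $G_\downarrow$ with $\pr_i\cdot \epsilon_G^{-1}$ to obtain one for $G$. For $(\Rightarrow)$, let $\phi_i\colon \pr_i\cdot G \xrightarrow{\cong} G_{<i}\cdot \pr_{<i}$ witness well-foundedness of $G$. The key input is that $\pr_{<i}(t_{i,X}) = \id_{\pr_{<i} X}$ by the description of the counit in \autoref{truncationAdjunction}. Applying the naturality square of $\phi_i$ to $t_{i,X}$ then collapses to $\phi_{i,X}\cdot (Gt_{i,X})_i = \phi_{i,T_{<i}X}$, exhibiting $(\epsilon_{G,X})_i = (Gt_{i,X})_i$ as the composite $\phi_{i,X}^{-1}\cdot \phi_{i,T_{<i}X}$ of two isomorphisms.

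For part \ref{prop:well-founded-coreflection:3}, I interpret co-universality as: for every well-founded functor $H$ and natural transformation $\alpha\colon H \to G$, there is a unique $\beta\colon H \to G_\downarrow$ with $\epsilon_G\cdot \beta = \alpha$. I would first construct $\alpha_\downarrow\colon H_\downarrow \to G_\downarrow$ componentwise via $(\alpha_\downarrow)_{X,i} := (\alpha_{T_{<i}X})_i$; naturality of $\alpha$ at $t_{i,X}$ yields the intertwining equation $\epsilon_G\cdot \alpha_\downarrow = \alpha\cdot \epsilon_H$. By part \ref{prop:well-founded-coreflection:2}, $\epsilon_H$ is a natural isomorphism (since $H$ is well-founded), so $\beta := \alpha_\downarrow\cdot \epsilon_H^{-1}\colon H\to G_\downarrow$ satisfies $\epsilon_G\cdot \beta = \alpha$. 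Uniqueness follows immediately from the componentwise monicity of $\epsilon_G$. The main conceptual point, and what will require the most care to write cleanly, is the observation that the monic hypothesis is used only to secure uniqueness; existence of the factorization is automatic once part \ref{prop:well-founded-coreflection:2} is in hand.
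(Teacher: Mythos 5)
Your proposal is correct and follows essentially the same route as the paper's proof: part \ref{prop:well-founded-coreflection:1} via the on-the-nose factorization $G_{<i}=\pr_i\cdot G\cdot J_{<i}$, part \ref{prop:well-founded-coreflection:2} via the naturality square of $\phi_i$ at $t_{i,X}$ collapsing because $\pr_{<i}(t_{i,X})=\id$, and part \ref{prop:well-founded-coreflection:3} by defining the factoring transformation as $(\alpha_{T_{<i}X})_i\cdot(\epsilon_{H,X})_i^{-1}$ with uniqueness from monicity. Your packaging of the last step as an intertwining square $\epsilon_G\cdot\alpha_\downarrow=\alpha\cdot\epsilon_H$ is only a cosmetic difference from the paper's componentwise presentation.
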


\begin{proofappendix}{prop:well-founded-coflection}
\begin{enumerate} 
  \item For $i\in I$, the functor 
  \[
    G_{<i}\colon \C^{<i}\to \C,\qquad G_{<i} Y := (GJ_{<i} Y)_i,
  \]
   witnesses the natural isomorphism required in
  \eqref{wffun}, which is in fact an equality:
  \begin{align*}
    G_{<i} (\pr_{<i} X)
    = (G J_{<i} \pr_{<i} X)_i
    = (GT_{<i} X)_i
    = (G_\downarrow X)_i
    =
    \pr_i(G_\downarrow X).
  \end{align*}
  All these equational steps hold by definition (of $G_{<i}$, $T_{<i}$, $G_\downarrow$, $\pr_i$).
\item If $G_\downarrow\cong G$, then $G$ is well-founded because $G_\downarrow$ is well-founded by part \ref{prop:well-founded-coreflection:1}, and well-foundedness is clearly preserved by natural isomorphisms. Conversely, suppose that $G$ is well-founded. Then for each $i\in I$ there exists a functor $G_{<i}\colon \C^{<i}\to \C$ with a natural isomorphism
\[ \phi_{i,X}\colon (\pr_{i}\cdot G)X \xto{\cong} (G_{<i}\cdot \pr_{<i})X \qquad (X\in \C^I). \]    
Consider the diagram below:
\[
\begin{tikzcd}[column sep=50]
(GT_{<i}X)_i \ar[equals]{d} \ar{r}{(Gt_{i,x})_i} & (GX)_i \ar[equals]{d} \\
(pr_i\cdot G)T_{<i} X \ar{r}{(\pr_{i}\cdot G) t_{i,X}} \ar{d}[swap]{\phi_{i,T_{<i}X}}  & (\pr_i\cdot G) X \ar{d}{\phi_{i,X}} \\
(G_{<i}\cdot \pr_{<i})T_{<i} X \ar[equals]{d} \ar{r}{(G_{<i}\cdot \pr_{<i})t_{i,X}} & (G_{<i}\cdot \pr_{<i})X \ar[equals]{d} \\
(G_{<i}\cdot \pr_{<i}) X \ar{r}{\id} & (G_{<i}\cdot \pr_{<i}) X
\end{tikzcd}
\]
The upper cell commutes by definition of $\pr_i$, the middle cell by naturality of $\phi_i$, and the lower cell because $\pr_{<i}\cdot J_{<i}=\Id$ and $\pr_{<i}(t_{i,X})=\id$. It follows that the outside of the diagram commutes, so $(\phi_{i,T_{<i}X})^{-1}\cdot (\phi_{i,X})^{-1}$ is an inverse of $(Gt_{i,X})_i=(\epsilon_{G,X})_i$. Thus $\epsilon_G$ is an isomorphism.
\item Given $H$ and $\alpha$, we show how to construct the unique natural transformation $\ol{\alpha}\colon H\to G_\downarrow$ such that $\alpha=\epsilon_G\cdot \ol{\alpha}$, which means that for every $X\in \C$ and $i\in I$ the triangle below commutes:
\begin{equation}\label{eq:bar-alpha}
\begin{tikzcd}
& (GX)_i & \\
(HX)_i \ar{ur}{(\alpha_{X})_i} \ar[dashed]{rr}[swap]{(\ol{\alpha}_X)_i} & & (G_\downarrow X)_i  \ar{ul}[swap]{(\epsilon_{G,X})_i}
\end{tikzcd}
\end{equation}
Since $H$ is well-founded, we know from part \ref{prop:well-founded-coreflection:2} that $(\epsilon_{H,X})_i=(Ht_{i,X})_i\colon (H_\downarrow X)_i\to (HX)_i$ is an isomorphism. We define the desired natural transformation $\ol\alpha$ as follows:
\[ (\ol{\alpha}_X)_i \;\equiv\; (\, (HX)_i \xto{(Ht_{i,X})_i^{-1}} (H_\downarrow X)_i = (HT_{<i}X)_i \xto{(\alpha_{T_{<i}X})_i} (GT_{<i} X)_i = (G_\downarrow X)_i \,). \] 
Clearly $\ol\alpha$ is natural because both $t_{i,X}$ and $\alpha$ are natural. Moreover \eqref{eq:bar-alpha} commutes: composing $(\ol\alpha_X)_i$ with $(\epsilon_{G,X})_i = (Gt_{i,X})_i$ yields 
\[ (HX)_i \xto{(Ht_{i,X})_i^{-1}} (HT_{<i}X)_i \xto{(\alpha_{T_{<i}X})_i} (GT_{<i} X)_i \xto{(Gt_{i,X})_i} (GX)_i, \, \]
which is equal to $(\alpha_X)_i$ by naturality of $\alpha$. Uniqueness of $\ol{\alpha}$ follows from $\epsilon_G$ being componentwise monic by assumption.
\qedhere
\end{enumerate} 
\end{proofappendix}

\begin{remark}
Co-universality means that for every well-founded functor $H\colon \C^I\to \C^I$ and every natural transformation $\alpha\colon H\to G$, there exists a unique $\ol{\alpha}\colon H\to G_\downarrow$ such that the triangle
\[
\begin{tikzcd}[sep=1cm]
& G & \\
H \ar{ur}{\alpha} \ar[dashed]{rr}{\ol{\alpha}} & & G_\downarrow \ar{ul}[swap]{\epsilon_G}
\end{tikzcd}
\]
commutes. The condition that $\epsilon_G$ is componentwise monic guarantees uniqueness of $\ol{\alpha}$ and is fairly mild. For example, it holds whenever the initial object $0$ is \emph{simple} (i.e.\ every morphism $0\to X$ in $\C$ is monic) and $F$ preserves monomorphisms. Indeed, in that case $t_{i,X}\colon T_{<i}X \to X$ is monic because each component is either an identity morphism or has domain $0$, and so $(\epsilon_{G,X})_i=(Gt_{i,X})_i$ is monic.    
\end{remark}

\subsection{Agda Library Interface}
\label{sec:library-design}
\newcommand{\af}[1]{\AgdaFunction{#1}}
\newcommand{\aic}[1]{\AgdaInductiveConstructor{#1}}

In this section we explain how we mechanize above theorems as a library for writing intrinsically correct algorithms.
Such algorithms are defined as \(G\)-coalgebra-to-algebra morphisms for some well-founded functor \(G\).
The well-foundedness of \(G\) is verified via the well-founded coreflection $G_\downarrow$ using \autoref{prop:well-founded-coflection}\ref{prop:well-founded-coreflection:2}.
% We mechanize the definitions of the section so far in a way that is most useful to this end.
% We then illustrate the writing of \enquote{client code} of our library with the running Quicksort example.
%

%The main use case of this library is to supply the unique morphism from a coalgebra to an algebra for a functor \(G\).
% Instead of full well-foundedness of $G$, the main applications (discussed in \autoref{sec:} later) use the following simplified lemma
% \begin{lemma}
%   Given a functor $G\colon \Set^I\to\Set^I$, 
% \end{lemma}
% (\agdarefcustom{Main library function}{IntrinsicallyRecursiveCoalgs}{c2a.c2a-morph}{})
% assumes 
% %We reduce the proof obligations of client code by merely requiring the provision of an inverse \(ε^{-1}_G \colon (GX)_i \to (GT_{<i} X)_i\) to the canonical counit \(ε_G\).

\paragraph{Mechanization of definitions}
When defining $G_\downarrow$, the case-distinction $j < i$ of $J_{<i}$ (\autoref{def:trunc}) requires the order \(<\) to be decidable.
In addition to this restriction, such a case-distinction is inconvenient to work with, because whenever
one has an element of \((J_{<i} X)_j\) at hand, one then needs to do another case-distinction for \(i < j\).
Thus, we work with a slightly different definition \(J'_{<i}\)
for the special case \(\C = \Set\), which we use in all the examples in \autoref{sec:case-studies}. Intuitively, we replace decidability of \(<\)
with a proof of \(j < i\) at the point where an element of \((J'_{<i} X)_j\) is introduced.
\begin{definition}[Inclusion Functor for \(\C\) specialized to \(\Set\), \agdaref{IntrinsicallyRecursiveCoalgs}{J<}{}]
  \label{def:trunc-set}
  For every $i\in I$, we define the inclusion functor $J'_{<i}\colon \Set^{<i}\to
  \Set^I$ as follows:
  \[
    (J'_{<i} X)_j := \{ x \mid j < i , x \in X_j \}
  \]
\end{definition}
The category \(\Set\) corresponds in our mechanization to the category \af{Type} of (Agda) types, which is a setting suitable to most formalized algorithms, including the ones in this paper.

\paragraph{Using the library to write recursive algorithms}
For a \(G\)-coalgebra \((C,c)\) and an algebra \((A,a)\),
our library provides an \emph{incremental} interface:
In its simplest form, it only takes an arbitrary $\Set^I$-morphism \(\epsilon'_i \colon (GX)_i \to (GT_{<i} X)_i\) and returns a $\Set^I$-morphism \(h \colon C \to A\) between the carriers (\agdarefcustom{Library stage 1}{IntrinsicallyRecursiveCoalgs}{c2a.c2a-morph-Definition.c2a-morph}{}).
The second stage requires that $\epsilon'$ is inverse to \(ε \colon G_↓ \to G\) and returns a proof that \(h\) is a coalgebra-to-algebra morphism from $(C,c)$ to $(A,a)$ (\agdarefcustom{Library stage 2}{IntrinsicallyRecursiveCoalgs}{c2a.c2a-morph-Definition.c2a.is-c2a-morph}{}).
This is a propositional equality witnessing that the algorithm fulfills the functional equation corresponding to its usual explicitly recursive formulation.
The third and final stage requires naturality of \(ε'\) and returns a proof that \(h\) is the \emph{unique} solution to this equation
(\agdarefcustom{Library stage 3}{IntrinsicallyRecursiveCoalgs}{c2a.c2a-morph-Definition.c2a.unique.unique}{}).

Let us now illustrate the library client code for the instance of Quicksort (\agdarefcustom{\autoref{sec:library-design}}{QuickSort.QuickSort}{quickSort}{intrinsically correct Quicksort as a coalgebra-to-algebra morphism}). The datatypes and some lemmas are taken from an existing formalization~\cite{alexandru_2024_14279034} of Quicksort in Cubical Agda. We use these to show that their functor is well-founded and thus fits into our framework.
For this application, as the correctness is already fully encoded at the level of the underlying map (cf.\ \autoref{sec:quicksort-correctness}), it suffices to define the $\epsilon'\colon G \to G_\downarrow$. The type constructor~\af{S} shown below corresponds to the functor $\overline{T}$ of \eqref{eq:barF-quicksort}. We define it along with a \emph{pattern synonym} for \aic{┌\_┐ } which allows us to introduce the implicit indices while still using infix notation: $u\AgdaOperator{\AgdaInductiveConstructor{\textasciicircum{}}}i_l$ is a list $u$ with elements $i_l$.
\begin{code}%
\>[0]\AgdaKeyword{data}\AgdaSpace{}%
\AgdaDatatype{S}\AgdaSpace{}%
\AgdaSymbol{(}\AgdaBound{X}\AgdaSpace{}%
\AgdaSymbol{:}\AgdaSpace{}%
\AgdaDatatype{ℬ}\AgdaSpace{}%
\AgdaBound{A}\AgdaSpace{}%
\AgdaSymbol{→}\AgdaSpace{}%
\AgdaPrimitive{Type}\AgdaSpace{}%
\AgdaGeneralizable{⇃}\AgdaSymbol{)}\AgdaSpace{}%
\AgdaSymbol{:}\AgdaSpace{}%
\AgdaDatatype{ℬ}\AgdaSpace{}%
\AgdaBound{A}\AgdaSpace{}%
\AgdaSymbol{→}\AgdaSpace{}%
\AgdaPrimitive{Type}\AgdaSpace{}%
\AgdaBound{⇃}\AgdaSpace{}%
\AgdaKeyword{where}\<%
\\
\>[0][@{}l@{\AgdaIndent{0}}]%
\>[2]\AgdaInductiveConstructor{leaf}\AgdaSpace{}%
\AgdaSymbol{:}\AgdaSpace{}%
\AgdaDatatype{S}\AgdaSpace{}%
\AgdaBound{X}\AgdaSpace{}%
\AgdaInductiveConstructor{[]}\<%
\\
\>[2]\AgdaOperator{\AgdaInductiveConstructor{\AgdaUnderscore{}┌\AgdaUnderscore{}┐\AgdaUnderscore{}}}\AgdaSpace{}%
\AgdaSymbol{:}%
\>[11]\AgdaSymbol{\{}\AgdaBound{iₗ}\AgdaSpace{}%
\AgdaBound{iᵣ}\AgdaSpace{}%
\AgdaSymbol{:}\AgdaSpace{}%
\AgdaDatatype{ℬ}\AgdaSpace{}%
\AgdaBound{A}\AgdaSymbol{\}}\AgdaSpace{}%
\AgdaSymbol{→}\AgdaSpace{}%
\AgdaSymbol{(}\AgdaBound{u}\AgdaSpace{}%
\AgdaSymbol{:}\AgdaSpace{}%
\AgdaBound{X}\AgdaSpace{}%
\AgdaBound{iₗ}\AgdaSymbol{)}\AgdaSpace{}%
\AgdaSymbol{→}\AgdaSpace{}%
\AgdaSymbol{(}\AgdaBound{p}\AgdaSpace{}%
\AgdaSymbol{:}\AgdaSpace{}%
\AgdaBound{A}\AgdaSymbol{)}\AgdaSpace{}%
\AgdaSymbol{→}\AgdaSpace{}%
\AgdaSymbol{(}\AgdaBound{v}\AgdaSpace{}%
\AgdaSymbol{:}\AgdaSpace{}%
\AgdaBound{X}\AgdaSpace{}%
\AgdaBound{iᵣ}\AgdaSymbol{)}\AgdaSpace{}%
\AgdaSymbol{→}\<%
\\
\>[2][@{}l@{\AgdaIndent{0}}]%
\>[5]\AgdaBound{p}\AgdaSpace{}%
\AgdaOperator{\AgdaFunction{⊒}}\AgdaSpace{}%
\AgdaBound{iₗ}\AgdaSpace{}%
\AgdaSymbol{→}\AgdaSpace{}%
\AgdaBound{p}\AgdaSpace{}%
\AgdaOperator{\AgdaFunction{⊑}}\AgdaSpace{}%
\AgdaBound{iᵣ}\AgdaSpace{}%
\AgdaSymbol{→}\AgdaSpace{}%
\AgdaDatatype{S}\AgdaSpace{}%
\AgdaBound{X}\AgdaSpace{}%
\AgdaSymbol{(}\AgdaBound{p}\AgdaSpace{}%
\AgdaOperator{\AgdaInductiveConstructor{∷}}\AgdaSpace{}%
\AgdaBound{iₗ}\AgdaSpace{}%
\AgdaOperator{\AgdaFunction{++}}\AgdaSpace{}%
\AgdaBound{iᵣ}\AgdaSymbol{)}\<%
\\
\>[0]\AgdaKeyword{pattern}\AgdaSpace{}%
\AgdaOperator{\AgdaInductiveConstructor{\AgdaUnderscore{}\textasciicircum{}\AgdaUnderscore{}┌\AgdaUnderscore{}┐\AgdaUnderscore{}\textasciicircum{}\AgdaUnderscore{}}}\AgdaSpace{}%
\AgdaBound{u}\AgdaSpace{}%
\AgdaBound{iₗ}\AgdaSpace{}%
\AgdaBound{p}\AgdaSpace{}%
\AgdaBound{v}\AgdaSpace{}%
\AgdaBound{iᵣ}\AgdaSpace{}%
\AgdaBound{pf₁}\AgdaSpace{}%
\AgdaBound{pf₂}\AgdaSpace{}%
\AgdaSymbol{=}\AgdaSpace{}%
\AgdaOperator{\AgdaInductiveConstructor{\AgdaUnderscore{}┌\AgdaUnderscore{}┐\AgdaUnderscore{}}}\AgdaSpace{}%
\AgdaSymbol{\{}\AgdaBound{iₗ}\AgdaSymbol{\}}\AgdaSpace{}%
\AgdaSymbol{\{}\AgdaBound{iᵣ}\AgdaSymbol{\}}\AgdaSpace{}%
\AgdaBound{u}\AgdaSpace{}%
\AgdaBound{p}\AgdaSpace{}%
\AgdaBound{v}\AgdaSpace{}%
\AgdaBound{pf₁}\AgdaSpace{}%
\AgdaBound{pf₂}\<%
\end{code}
%\vspace{2mm}
\tikzset{
  agdamarkerframe/.style={
    inner xsep=1pt,
    inner ysep=2pt,
    outer ysep=-2pt,
    fill=black!10,
    rounded corners=2pt,
  },
  description node/.style={
    fill=black!90,
    rounded corners=2pt,
    text=white,
    font=\small,
    align=center,
  },
  description arrow/.style={
    line width=1pt,
    ->,
    shorten >=2pt, % remedy negative outer sep
    >={Straight Barb[scale=0.6]},
    draw=black!90,
  },
  % dotpattern/.style={
  %   agdamarkerframe,
  %   %draw=red,
  %   %draw,
  %   %densely dotted,
  % },
  % patternmatch/.style={
  %   agdamarkerframe,
  %   %draw=blue,
  %   %draw,
  %   %dashed,
  % },
  % helperdef/.style={
  %   agdamarkerframe,
  %   %draw,
  %   %draw=green,
  % },
}
\newcommand{\agdamark}[2]{\tikz[remember picture, baseline=(lasttikznode.base)]{\node[agdamarkerframe,alias=#1] (lasttikznode) {#2};}}
%<*SwfDef>
\begin{code}%
\>[0][@{}l@{\AgdaIndent{1}}]%
\>[2]\AgdaFunction{S-ε⁻¹}\AgdaSpace{}%
\AgdaSymbol{:}\AgdaSpace{}%
\AgdaSymbol{\{}\AgdaBound{X}\AgdaSpace{}%
\AgdaSymbol{:}\AgdaSpace{}%
\AgdaDatatype{ℬ}\AgdaSpace{}%
\AgdaBound{A}\AgdaSpace{}%
\AgdaSymbol{→}\AgdaSpace{}%
\AgdaPrimitive{Type}\AgdaSymbol{\}}\AgdaSpace{}%
\AgdaSymbol{→}\AgdaSpace{}%
\AgdaSymbol{(}\AgdaBound{i}\AgdaSpace{}%
\AgdaSymbol{:}\AgdaSpace{}%
\AgdaDatatype{ℬ}\AgdaSpace{}%
\AgdaBound{A}\AgdaSymbol{)}\AgdaSpace{}%
\AgdaSymbol{→}\AgdaSpace{}%
\AgdaDatatype{S}\AgdaSpace{}%
\AgdaBound{X}\AgdaSpace{}%
\AgdaBound{i}\AgdaSpace{}%
\AgdaSymbol{→}\AgdaSpace{}%
\AgdaSymbol{(}\AgdaDatatype{S}\AgdaSpace{}%
\AgdaOperator{\AgdaPostulate{↓₀}}\AgdaSymbol{)}\AgdaSpace{}%
\AgdaBound{X}\AgdaSpace{}%
\AgdaBound{i}\<%
\\
\>[2]\AgdaFunction{S-ε⁻¹}\AgdaSpace{}%
\agdamark{nil}{\AgdaDottedPattern{\AgdaSymbol{.}}\AgdaDottedPattern{\AgdaInductiveConstructor{[]}}}%
\>[25]\agdamark{leaf}{\AgdaInductiveConstructor{leaf}}%
\>[59]\AgdaSymbol{=}\AgdaSpace{}%
\AgdaInductiveConstructor{leaf}\<%
\\
\>[2]\AgdaFunction{S-ε⁻¹}\AgdaSpace{}%
\agdamark{cons}{\AgdaDottedPattern{\AgdaSymbol{.(}}\AgdaDottedPattern{\AgdaBound{p}}\AgdaSpace{}%
\AgdaDottedPattern{\AgdaOperator{\AgdaInductiveConstructor{∷}}}\AgdaSpace{}%
\AgdaDottedPattern{\AgdaBound{iₗ}}\AgdaSpace{}%
\AgdaDottedPattern{\AgdaOperator{\AgdaFunction{++}}}\AgdaSpace{}%
\AgdaDottedPattern{\AgdaBound{iᵣ}}\AgdaDottedPattern{\AgdaSymbol{)}}}% 
\>[25]\agdamark{pivot}{\AgdaSymbol{((}\AgdaBound{u}\AgdaSpace{}%
\AgdaOperator{\AgdaInductiveConstructor{\textasciicircum{}}}\AgdaSpace{}%
\AgdaBound{iₗ}\AgdaSpace{}%
\AgdaOperator{\AgdaInductiveConstructor{┌}}\AgdaSpace{}%
\AgdaBound{p}\AgdaSpace{}%
\AgdaOperator{\AgdaInductiveConstructor{┐}}\AgdaSpace{}%
\AgdaBound{v}\AgdaSpace{}%
\AgdaOperator{\AgdaInductiveConstructor{\textasciicircum{}}}%
\AgdaBound{iᵣ}\AgdaSymbol{)}\AgdaSpace{}%
\AgdaBound{pf₁}\AgdaSpace{}%
\AgdaBound{pf₂}\AgdaSymbol{)}}%
\>[59]\AgdaSymbol{=}\<%
\\
\>[2][@{}l@{\AgdaIndent{0}}]%
\>[5]\AgdaSymbol{((}\agdamark{helpl}{\AgdaFunction{iₗ<i}}\AgdaSpace{}%
\AgdaOperator{\AgdaInductiveConstructor{,}}\AgdaSpace{}%
\AgdaBound{u}\AgdaSymbol{)}\AgdaSpace{}%
\AgdaOperator{\AgdaInductiveConstructor{┌}}\AgdaSpace{}%
\AgdaBound{p}\AgdaSpace{}%
\AgdaOperator{\AgdaInductiveConstructor{┐}}\AgdaSpace{}%
\AgdaSymbol{(}\agdamark{helpr}{\AgdaFunction{iᵣ<i}}\AgdaSpace{}%
\AgdaOperator{\AgdaInductiveConstructor{,}}\AgdaSpace{}%
\AgdaBound{v}\AgdaSymbol{))}\AgdaSpace{}%
\AgdaBound{pf₁}\AgdaSpace{}%
\AgdaBound{pf₂}\AgdaSpace{}%
\AgdaKeyword{where}%
\>[1248I]\agdamark{defhelpl}{\AgdaFunction{iₗ<i}\AgdaSpace{}%
\AgdaSymbol{:}\AgdaSpace{}%
\AgdaSymbol{(}\AgdaBound{iₗ}\AgdaSpace{}%
\AgdaOperator{\AgdaFunction{<♯}}\AgdaSpace{}%
\AgdaBound{p}\AgdaSpace{}%
\AgdaOperator{\AgdaInductiveConstructor{∷}}\AgdaSpace{}%
\AgdaBound{iₗ}\AgdaSpace{}%
\AgdaOperator{\AgdaFunction{++}}\AgdaSpace{}%
\AgdaBound{iᵣ}\AgdaSymbol{)}}\AgdaSpace{}%
\AgdaSymbol{;}\AgdaSpace{}%
\agdamark{defhelpr}{\AgdaFunction{iᵣ<i}\AgdaSpace{}%
\AgdaSymbol{:}\AgdaSpace{}%
\AgdaSymbol{(}\AgdaBound{iᵣ}\AgdaSpace{}%
\AgdaOperator{\AgdaFunction{<♯}}\AgdaSpace{}%
\AgdaBound{p}\AgdaSpace{}%
\AgdaOperator{\AgdaInductiveConstructor{∷}}\AgdaSpace{}%
\AgdaBound{iₗ}\AgdaSpace{}%
\AgdaOperator{\AgdaFunction{++}}\AgdaSpace{}%
\AgdaBound{iᵣ}\AgdaSymbol{)}}\<%
\end{code}%
%</SwfDef>
\begin{tikzpicture}[remember picture,overlay]
  \coordinate (desc box height) at ([yshift=8mm]leaf.north);
  \node[description node,anchor=base] (desc pattern) at ([xshift=23.0mm]leaf.east |- desc box height) {(1) Pattern match};
  \draw[description arrow] (desc pattern.south) |- (leaf.east);
  \draw[description arrow] (desc pattern.south) -- (pivot.north -| desc pattern.south);
  \node[description node,anchor=base] (inversion) at ([xshift=10.0mm]nil.east |- desc box height) {(2) Inversion};
  \draw[description arrow] (inversion.south) |- (nil.east);
  \draw[description arrow] (inversion.south) -- (cons.north -| inversion.south);
  \node[description node,anchor=base] (helperdefs) at (defhelpr |- desc box height)
      {\begin{minipage}[t]{.25\textwidth}
        (3) Subproofs that bags $i_l$, $i_r$
            are smaller (only their type
            is shown, the definition is hidden)
        \end{minipage}%
      };
  \coordinate (branching point) at (cons.east -| helperdefs.south);
  \draw[description arrow] (helperdefs.south) -- (branching point) -| (defhelpl.north);
  \draw[description arrow] (helperdefs.south) -- (branching point) -- (defhelpr.north);
\end{tikzpicture}%
% \ExecuteMetaData[gentex/QuickSort/QuickSort.tex]{SwfDef}
%
The definition of \af{S-ε⁻¹} follows the following scheme:
(1) Pattern match on the value of type \af{S}~\(X\ i\);
(2) by \emph{inversion}~\cite{dybjerInductiveFamilies1994}, this will refine the original index (seen here as \emph{dot patterns} \cite{dotpatterns});
(3) prove that the indices in the functorial positions are smaller than the original, now refined, outer index. The map \af{S-ε⁻¹} embeds \af{S} into \af{S↓} and thus witnesses that the type constructor \af{S} is well-founded.

\section{Ranked Coalgebras and Termination Proofs}\label{sec:ranked-coalgebras}
We have seen how recursive coalgebras in categories $\C^I$ of indexed families give rise to intrinsic total correctness of recursive algorithms. When the interest is only in proving \emph{termination}, it is usually sufficient and conceptually easier to work with coalgebras in the base category $\C$. For that purpose, we introduce next the notion of \emph{ranked coalgebra}. Informally, a ranked coalgebra is a coalgebra in $\C$ that has the proof of its well-foundedness (i.e.\ termination) baked into its definition, by associating to every state a \emph{rank} from the set $I$ and requiring that transitions strictly decrease the rank w.r.t.\ the relation $<$ on $I$. Since $<$ is well-founded, the rank cannot decrease indefinitely, so the coalgebra is well-founded. Ranked coalgebras thus form a coalgebraic abstraction of the familiar technique for termination proofs of programs based on ranking functions~\cite{cookPodelskiRybalchenko}. On a technical level, we exploit that since $\C$ has coproducts (\autoref{indexCat}), families in $\C^I$ can be internalized in~$\C$. 

\begin{definition}[Ranked Coalgebra]
Let $F\colon \C\to \C$ be an endofunctor. A \emph{ranked family} for $F$ is given by a family $(C_i)_{i\in I}$ of objects of $\C$ and a family of morphisms $(c_i\colon C_i\to F(\coprod_{j<i} C_j))_{i\in I}$. Every ranked family induces an $F$-coalgebra $(C,c)$ with carrier $C=\coprod_{i\in I} C_i$ and structure
\begin{equation}\label{eq:induced-coalgebra} c\colon \coprod_{i\in I} C_i \xto{\coprod_{i\in I} c_i} \coprod_{i\in I}F(\coprod_{j<i} C_j) \xto{[F\inj_{<i}]_{i\in I}} F(\coprod_{i\in I} C_i), \end{equation}
where $\inj_{<i}=[\inj_j]_{j<i}$.
An $F$-coalgebra is \emph{ranked} if it is induced by some ranked family. 
\end{definition}

Our main result on ranked coalgebras is that they are \emph{intrinsically} recursive and well-founded:
\begin{theorem}[Intrinsic Recursivity / Well-Foundedness]\label{thm:well-founded-recursive-coprod} Let $F\colon \C\to \C$ be a functor.
\begin{enumerate}
\item\label{thm:well-founded-recursive-coprod:1} Every ranked $F$-coalgebra is recursive.
\item\label{thm:well-founded-recursive-coprod:2} Every ranked $F$-coalgebra is well-founded.
\end{enumerate}
\end{theorem}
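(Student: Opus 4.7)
The plan is to reduce both parts to \autoref{allCausalRecursive} by lifting the problem to the category $\C^I$. Define the functor $\bar F\colon \C^I \to \C^I$ by $(\bar F X)_i := F(\coprod_{j<i} X_j)$, with the obvious action on morphisms. This functor is well-founded essentially by construction: taking $\bar F_{<i} Y := F(\coprod_{j<i} Y_j)$ for $Y\in \C^{<i}$ makes diagram \eqref{wffun} commute on the nose, so the criterion of \autoref{rem:well-founded-functor-char} is immediate.

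The key observation is that a ranked family $(C_i,c_i)_{i\in I}$ is the same data as an $\bar F$-coalgebra structure on the family $(C_i)_{i\in I}\in \C^I$. For part \ref{thm:well-founded-recursive-coprod:1}, I would show that coalgebra-to-algebra morphisms in the two settings are in bijective correspondence. Given an $F$-algebra $(A,a)$, lift it to an $\bar F$-algebra on the constant family $(A)_{i\in I}$ with structure $\bar a_i := a\cdot F[\id_A]_{j<i}\colon F(\coprod_{j<i} A) \to A$. Using the universal property of coproducts, a family $(h_i\colon C_i\to A)_{i\in I}$ satisfies the $\bar F$-coalgebra-to-algebra condition (relative to $((A)_i,(\bar a_i)_i)$) if and only if the copaired morphism $[h_i]_{i\in I}\colon \coprod_i C_i \to A$ satisfies the $F$-coalgebra-to-algebra condition (relative to $(A,a)$); both unwind to $h_i = a\cdot F[h_j]_{j<i} \cdot c_i$ for every $i$. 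Applying \autoref{allCausalRecursive}\ref{allCausalRecursive:1} to $\bar F$ then yields the unique $\bar F$-coalgebra-to-algebra morphism, which transports to the unique $F$-coalgebra-to-algebra morphism needed.

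For part \ref{thm:well-founded-recursive-coprod:2}, \autoref{allCausalRecursive}\ref{allCausalRecursive:2} already gives well-foundedness of $((C_i)_i,(c_i)_i)$ as an $\bar F$-coalgebra in $\C^I$; the task is to transfer this to the induced coalgebra $(C,c)$ in $\C$. Given a cartesian subcoalgebra $m\colon (S,s)\monoto (\coprod_i C_i, c)$, I would pull $m$ back along each injection $\inj_i\colon C_i\to \coprod_i C_i$ to obtain monomorphisms $m_i\colon S_i\monoto C_i$, check (using the Beck--Chevalley-style interaction of pullbacks with coproducts, which holds in $\Set$ and in our applications) that $(S_i, m_i)_i$ assembles into a cartesian $\bar F$-subcoalgebra of $((C_i)_i,(c_i)_i)$, and then conclude by the already-proved well-foundedness in $\C^I$ that each $m_i$ is iso and hence so is $m\cong \coprod_i m_i$.

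The main obstacle I expect is part \ref{thm:well-founded-recursive-coprod:2}: transferring well-foundedness across coproducts requires more ambient structure on $\C$ than the bare coproducts postulated in \autoref{indexCat}. The part \ref{thm:well-founded-recursive-coprod:1} argument, by contrast, is a clean exercise in the universal property of coproducts once the lifting $\bar F$ is in place, and should translate smoothly to Cubical Agda by reusing the formalized content of \autoref{allCausalRecursive}.
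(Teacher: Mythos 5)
Your treatment of part \ref{thm:well-founded-recursive-coprod:1} is correct and is essentially the paper's argument in inlined form: your $\bar F$ with $(\bar F X)_i = F(\coprod_{j<i}X_j)$ is exactly the well-founded coreflection $G_\downarrow$ of $G=\Delta F\coprod$ used in the paper, and where you establish a direct bijection between $\bar F$-coalgebra-to-algebra morphisms into the lifted constant algebra $\bar a_i = a\cdot F[\id_A]_{j<i}$ and $F$-coalgebra-to-algebra morphisms into $(A,a)$, the paper factors the same transfer through two reusable lemmas: preservation of recursivity under natural transformations (composing with $\epsilon_{G,C}\colon G_\downarrow C\to GC$, \autoref{recursiveNat}) and preservation under determinization along the adjunction $\coprod\dashv\Delta$ (\autoref{determinizationRecursive}). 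Your version is more elementary and self-contained; the paper's is more modular. Both unwind to the same pointwise condition $h_i = a\cdot F[h_j]_{j<i}\cdot c_i$.

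Part \ref{thm:well-founded-recursive-coprod:2} is where your proposal has a genuine gap, and you have correctly located it yourself: transferring well-foundedness from $\C^I$ back to $\C$ by pulling a cartesian subcoalgebra $m\colon (S,s)\monoto(C,c)$ back along the injections $\inj_i$ requires that these pullbacks exist and that $m$ be recovered as $\coprod_i m_i$ (an extensivity-type property). None of this is available under \autoref{indexCat}, which postulates only set-indexed coproducts, so your reduction does not go through at the stated level of generality. The paper avoids the detour through $\C^I$ entirely and argues directly in $\C$: by well-founded induction on $i$ one constructs morphisms $n_i\colon C_i\to S$ with $m\cdot n_i=\inj_i$. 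For the induction step, the inductive hypothesis gives $m\cdot[n_j]_{j<i}=[\inj_j]_{j<i}=\inj_{<i}$, hence $Fm\cdot F[n_j]_{j<i}\cdot c_i = F\inj_{<i}\cdot c_i = c\cdot\inj_i$, so the outer square around the \emph{given} pullback square of the cartesian subcoalgebra commutes and its universal property produces $n_i$. Then $m\cdot[n_i]_{i\in I}=[\inj_i]_{i\in I}=\id_C$ exhibits the monomorphism $m$ as a split epimorphism, hence an isomorphism. Note that this uses only the one pullback that the definition of cartesian subcoalgebra already hands you; no new pullbacks need to exist in $\C$. If you want to salvage your reduction-to-$\C^I$ strategy, you would have to add extensivity (or at least pullbacks of monos along coproduct injections compatible with $F$) as a hypothesis, which would weaken the theorem.
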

 In essence, this an analogue of \autoref{allCausalRecursive} where the indexing occurs at the level of the carrier of the coalgebra rather than at the level of the underlying category. Below we sketch the proof of part \ref{thm:well-founded-recursive-coprod:1}. The idea is to reduce this statement to its indexed counterpart given by \autoref{allCausalRecursive}\ref{allCausalRecursive:1} and to apply the \emph{generalized powerset construction}~\cite{generalizeddeterminization} for coalgebras.

\begin{proof}[Proof Sketch for \ref{thm:well-founded-recursive-coprod:1}]
 Let $(c_i\colon C_i\to F(\coprod_{j<i} C_j))_{i\in I}$ be a ranked family. To prove that its induced coalgebra \eqref{eq:induced-coalgebra} is recursive, we consider the following functor $G$ on $\C^I$ defined by composition, where $\coprod$ is the coproduct functor and $\Delta$ is the diagonal functor given by $(\Delta X)_i=X$:
  \[
    G\equiv\big(
    \begin{tikzcd}
      \C^I
      \arrow{r}{\coprod}
      & \C
      \arrow{r}{F}
      & \C
      \arrow{r}{\Delta}
      & \C^I
    \end{tikzcd}
    \big),
    \qquad
    (GX)_i = (\Delta F \coprod X)_i
    = F\coprod X = F \coprod_{j\in I} X_j.
  \]
  The well-founded coreflection of $G$ is given by
  \[
    G_\downarrow\colon \C^I\to\C^I,
    \qquad
    (G_\downarrow X)_i = (G T_{<i} X)_i
    = F \coprod_{j\in I}
    (T_{<i}X)_j 
    = F \coprod_{j < i} X_j,
  \]
  since $(T_{<i}X)_j=X_j$ for $j<i$ and $(T_{<i}X)_j=0$ otherwise. 
  Thus the given family $(c_i)$ is a $G_\downarrow$-coalgebra
  \[
    c\colon C\to G_\downarrow C
    \qquad\text{ in }\C^I,
  \]
 and by \autoref{allCausalRecursive}\ref{allCausalRecursive:1} this coalgebra is recursive. Composition with the component $\epsilon_{G,C}$ of the natural transformation $\epsilon_{G}\colon G_\downarrow  \to G$ yields a $\Delta F\coprod$-coalgebra
  \[
    \epsilon_{G,C}\cdot c\colon
    \begin{tikzcd}
      C
      \arrow{r}{c}
      & G_\downarrow C
      \arrow{r}{\epsilon_{G,C}}
      & G C
      = \Delta F\coprod C,
    \end{tikzcd}
  \]
  which is again recursive~\cite[Prop.~1]{eppendahlFixedPointObjects2000}. Since the coproduct functor $\coprod$ is left adjoint to the diagonal functor $\Delta$, we get by adjoint transposition the $F$-coalgebra
  \[
    \begin{tikzcd}[column sep=40]
      \coprod C
      \arrow{r}{(\epsilon_{G,C}\cdot c)^\sharp}
      & F \coprod C.
    \end{tikzcd}
  \]
  This is precisely the coalgebra \eqref{eq:induced-coalgebra} induced by the ranked family $(c_i)$. To see that it is recursive, we note that 
  the last step is an instance of the \emph{generalized powerset construction}~\cite{generalizeddeterminization}. One can show that this construction always preserves recursivity; see \autoref{sec:rec-preservation} for more details. 
\end{proof}

\begin{proofappendix}{thm:well-founded-recursive-coprod}
\begin{enumerate}
\item Let $(c_i\colon C_i\to F(\coprod_{j<i} C_j))_{i\in I}$ be a ranked family. To prove that its induced coalgebra \eqref{eq:induced-coalgebra} is recursive, we consider the following functor $G$ on $\C^I$ defined by composition, where $\coprod$ is the coproduct functor and $\Delta$ is the diagonal given by $(\Delta X)_i=X$:
  \[
    G\equiv\big(
    \begin{tikzcd}
      \C^I
      \arrow{r}{\coprod}
      & \C
      \arrow{r}{F}
      & \C
      \arrow{r}{\Delta}
      & \C^I
    \end{tikzcd}
    \big),
    \qquad
    (GX)_i = (\Delta F \coprod X)_i
    = F\coprod X = F \coprod_{j\in I} X_j.
  \]
  Its well-founded coreflection is given by
  \[
    G_\downarrow\colon \C^I\to\C^I,
    \qquad
    (G_\downarrow X)_i = (G T_{<i} X)_i
    = F \coprod_{j\in I}
    (T_{<i}X)_j 
    = F \coprod_{j < i} X_j,
  \]
  since $(T_{<i}X)_j=X_j$ for $j<i$ and $(T_{<i}X)_j=0$ otherwise. 
  Thus the given family $(c_i)$ is a morphism
  \[
    c\colon C\to G_\downarrow C
    \qquad\text{ in }\C^I,
  \]
  and thus a recursive coalgebra (\autoref{allCausalRecursive}). By \autoref{recursiveNat}, its composition with the component $\epsilon_{G,C}$ of the natural transformation $\epsilon_{G}\colon G_\downarrow  \to G$ yields a recursive $\Delta F\coprod$-coalgebra
  \[
    \epsilon_{G,C}\cdot c\colon
    \begin{tikzcd}
      C
      \arrow{r}{c}
      & G_\downarrow C
      \arrow{r}{\epsilon_{G,C}}
      & G C
      = \Delta F\coprod C.
    \end{tikzcd}
  \]
  By \autoref{determinizationRecursive} applied to the adjunction $\coprod\dashv
  \Delta$ (where $\D:=\C^I$), the coalgebra
  \[
    \begin{tikzcd}[column sep=40]
      \coprod C
      \arrow{r}{(\epsilon_{G,C}\cdot c)^\sharp}
      & F \coprod C,
    \end{tikzcd}
  \]
is also recursive. This is precisely the coalgebra \eqref{eq:induced-coalgebra} induced by the ranked family $(c_i)$.
\item Let $(c_i\colon C_i\to F(\coprod_{j<i} C_j))_{i\in I}$ be a ranked family and let $(C,c)$ denote the induced coalgebra given by \eqref{eq:induced-coalgebra}. To prove that $(C,c)$ is well-founded, suppose that $m\colon (S,s)\monoto (C,c)$ is a cartesian subcoalgebra. For each $i\in I$ we prove that 
\begin{equation}\label{eq:pf-goal} \exists n_i\colon C_i\to S.\, \inj_i = m\cdot n_i. \end{equation}
This implies that the monomorphism $m$ is a split epimorphism:
\[ m\cdot [n_i]_{i\in I} = [\inj_i]_{i\in I}= \id_C, \]
whence an isomorphism as required. 

We prove \eqref{eq:pf-goal} by well-founded induction. Let $i\in I$ and suppose that \begin{equation} \exists n_{j}\colon C_{j}\to S.\, \inj_{j} = m\cdot n_{j} \qquad\text{for all $j<i$.} \end{equation}
Then the outside and the right-hand part of the diagram below commute, and the universal property of the pullback yields a unique morphism $n_i$ making the remaining parts commute, proving \eqref{eq:pf-goal}.
\[ 
\begin{tikzcd}
C_i \ar{rrr}{c_i} \ar[dashed]{dr}{n_i} \ar[bend right=20]{ddr}[swap]{\inj_i} & & & F(\coprod_{j<i} C_j) \ar{dl}[swap]{F[n_{j}]_{j< i}} \ar[bend left=20]{ddl}{F\inj_{< i}} \\
& S \pullbackangle{-45} \ar{r}{s} \ar[tail]{d}[swap]{m}  & FS \ar{d}{Fm} & \\
& C \ar{r}{c} & FC & 
\end{tikzcd} \qedhere
\]
\end{enumerate}
\end{proofappendix}

For $\C=\Set$, ranked coalgebras admit a simple characterization by \emph{ranking functions}, which uses the equivalence  $\Set^I\simeq \Set/I$ \eqref{eq:familySlice} and the relation between coalgebras and canonical graphs.

\begin{definition}[Ranking Function]
Let $F \colon \Set\to \Set$ be a functor.
A \emph{ranking function} for an $F$-coalgebra $(C,c)$ is a function $r\colon C\to I$ satisfying the condition
\[ \forall x\in C.\, c(x)\in F(\{ y\in C\mid r(y)<r(x) \}). \]   
\end{definition}
This captures on an abstract level the requirement that transitions strictly decrease the rank.

\begin{theorem}[Characterization of Ranked Coalgebras in $\Set$]\label{thm:ranked-coalg-set}
Suppose that $F\colon \Set\to\Set$ preserves wide intersections. Then for every $F$-coalgebra $(C,c)$, we have that:
\[ \text{$(C,c)$ is ranked} \iff \text{$(C,c)$ has a ranking function} \iff \text{$(C,c)$ is well-founded.} \]
\end{theorem}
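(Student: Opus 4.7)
The plan is to prove a cycle of implications: \textbf{ranked} $\Rightarrow$ \textbf{ranking function} $\Rightarrow$ \textbf{well-founded} $\Rightarrow$ \textbf{ranked}. The first equivalence is essentially a bookkeeping fact about the correspondence $\Set^I \simeq \Set/I$ (cf.\ \eqref{eq:familySlice}), while the link to well-foundedness is mediated through the canonical graph and \autoref{prop:wf-set}.

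For \textbf{ranked} $\Leftrightarrow$ \textbf{ranking function}, I would exploit the equivalence $\Set^I\simeq\Set/I$ directly. Given a ranked family $(c_i\colon C_i \to F(\coprod_{j<i} C_j))_{i\in I}$, the induced coalgebra \eqref{eq:induced-coalgebra} carries the natural indexing $r\colon \coprod_{i\in I} C_i \to I$ sending $\inj_i(x)$ to $i$. Under the identification $\coprod_{j<i} C_j = \{y\in C\mid r(y)<i\}$, the codomain constraint on each $c_i$ is definitionally the ranking condition $c(x)\in F(\{y\mid r(y)<r(x)\})$. Conversely, any ranking function $r\colon C\to I$ stratifies $C$ into the fibers $C_i:=r^{-1}(i)$, and the ranking condition expresses exactly that the restriction of $c$ to $C_i$ factors through $F(\coprod_{j<i} C_j)$, reconstructing a ranked family whose induced coalgebra is $(C,c)$.

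For \textbf{ranking function} $\Rightarrow$ \textbf{well-founded}, let $r$ be a ranking function. Since $F$ preserves wide intersections, the support $\tau_C(c(x))$ is the least $M\seq C$ with $c(x)\in FM$, so the ranking condition forces $\tau_C(c(x))\seq \{y\mid r(y)<r(x)\}$. Hence every edge $x\to y$ of the canonical graph satisfies $r(y)<r(x)$, and any infinite path would yield an infinite descending chain in $(I,<)$, contradicting well-foundedness. By \autoref{prop:wf-set}, $(C,c)$ is well-founded.

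For \textbf{well-founded} $\Rightarrow$ \textbf{ranked}, I would argue in the reverse direction using \autoref{prop:wf-set} to get that the canonical graph has no infinite paths, and then define an ordinal-valued rank by well-founded recursion on this graph:
\[ r(x) := \sup\{\,r(y)+1 \mid y\in \tau_C(c(x))\,\}. \]
Then $r(y)<r(x)$ for each $y\in\tau_C(c(x))$, so $c(x)\in F(\tau_C(c(x)))\seq F(\{y\mid r(y)<r(x)\})$ by inclusion preservation (\autoref{conv:set-functors}). This yields a ranking function into a sufficiently large ordinal, hence (via the first equivalence) a ranked structure. The main obstacle I anticipate is a subtlety of interpretation: since $(I,<)$ is fixed in \autoref{indexCat}, the direction \emph{well-founded} $\Rightarrow$ \emph{ranked} only holds when $I$ is large enough to accommodate the ordinal rank. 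The standard way to handle this is to read the final equivalence up to choice of $I$ (taking $I$ to be the ordinals) or, equivalently, to quantify existentially over $I$ in the clauses \enquote{is ranked} and \enquote{has a ranking function}. The two other implications are uniform in $I$ and pose no such difficulty.
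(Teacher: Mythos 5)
Your proof is correct, but it is organized around a different cycle of implications than the paper's, and the two endpoints of the argument differ in instructive ways. The paper proves \emph{ranked} $\Rightarrow$ \emph{well-founded} by invoking the general categorical result \autoref{thm:well-founded-recursive-coprod} (via cartesian subcoalgebras), whereas you prove \emph{ranking function} $\Rightarrow$ \emph{well-founded} by the concrete $\Set$-level argument that support-decreasing edges of the canonical graph preclude infinite paths; both are fine, and yours is more elementary but less portable beyond $\Set$. The more substantive divergence is in \emph{well-founded} $\Rightarrow$ \emph{ranked}/\emph{ranking function}: you build an ordinal-valued rank by transfinite recursion on the canonical graph, which requires choosing a sufficiently large ordinal as index set, while the paper sidesteps ordinals entirely by taking $I := C$ itself, equipped with the well-founded relation $y < x \iff y \in \tau_C(c(x))$, so that the \emph{identity map} $\id_C$ is already a ranking function. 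This trick makes the subtlety you correctly flag --- that \autoref{indexCat} fixes $(I,<)$, so the equivalence must be read with an existential quantification over the index set --- essentially costless: no cardinality bookkeeping is needed, and the construction is manifestly choice-free. Your observation about the quantifier is a genuine point that the paper leaves implicit, so it is worth retaining; but you should note that ``taking $I$ to be the ordinals'' is not literally available (the ordinals do not form a set), so either a Hartogs-style bound or the paper's $I := C$ device is needed to make that direction precise. The remaining implication (\emph{ranking function} $\Leftrightarrow$ \emph{ranked} via the fibration $C_i := r^{-1}(i)$ and the equivalence $\Set^I \simeq \Set/I$) is identical in both proofs.
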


\begin{proof}
 The first statement implies the third by \autoref{thm:well-founded-recursive-coprod}. To prove that the third statement implies the second, suppose that $(C,c)$ is a well-founded coalgebra. By \autoref{prop:wf-set}, this means precisely that the relation 
\[ \mathord{<}\,\seq\, C\times C \qquad\text{given by}\qquad y < x \iff y\in \tau_C(c(x))   \]
is well-founded. Moreover, the identity map
\[ r=\id\colon C\to C \]     
is a ranking function for $(C,c)$ since, for every $x\in C$,
\[ c(x)\in  F(\tau_C(c(x))) = F(\{ y \in C \mid y<x \}) = F(\{ y\in C \mid r(y)<r(x) \}). \]
Finally, the second statement implies the first: if $r\colon C\to I$ is a ranking function for $(C,c)$, then for each $i\in I$ the map $c\colon C\to FC$ restricts to a map \[c_i\colon r^{-1}(i)\to F(r^{-1}({<}i)) = F(\coprod_{j<i} r^{-1}(j)).\] Then $(c_i)_{i\in I}$ is a ranked family whose induced coalgebra is $(C,c)$.
\end{proof}

\begin{example}[Quicksort]
To prove termination of Quicksort, we consider the coalgebra $c\colon Z^*\to 1+Z^*\times Z\times Z^*$ of \eqref{eq:sort-coalgebra} and the function $r\colon Z^*\to \Nat$ given by $r(w)=|w|$. Equipping $\Nat$ with the usual order $<$, this is a ranking function for $(C,c)$: since $|\epsilon|=0$ and $|w_{\sle p}|, |w_{\sgt p}|<|pw|$, we have
\begin{align*}c(\epsilon)=\inl\in 1=F(\emptyset)=F(\{y\in Z^* \mid r(y)<r(\epsilon)),\\ c(pw)=w_{\sle p} \, p \, w_{\sgt p}\in F(\{y\in Z^* \mid r(y)<r(pw) \}).
\end{align*}
We conclude from \autoref{thm:ranked-coalg-set} that $(C,c)$ is well-founded, hence the Quicksort recursion terminates. This reasoning is essentially a more principled account of the argument in the proof of \autoref{prop:sort-recursive}, where we have already used the above ranking function implicitly.
\end{example}

We have thus demonstrated that the use of ranking functions for proving program termination, standardly applied to transition graphs of programs, extends smoothly from graphs to general coalgebras. The same holds for more advanced ranking techniques. Let us illustrate this for the powerful method of \emph{disjunctive well-foundedness}~\cite{cookPodelskiRybalchenko,pr04}, which brings additional flexibility to termination arguments by working with several ranking functions at the same time. Given a graph $c\colon C\to \Pow C$ and a finite family of functions $r_k\colon C\to I_k$ ($k\in K$), where each index set $I_k$ is equipped with a well-founded relation $<_k$, the following implication holds~\cite[Thm.~1]{pr04}:
\begin{equation}\label{eq:disjunctive-wf} \big[\forall y\in c^{+}(x).\,\exists k\in K.\, r_k(y)<_kr_k(x)\big] \qquad\implies\qquad \text{$(C,c)$ is well-founded}.\end{equation}
Here $c^{+}$ denotes the transitive closure of $c$ (i.e.\ $y\in c^{+}(x)$ iff there exists a non-empty path from $x$ to~$y$). The proof of \eqref{eq:disjunctive-wf} relies on Ramsey's theorem. We obtain the following coalgebraic generalization:

\begin{theorem}[Coalgebraic Disjunctive Well-Foundedness]
Suppose that $F\colon \Set\to\Set$ preserves wide intersections. For every coalgebra $(C,c)$ and every finite family of functions $r_k\colon C\to I_k$ ($k\in K$), where $I_k$ is equipped with a well-founded relation $<_k$, the following implication holds:
\[ \big[\forall x\in C.\, \forall y\in (\tau_C\cdot c)^{+}(x).\,\exists k\in K.\, r_k(y)<_kr_k(x)\big] \qquad\implies\qquad \text{$(C,c)$ is well-founded}.\]   
\end{theorem}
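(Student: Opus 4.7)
The plan is to reduce the statement to the classical disjunctive well-foundedness result \eqref{eq:disjunctive-wf} for graphs by passing to the canonical graph of the coalgebra.

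Concretely, I would first form the canonical graph of $(C,c)$, namely the $\Pow$-coalgebra
\[ g \;\equiv\; \tau_C\cdot c \colon C \to \Pow C. \]
By \autoref{prop:wf-set}, since $F$ preserves wide intersections, $(C,c)$ is well-founded if and only if its canonical graph $(C,g)$ is well-founded. So it suffices to show that $(C,g)$ is well-founded.

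Next, I would observe that the hypothesis of the theorem is phrased precisely in terms of the transitive closure $(\tau_C\cdot c)^{+} = g^{+}$ of the canonical graph. Thus, for every $x\in C$ and every $y\in g^{+}(x)$, there exists $k\in K$ with $r_k(y)<_k r_k(x)$. This is exactly the hypothesis of \eqref{eq:disjunctive-wf} applied to the graph $(C,g)$ together with the finite family $(r_k\colon C\to I_k)_{k\in K}$. Applying \eqref{eq:disjunctive-wf} directly, we conclude that $(C,g)$ is well-founded in the sense of \autoref{ex:wfCoalg}, i.e.\ has no infinite paths. Combining this with \autoref{prop:wf-set} finishes the proof.

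There is essentially no hard step once the reduction is set up: the substance is contained in the Podelski--Rybalchenko theorem \eqref{eq:disjunctive-wf}, which rests on Ramsey's theorem, and all the coalgebraic work needed has already been done in \autoref{prop:wf-set}, which identifies well-foundedness of $(C,c)$ with well-foundedness of its canonical graph. The only minor point to be careful about is that $g^{+}$ on the coalgebra side literally coincides with the transitive closure of the canonical graph, so the hypothesis transports verbatim; this is immediate from $g = \tau_C\cdot c$. Hence the proof is a clean two-line reduction.
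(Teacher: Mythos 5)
Your proposal is correct and matches the paper's own proof exactly: the paper likewise invokes \autoref{prop:wf-set} to identify well-foundedness of $(C,c)$ with that of its canonical graph $(C,\tau_C\cdot c)$, and then observes that the hypothesis is precisely that of the classical disjunctive well-foundedness result \eqref{eq:disjunctive-wf} applied to this graph. Nothing further is needed.
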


\begin{proof}
Since $(C,c)$ is well-founded iff its canonical graph $(C,\tau_C\cdot c)$ is well-founded (\autoref{prop:wf-set}), this statement is immediate from \eqref{eq:disjunctive-wf}.
\end{proof}

\section{Case Studies}\label{sec:case-studies}
We have illustrated how the framework of well-founded functors enables an intrinsically correct \& terminating definition of Quicksort. In the following we showcase the scope of our coalgebraic toolkit by presenting several additional applications, going beyond sorting algorithms.

\subsection{Euclidian Algorithm}\label{sec:euclidian}
The well-known Euclidian algorithm~\cite{knuth97} computes the greatest common divisor of a given pair of
natural numbers $m$ and $n$, using the recursive formula
\[
  \gcd\colon \N\times \N\to \N,
  \qquad
  \gcd(m,n)
  = \begin{cases}
    m&\text{if }n=0,\\
    \gcd(n,m\modop n)&\text{if }n\neq 0.
  \end{cases}
\]
The algorithm itself has been modelled coalgebraically by \citet{taylor99} and in its extended form by \citet{jeanninWellfoundedCoalgebrasRevisited2017}. We give a refined treatment that entails its intrinsic correctness. The overall modus operandi for the coalgebraic correctness proof is very similar to the case of Quicksort laid out in \autoref{sec:prelim} and \ref{sec:well-founded-functors}. First, we identify a suitable functor that models the recursive branching. The two cases in the recursive computation of $\gcd$ are captured by the functor
\[
  F\colon \Set\to\Set,
  \qquad
  FX = \N + \N \times X.
\]
The left coproduct component $\N$ describes the base case of $\gcd$ and the second coproduct component
$\N\times X$ consists of the arguments $X$ to the recursive call and an additional
natural number keeping track of the quotient \(\floor{m/n}\), which is used for correctness.
With this functor, the recursive computation of $\gcd$ can be translated into the following
$F$-coalgebra on $\NgN$:
\begin{equation}\label{eq:coalg-euclid}
  c\colon \NgN \to \underbrace{F (\NgN)}_{\mathclap{\N + \N\times (\NgN)}},
  \qquad
  c(m,n)
  = \begin{cases}
    \inl (m)&\text{if }n = 0,\\
    \inr (\floor{m/n}, (n,m\modop n))&\text{if }n \neq 0.\\
  \end{cases}
\end{equation}
We turn the carrier of the above coalgebra
into a family by considering
the identity $\id\colon \NgN\to \NgN$ as the ranking function, yielding an
object $\Set/\NgN$.
Under the
equivalence $\Set^{\NgN} \cong \Set/\NgN$ \eqref{eq:familySlice} to ${\NgN}$-indexed
families, the carrier $C\in \Set^{\NgN}$ is the $\NgN$-indexed family 
\[
C\in \Set^{\NgN},
\qquad
C = (1)_{(m,n)\in {\NgN}}
\tag{\agdarefcustom{Coalgebra carrier}{GCD}{C}{}},
\]
given by singleton sets at each index. In general, the coalgebra carrier in $\Set^I$ will be the family of singletons
whenever the index set coincides with the input of the desired algorithm
(another such algorithm is \CYK in \autoref{sec:cyk-parsing}). For the
algebra carrier, we use the following family of sets:
\[
A\in \Set^{\NgN},
\qquad
A(m,n)
= \set{g \in \N \;:\; g\mathrel{\mid} m\text{ and } g\mathrel{\mid} n\text{ and } \forall d\colon d\mathrel{\mid} m,~d\mathrel{\mid} n\to d\mathrel{\mid} g}
\tag{\agdarefcustom{Algebra carrier}{GCD}{A}{}}.
\]
At index $(m,n)\in \NgN$, the set $A(m,n)$ describes precisely what we want to construct: an arbitrary
number $g\in \N$ that divides both $m$ and $n$, and moreover, is divided by any
other common divisor of $m$ and $n$. Here, we do not need the fact the greatest
common divisor is unique.
\begin{observation}[Intrinsic Correctness, \agdaref{GCD}{observation}{}]\label{obsGcdCorret}
  Every morphism \[h\colon C \to A \qquad\text{in }\Set^{\NgN}\] returns
  the greatest common divisor.
  For $m,n\in \N$, the function $h_{m,n}\colon C(m,n)\to A(m,n)$ can be invoked
  with $*\in C(m,n)=1$ and returns the desired number $h_{m,n}(*) \in A(m,n)$.
\end{observation}
As in the case for Quicksort, it now suffices to define any
$\NgN$-indexed map $h\colon C\to A$. To this end, we turn the above $F$ into
a functor $G\colon \Set^{\NgN} \to \Set^{\NgN}$. Given $X\in \Set^{\NgN}$,
we define $GX\colon \N\times\N\to \Set$ by the following constructors (instructively called $\inl$ and $\inr$):
\[
  \begin{array}{llll}
    \text{for all } m\in \N
    &\text{ we put }
    &\inl(m) &\in G(X)(m,0)
    \\
    \text{for all } q,m,n\in \N
    , m > n,
    x \in X(m,n)
    &\text{ we put }
    &\inr(q,x) &\in G(X)(q\cdot m + n,m)
  \end{array}
  \tag{\agdarefcustom{$G\colon \Set^{\NgN}\to\Set^{\NgN}$}{GCD}{G}{In the formalization, 
  the proposition $a>b$ is an explicit parameter of $\inr$}}
\]
Via the equivalence $\Set/\NgN\cong \Set^\NgN$ \eqref{eq:familySlice}, the
functor $G$ corresponds to the following functor on the slice category (only listed here for completeness):
\begin{align*}
  &\bar F\colon \Set/\NgN\to \Set/\NgN
  &
  \bar F(X,\fpair{r_1,r_2}) = 
  \N + \N \times \set{x\in X\mid r_1(x) > r_2(x)}
  \subseteq FX
  \\
  &\bar r\colon \bar F(X,\fpair{r_1,r_2})\to \NgN
  &
  \bar r(\inl(a)) = (a, 0)
  \qquad
  \bar r (\inr (q,x)) = (q\cdot r_1(x) + r_2(x), r_1(x))
\end{align*}
Thus far, we have used the index $I:= \NgN $ to intrinsically express the functional correctness of the GCD algorithm in \(\Set^I\).
The functor \(G\) we have defined is also well-founded, given the well-founded relation
\[
  (m_1,n_1)
  < 
  (m_2,n_2)
  \quad
  \Longleftrightarrow
  \quad
  n_1< n_2.
  % \tor (n_1=n_2\tand m_1 < m_2).
  \tag{\agdarefcustom{Relation for GCD}{GCD}{<₂-wellFounded}{}}
\]
\begin{lemma}[{\agdaref{GCD}{G-ε⁻¹}{witnesses the well foundedness}}]
  \label{lem:barFHwellFounded}
  The functor $G\colon \Set^{\NgN}\to \Set^{\NgN}$ is well-founded. 
\end{lemma}
\begin{proofappendix}{lem:barFHwellFounded}
For the textbook proof, we work with the corresponding functor $\bar F$ on the slice category.
We use the criterion of \autoref{prop:lifted-functor-well-founded}. Fix $(m,n)\in \N\times \N$ and $(X,r)\in \Set/{\NgN}$. We need to verify that:
\[\text{Every element of  $\bar r^{-1}(m,n)$ lies in $\N+\N\times r^{-1}({<}(m,n))$}.\]
If $n=0$, then the only element sent by $\bar r$ to $(m,0)$ is $\inl(m)$, which lies in $\N+\N\times r^{-1}({<}(m,0))=\N$. If $n>0$, then $\inr(q,x)$ is sent to $(m,n)$ iff $m=r_1(x)\cdot q+r_2(x)$ and $n=r_1(x)$. Since $r_2(x)<r_1(x)$ by definition of $\bar F_1$, we have $r_2(x)<n$ and therefore $(r_1(x),r_2(x))<(m,n)$ in the well-founded order in $\N\times \N$. This shows that $x$ lies in $r^{-1}({<}(m,n))$, so $\inr(q,x)$ lies in $\N\times r^{-1}({<}(m,n)) \seq \N + \N\times r^{-1}({<}(m,n))$.
\end{proofappendix}

The coalgebra $c\colon \NgN\to F(\NgN)$ \eqref{eq:coalg-euclid} lifts to a
$G$-coalgebra $C\to GC$ in $\Set^{\NgN}$ (\agdarefcustom{Coalgebra for GCD}{GCD}{coalg}{}), which is
equivalent to a $\bar F$-coalgebra in $\Set/{\NgN}$.
The $G$-algebra structure on $A$ is defined by:
\begin{equation}
  \alg\colon GA\to A
  \qquad
  \begin{array}{l@{}l@{\,}l}
    \alg_{(a,0)} & (\inl(a)) &= a \\
    \alg_{(q\cdot a + b,a)}&(\inr(q,g)) &= g
    \quad\text{where }g\in A(a,b)
  \end{array}
  \label{eq:alg-euclid}
\end{equation}
\begin{lemma}[{\agdaref{GCD}{alg}{}}] \label{lem:alg-well-typed}
  The $G$-algebra in \eqref{eq:alg-euclid} is well-typed.
\end{lemma}
%\begin{proofappendix}{lem:alg-well-typed}
\begin{proof}
The algebra in $\Set^{\NgN}$ is a family of maps $\alg_{m,n}\colon G(A)(m,n)\to A(m,n)$ for $(m,n)\in \NgN$.
We show that the pattern match is well-defined and that the image indeed is in $A(m,n)$.
By
the definition of $G$, we know that $(m,n)$ is of the form $(a,0)$ or $(q\cdot
a+b, a)$ (with $a > b \ge 0$) and each of these cases admits only one case for the
argument of type $G(A)(m,n)$:
\begin{enumerate}
\item In the case $\inl(a)$, the number $a$ is the greatest common divisor of 
$m=a$ and $n=0$.
\item In the case $\inl(q,g)$ we have $g\in A(a,b)$ and
\[
  m = q\cdot a + b
  \qquad
  n = a.
\]
The parameter provides that $g$ is a greatest common divisor of $a$ and $b$ and we now need to show
that it is also the greatest common divisor of $m$ and $n$.
\begin{itemize}
\item Since $g$ divides $a$ and $b$, it also divides $m=q\cdot a + b$.
\item Any other common divisor $d$ of $m$ and $n=a$, also divides $q\cdot a$ and so also $b = m - q\cdot a$.
Hence, $d$ divides $g$ by $g\in A(a,b)$.
\qedhere
\end{itemize}
\end{enumerate}
\end{proof}

We are ready to prove the intrinsic correctness of the Euclidian algorithm:

\begin{theorem}[Correctness of the Euclidian Algorithm, \agdaref{GCD}{gcd}{}]\label{thm:euclid-correctness}
The Euclidian algorithm computes for any two input numbers $m$ and $n$ the greatest common divisor of $m$ and $n$.
\end{theorem}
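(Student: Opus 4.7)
The plan is to mirror the reasoning used for Quicksort in \autoref{thm:correctness-quicksort}, now using the $\bar F$-coalgebra and $\bar F$-algebra structures of \autoref{lem:alg-coalg-lift-euclid} together with the well-foundedness of $\bar F$ (\autoref{lem:barFHwellFounded}) and the intrinsic correctness \autoref{obsGcdCorret}. Concretely, I would first invoke \autoref{allCausalRecursive}\ref{allCausalRecursive:1}: since $\bar F\colon \Set/I\to \Set/I$ is well-founded, every $\bar F$-coalgebra is recursive, in particular the one given by $((\NgN,\id),c)$.

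Recursivity yields a unique coalgebra-to-algebra morphism $h$ fitting into the square
\[
  \begin{tikzcd}
    (\NgN,\id) \ar[dashed]{r}{h} \ar{d}[swap]{c} & (A,r_A) \\
    \bar F(\NgN,\id) \ar{r}{\bar F h} & \bar F(A,r_A) \ar{u}[swap]{a}
  \end{tikzcd}
  \qquad\text{ in } \Set/I.
\]
Because $h$ is a morphism in the slice category $\Set/I$, it satisfies $r_A\cdot h = \id$, so \autoref{obsGcdCorret} applies: for every $(m,n)\in \NgN$, if $h(m,n)=(g,(k,\ell))$ then $g$ divides both $m$ and $n$, and any common divisor $d$ of $m$ and $n$ also divides $g$ via the Bézout witnesses stored in $\NcN$. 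Thus the first component of $h$ is the greatest common divisor, giving intrinsic correctness from the type alone.

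It remains to observe that $h$ is precisely the map computed by the Euclidian algorithm. This is forced by the commuting square above together with the concrete definitions of $c$ \eqref{eq:coalg-euclid} and $a$ \eqref{eq:alg-euclid}: unfolding the case $n=0$ yields $h(m,0)=a(\inl(m))=(m,(1,0))$, and unfolding the case $n\neq 0$ yields $h(m,n)=a(\inr(\floor{m/n},h(n,m\modop n)))$, which is exactly the recursive equation of $\gcd$ (decorated with the auxiliary Bézout data). Projecting onto the first component therefore recovers the usual $\gcd$, completing the correctness proof.

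I do not expect a serious obstacle: the hard work has already been done in establishing \autoref{lem:barFHwellFounded} and \autoref{lem:alg-coalg-lift-euclid}, and the well-typedness of $a$ in $\NcN$ \eqref{aIsWellTyped} is what makes \autoref{obsGcdCorret} bite. The only mild subtlety is checking that the pivot-like auxiliary data $q=\floor{m/n}$ travels correctly through $\bar F$, but this is taken care of by the indexing $\bar r$ on $\bar F_1$ and is exactly the content of the second triangle in \autoref{lem:alg-coalg-lift-euclid}.
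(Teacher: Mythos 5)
Your proposal matches the paper's proof: both invoke well-foundedness of $\bar F$ (\autoref{lem:barFHwellFounded}) to get recursivity of $((\NgN,\id),c)$ via \autoref{allCausalRecursive}, obtain the unique coalgebra-to-algebra morphism $h$ into $((A,r_A),a)$, identify it with the map computed by the Euclidian algorithm by unfolding $c$ and $a$, and conclude correctness from \autoref{obsGcdCorret}. The extra detail you give on unfolding the two cases of the recursion is a harmless elaboration of the paper's one-line remark.
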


\begin{proof}Since the functor $G$ is well-founded, the $G$-coalgebra $\coalg\colon C\to GC$ is recursive (\autoref{allCausalRecursive}). Hence it induces a unique coalgebra-to-algebra morphism $h \colon C \to A$ in \(\Set^{\NgN}\).
By \autoref{obsGcdCorret} we conclude that this map yields the greatest common divisor.
  \end{proof}

\takeout{
The above reasoning carries over to the \emph{extended} Euclidian algorithm, which in addition to the greatest common divisor $g$ of $m,n\in\N$ also computes integers $s,t\in \Z$ such that
\[
  s\cdot m + t\cdot n = g.
\]
These are precisely the witnesses $s,t$ satisfying $s\cdot k + t\cdot \ell = 1$
from the definition of $\N\coprime \N$ in the above algebra $(A,a)$:
\[
  s\cdot k + t\cdot \ell = 1
  \quad\Longrightarrow\quad
  s\cdot \underbrace{g\cdot k}_{m} ~+~ t\cdot \underbrace{g\cdot \ell}_{n} = g
\]
  One obtains the extended Euclidian algorithm directly by internalizing these
  witnesses $s,t\in \Z$ in the carrier of the
  $F$-algebra: we extend the above algebra $A$ to
  \[
    B := \set{(g,k,\ell,s,t)\in \N\times\N\times\N\times \Z\times \Z \mid s\cdot k + t\cdot \ell = 1}
  \]
  \text{with rank}
  \[
    r_B\colon B\to \N\times \N,
    \qquad
    r_B(g,k,\ell,s,t) = r_A(g,k,\ell).
  \]
  So in particular, $(k,\ell) \in \N\coprime\N$. The $F$-algebra structure on $B$ is:
  \[
    b\colon \N+\N\times B \to B,
    \qquad
    \begin{array}{rl}
    b(\inl(m))
    &= (a(\inl(m)), (1,0)),
    \\
    b(\inr(q,k,\ell,s,t))
    &= (a(\inr(g,k,\ell)), t, s-t\cdot q).
    \end{array}
  \]
  On the first three components of the algebra, the computation by $b$ is identical to $a$ above.
  The computation of the witnesses $s,t$ reflects the witnesses from the
  well-typedness of $a$ above~\eqref{aIsWellTyped}.
  Likewise, $b$ inherits the property of preserving the rank, so we obtain a coalgebra-to-algebra morphism
  \[
    \eea\colon (\N\times\N,\id)
    \longrightarrow (B,r_B)
    \qquad\text{in }\Set/I
  \]
  that necessarily computes the extended Euclidian algorithm.
\begin{corollary}
  The unique coalgebra-to-algebra morphism $(\N\times \N,c)\to (B,b)$
  computes the extended Euclidian algorithm.
\end{corollary}
}

\subsection{CYK Parsing}
\label{sec:cyk-parsing}
The \CYK algorithm~\cite{hopcroft1979} determines whether a given input string
can be parsed by a context-free grammar in Chomsky normal form (CNF). In the following, fix a context-free grammar with a set~$V$ of non-terminals and a set $\Sigma$ of terminals. The grammar is in CNF if all its rules are of the form
\[ P\to QT~(P,Q,T\in V) \qquad\text{or}\qquad P\to \sigma~(\sigma\in \Sigma).  \]  
We write $P\derive w$ if the word $w$ over $\Sigma$ is derivable from the non-terminal $P$ (\agdarefcustom{CNF Semantics}{Context-Free-CNF-Grammar}{Semantics}{}). Note that only non-empty words are derivable, that is, $w\in \Sigma^+$. The \CYK algorithm computes the map
\[ \CYK \colon \Sigma^+\to \Powf(V), \qquad w\mapsto \{ P\in V\mid P\derive w \}, \]
where $\Powf$ is the finite powerset functor.
With this map, parsability is easy to decide: if the grammar has a starting symbol $S\in V$, a word $w$ is parsable iff $S\in \CYK(w)$.  
The map \CYK itself is computed via the following recursive procedure:
\begin{enumerate}
\item $\CYK(\sigma)$, for $\sigma\in \Sigma$, is the set of those non-terminals $P$ with a
  rule $P\to \sigma$.
\item $\CYK(w)$, for $w\in \Sigma^+$ of length at least 2, is the set of all non-terminals $P$ for which there exists a rule $P\to QT$ and a decomposition $w=uv$ ($u,v\in \Sigma^+$) with $Q\in \CYK(u)$ and $T\in \CYK(v)$.
\end{enumerate}
When considering the call graph of this recursion, e.g.\ for $abcd \in
\Sigma^4$, then $\CYK(bc)$ is recursively called multiple times.
Thus, the usual presentation of the \CYK~algorithm uses
dynamic programming and computes the recursion bottom up, by
iterating over all subwords $v$ of $w$ and computing $\CYK(v)$. The result is
then saved in an array so that the result can be used directly when computing
$\CYK(v')$ for longer subwords $v'$. This algorithm is cubic in
the length of $w$. One can achieve the same run time in the recursive version by memoisation of intermediate results (in an array of the same type as in the iterative version) and thus short-circuiting calls for the same subword. 

In order to capture $\CYK$ and its intrinsic correctness in our coalgebraic framework, we stick with the recursive formulation and index over the set of non-empty words:
\[
  I := \Sigma^+ \qquad \text{with}\qquad v<w \iff |v|<|w|.
\]
We model \CYK within the category $\Set^{\Sigma^+}$ of $\Sigma^+$-indexed families. In the present setting, unlike for Quicksort (\autoref{sec:quicksort-correctness}) and the Euclidian algorithm (\autoref{sec:euclidian}), this category is a more convenient foundation than the equivalent slice category $\Set/{\Sigma^+}$. We consider the following families:
\[
  C\in \Set^{\Sigma^+},
  \quad
  C_w := 1,
  \qquad
  A\in \Set^{\Sigma^+},
  \quad
  A_w := \set[\big]{\set{P\in V\mid P\derive w}}.
  \tag{\agdarefcustom{C, A}{CYK}{A}{Both the definitions of $C$ and $A$}}
\]
The family $C$ is constantly the singleton set, and the family $A$ also contains singletons only.

\begin{observation}[Intrinsic Correctness]\label{obsCYKCorrect}
Every morphism
\[
  h\colon C\longrightarrow A
  \qquad
  \text{in }
  \Set^{\Sigma^+}
\]
provides for each $w\in \Sigma^+$ a map
$h_w\colon 1\to A_w$, that is, an element of $A_w$. Therefore, for each $w$,
the map $h_w$ necessarily yields 
the set of all non-terminals from which $w$ is derivable.
\end{observation}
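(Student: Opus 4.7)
The plan is extremely short because the statement is almost definitionally true, with all the substantive work having been absorbed into the definition of the target family $A$. Concretely, $A_w$ is defined as $\{\{P \in V \mid P \derive w\}\}$, a one-element set whose unique element is the specification of what \CYK should return on input $w$. A morphism $h\colon C \to A$ in $\Set^{\Sigma^+}$ is by definition an $I$-indexed family of maps $h_w\colon C_w \to A_w$, and since $C_w = 1$, each $h_w$ is a map from a singleton to a singleton.

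The single substantive step is therefore the observation that there is exactly one function $1 \to A_w$, so $h_w$ is forced: its value on the unique element of $1$ must be the unique element of $A_w$, namely the set $\{P \in V \mid P \derive w\}$. Reading $h_w(\ast)$ as the output the algorithm produces on the input $w$, this is precisely the set of non-terminals derivable to $w$, which is the specification of \CYK. Thus $h$ witnesses correctness pointwise, at every input $w \in \Sigma^+$.

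The only possible subtlety is to make sure the reader sees that no commutativity condition needs to be checked: because the index category is discrete ($\Sigma^+$-indexed sets), a morphism in $\Set^{\Sigma^+}$ is literally just a family of maps with no naturality-like constraints, so the whole argument reduces to this fibrewise cardinality observation. Consequently there is no real obstacle in proving this observation; the actual work lies elsewhere, namely in constructing such an $h$ by instantiating our well-founded-functor framework with a suitable endofunctor $\bar F$ on $\Set^{\Sigma^+}$ whose coalgebra on $C$ performs the recursive splitting $w = uv$ and whose algebra on $A$ reassembles the non-terminal sets according to the rules of the grammar.
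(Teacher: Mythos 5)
Your proposal is correct and matches the paper's own (inline) justification exactly: since $A_w$ is a singleton whose unique element is $\{P\in V\mid P\derive w\}$, any map $h_w\colon 1\to A_w$ is forced to pick out that element, and a morphism in $\Set^{\Sigma^+}$ is just such an indexed family with no further conditions. Nothing is missing.
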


The recursive step of the \CYK algorithm, which considers all possible decompositions of the input word, is modelled by the functor 
  \[
    G\colon \Set^{\Sigma^+}\to  \Set^{\Sigma^+}\qquad\text{given by}\qquad (GX)_w = \prod_{\substack{u,v\in\Sigma^{+}\\ uv = w}}
      (X_u\times X_v).
    \tag{\agdarefcustom{Functor\ $G$ for \CYK}{CYK}{F₀}{see also \texttt{F₁} }}
  \]
  In particular, for $\sigma\in \Sigma$, there are no $u,v\in\Sigma^+$ with
  $uv = \sigma$ and so $(GX)_\sigma = 1$. The functor $G$ is well-founded since $(G-)_w$ depends only on input components indexed by words shorter than $w$. 

The constant family $C$ carries a canonical $G$-coalgebra structure (where
$1=\set{*}$):
\[
  c\colon C\to GC,
  \qquad
  c_w\colon 1 
  \rightarrow (GC)_w,
  \qquad
  c_w(*) = ((u,v) \mapsto \underbrace{(*,*)}_{\in C_u\times C_v})
  \in (GC)_w.
  \tag{\agdarefcustom{Coalgebra structure $c$}{CYK}{coalg}{}}
\]
The main work by the $\CYK$ algorithm is performed when combining the results
from the recursion, which is reflected by the $G$-algebra structure
\[
 a\colon GA\to A,\qquad a_w\colon \underbrace{\prod_{\substack{u,v\in\Sigma^{+}\\ uv = w}} A_u\times A_v}_{(GA)_w}\to A_w,
\]
\[
  a_w(p)
  =
  \begin{array}[t]{@{}l@{}}
  \set{P\in V\mid \exists \sigma\in \Sigma: w = \sigma, P\to \sigma  } \\
  \cup \set{P\in V\mid P\to QT, \exists uv=w\colon Q\in \prl(p(u,v)), T\in \prr(p(u,v))}.
  \end{array}
\]
Here, we regard an element $p$ of the product as a dependent function which sends $(u,v)$ with
$uv=w$ to the respective component $A_u\times A_v$, and $\pr_1$ and $\pr_2$ denote the left and right projections of $A_u\times A_v$. The key to the correctness proof for \CYK is the following lemma, which amounts to verifying that~$a$ is a well-typed morphism:
\begin{lemma}[\agdaref{CYK}{alg}{Both the Definition of the algebra and the correctness proof}]
  \label{lemCYKAwelltyped}
  For all $p\in (GA)_w$, the set $a_w(p)$ is equal to the (unique) element of $A_w$, that is,
    \[
   \forall P\in V.\, (P \in a_w(p)
    \Longleftrightarrow
    P\derive w).
 \]
\end{lemma}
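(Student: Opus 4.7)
The plan is to unwind the definitions, observe that $p \in (GA)_w$ is essentially uniquely determined, and then split into the two structural cases (length one vs.\ length at least two) permitted by Chomsky normal form.

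The first step is to record that, since $A_u$ and $A_v$ are singletons with sole elements $\{Q \in V \mid Q \derive u\}$ and $\{T \in V \mid T \derive v\}$ respectively, any $p \in (GA)_w$ satisfies $p(u,v) = (\{Q \mid Q \derive u\},\, \{T \mid T \derive v\})$ for every decomposition $uv = w$ with $u,v \in \Sigma^+$. Consequently, the membership condition $Q \in \prl(p(u,v))$ appearing in the definition of $a_w(p)$ is equivalent to $Q \derive u$, and likewise for $T$ and $v$. Thus $a_w(p)$ simplifies to
\[
\{P \mid \exists \sigma \in \Sigma.\ w = \sigma \wedge P \to \sigma\} \cup \{P \mid \exists (P \to QT).\ \exists uv = w.\ Q \derive u \wedge T \derive v\}.
\]

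Next, I would split on the length of $w$. If $w = \sigma \in \Sigma$, there is no decomposition $uv = w$ with $u,v \in \Sigma^+$, so the second component of the union is empty, and $P \in a_w(p)$ reduces to $P \to \sigma$. By the CNF semantics, a one-letter word $\sigma$ is derivable from $P$ iff a terminal rule $P \to \sigma$ exists, which gives the equivalence. If $|w| \geq 2$, the first component of the union is empty (no terminal equals $w$), and I would appeal to the standard inversion principle for CNF derivations: any derivation $P \derive w$ of a word of length at least two must begin with a binary rule $P \to QT$ followed by derivations $Q \derive u$ and $T \derive v$ with $uv = w$ and $u,v \in \Sigma^+$ (terminal rules can only produce one-letter words). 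Conversely, any such combination yields a derivation $P \derive w$.

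The only genuine obstacle is making the CNF inversion argument precise enough to satisfy the formalization: one needs both the forward direction (every derivation of a multi-letter word decomposes canonically) and the backward direction (a binary rule plus two subderivations assembles to a derivation). Both should be available as direct consequences of the inductive definition of \(\derive\) referenced in the paper, so I expect the lemma to reduce to a clean case analysis with no deeper combinatorial content.
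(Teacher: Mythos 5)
Your proposal is correct and matches the paper's proof in all essentials: both use the fact that $A_u$ and $A_v$ are singletons to translate $Q\in \prl(p(u,v))$ into $Q\derive u$ (and likewise for $T$), and both conclude by inverting a CNF derivation into either a terminal rule or a binary rule with two subderivations. The only cosmetic difference is that you organize the case split by the length of $w$, whereas the paper splits directly on which component of the union $P$ lies in (for the forward direction) and on the first rule of the derivation (for the converse); the disjointness of the two cases that your length split establishes is not actually needed.
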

\begin{proofappendix}{lemCYKAwelltyped}
We need to show that $a_w(p)$ is the unique element $\set{P\in V\mid P\derive w}$ of $A_w$; that is,
  \[
    P \in a_w(p)
    \quad
    \Longleftrightarrow
    \quad
    P\derive w
    \qquad\text{for all }P\in V.
  \]
To prove ($\Longrightarrow$), let $P\in a_w(p)$. We distinguish two cases, corresponding to the definition of $a_w(p)$:
  \begin{enumerate}
  \item $P\to QT$ and there is $(u,v)$ with $u\,v = w$ and $Q\in \prl(p(u,v))$ and $T\in \prr(p(u,v))$. Note that $\prl(p(u,v))$ and $\prr(p(u,v))$ are the unique elements of the singletons $A_u$ and $A_v$, respectively. By definition of these sets, this means that $Q \derive u$ and $T \derive v$.  Hence
  \[
    P \to QT
    \derive u\,v = w.
  \]
  \item There is $\sigma\in \Sigma $ with $w = \sigma$ and $P\to \sigma $. Then we immediately have $P\derive \sigma = w$.
  \end{enumerate}
To prove ($\Longleftarrow$), suppose that $P\derive w$. We distinguish on the first rule in the derivation.
  \begin{enumerate}
  \item If the derivation $P\derive w$ is of the form $P\to \sigma$ for
  some $\sigma\in \Sigma$, then $\sigma=w$ and $P \in a_w(p)$.
  \item If the derivation $P\derive w$ starts with the rule 
$P\to QT$, then there must be $u,v\in \Sigma^+$ with $uv=w$ and $Q\derive u$ and $T\derive v$.
  Then $p(u,v) \in A_u\times A_v$, and therefore $Q\in \prl(p(u,v))$ and $T\in
  \prr(p(u,v))$ by definition of $A_u$ and $A_v$. This proves $P\in a_w(p)$ by definition of $a_w(p)$.
  \qedhere
  \end{enumerate}
\end{proofappendix}
The proof of this lemma essentially contains the reasoning underlying the usual \enquote{textbook} proof for the correctness of
 \CYK~\cite{hopcroft1979}. The benefit of the present approach is that it isolates this reasoning inside a conceptual statement, in this case the well-typedness of the algebra structure $a$. 
 
The intrinsic correctness of \CYK now easily follows:

\begin{theorem}[Correctness of \CYK, \agdaref{CYK}{CYK}{The morphism arises from the morphism $\epsilon^{-1}$. We do not obtain any property about structure preservation, but do not need that because any morphism $C\to A$ is sufficient by above \autoref{obsCYKCorrect}}]\label{thm:cyk-correct}
  The \CYK algorithm computes for each input word $w\in \Sigma^+$ the set of all non-terminals $P\in
  V$ with $P\derive w$.
\end{theorem}
\begin{proof}
Since $(C,c)$ is recursive, we have the unique coalgebra-to-algebra morphism
  \(h \colon C \to A\) in \(\Set^{\Sigma^+}\).
Note that by definition of the maps $c$ and $a$, the set $\CYK(w)$ of non-terminals recursively computed by the \CYK algorithm is precisely $h_w(*)=(a_w\cdot (Gh)_w \cdot c_w)(*)$. Moreover, by \autoref{obsCYKCorrect}, we know that $h_w(*)=\set{P \in V\mid P\derive w}$. This proves that \CYK is correct. 
\end{proof}

\subsection{Hydra Game}
While the previous case studies have been concerned with intrinsic correctness of algorithms, we next present an example of a non-trivial proof of \emph{termination} that illustrates the technique of coalgebraic ranking functions (\autoref{sec:ranked-coalgebras}). We consider the \emph{Hydra game}~\cite{KirbyParis1982}, a one-player game played on a finite rooted tree. In each round, the player non-deterministically chooses a pair $(l,n)$ of a leaf $l$ and a natural number $n$ and then modifies the tree as follows:

\begin{enumerate}
  \item\label{hydra-1} If $l$ has a grandparent (i.e.\ the parent of $l$ is not the root), add $n$ new leaves to the grandparent.
  \item\label{hydra-2} Delete $l$.
\end{enumerate}

The Hydra game terminates once the player reaches a tree consisting only of the root. Although the tree can grow rapidly during the game (in fact, it can grow faster than any recursive function that is provably total in Peano arithmetic), the game eventually terminates. This is a classic example of a true statement that is unprovable in Peano arithmetic~\cite{KirbyParis1982}. In coalgebraic parlance, the termination result can be phrased as follows. Let $\Trees$ denote the set of finite rooted trees. Form the coalgebra
\begin{equation}\label{eq:coalg-hydra} h\colon \Trees \to \Pow(\Trees) \end{equation}
that sends a tree $T$ to the set of all possible trees emerging from $T$ in one round of the Hydra game, i.e.\ by choosing a pair $(l,n)$ of a leaf of $T$ and a natural number and applying the above modifications \ref{hydra-1} and \ref{hydra-2}. The tree $T$ consisting only of the root has $h(T)=\emptyset$. Termination of the Hydra game then corresponds precisely to the following coalgebraic statement:

\begin{theorem}\label{thm:hydra-well-founded}
The coalgebra $(\Trees,h)$ is well-founded.
\end{theorem}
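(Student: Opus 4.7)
My plan is to apply the coalgebraic characterization of well-foundedness via ranking functions from \autoref{thm:ranked-coalg-set}. Since $\Pow$ preserves wide intersections, this reduces the theorem to exhibiting a ranking function $r\colon \Trees \to I$ into some well-founded $(I,{<})$ such that $T' \in h(T)$ implies $r(T') < r(T)$.

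For $I$, I would take the set of ordinals below $\omega^{\omega}$ under the standard well-order. Given $T \in \Trees$, let $n_i(T)$ denote the number of nodes at depth $i$ (the root being at depth $0$) and let $D(T)$ be the maximum $i$ with $n_i(T) > 0$. Define
\[ r(T) \;=\; \omega^{D(T)-1}\cdot n_{D(T)}(T) + \omega^{D(T)-2}\cdot n_{D(T)-1}(T) + \cdots + \omega\cdot n_2(T) + n_1(T), \]
which is already in Cantor normal form and thus well-defined in $I$.

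The core of the argument is to verify the ranking condition by case analysis on a move from $T$ to $T' \in h(T)$. In the case where the chopped leaf is a child of the root, only $n_1$ changes, decreasing by $1$, so the least coefficient of the Cantor normal form strictly decreases and hence $r(T') < r(T)$. In the case where the chopped leaf sits at depth $d \geq 2$ and $n$ new leaves are attached to the grandparent at depth $d-1$, one has $n_d \to n_d - 1$ and $n_{d-1} \to n_{d-1} + n$, while all other $n_i$ are unchanged. In the Cantor normal form this strictly decreases the coefficient of $\omega^{d-1}$ while only increasing the coefficient of a strictly lower power $\omega^{d-2}$; lexicographic comparison of Cantor normal forms, combined with the identity $\omega^{d-1} > \omega^{d-2}\cdot k$ for every $k \in \Nat$, then yields $r(T') < r(T)$.

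The main obstacle is bookkeeping rather than mathematical depth: one must carefully handle the edge case $d = D(T)$ where the maximum depth itself may drop (when the chopped leaf was the unique deepest node, possibly forcing the leading term of the CNF to disappear), and observe that the parent of a chopped leaf becoming childless does not alter any $n_i$ because $n_i$ counts all nodes at depth $i$, not just leaves. Once the ranking condition is verified, an immediate appeal to \autoref{thm:ranked-coalg-set} concludes that $(\Trees, h)$ is well-founded.
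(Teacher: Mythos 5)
Your proof is correct and follows essentially the same route as the paper: both reduce the claim to exhibiting a ranking function into a well-founded order isomorphic to $\omega^{\omega}$ (your Cantor-normal-form ordinals are order-isomorphic to the paper's finite-support sequences $\Nat^{\omega,0}$ under max-index-of-disagreement lexicographic comparison) and verify that each Hydra move strictly decreases the rank. The only difference is that you count all nodes at each depth while the paper counts leaves; your choice is marginally cleaner since, as you note, it sidesteps the bookkeeping for a parent that becomes a leaf after its last child is chopped.
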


\begin{proof}
By \autoref{thm:well-founded-recursive-coprod} it is enough to show that $(\Trees,h)$ is a ranked coalgebra. We take 
\[  I:=\Nat^{\omega,0} = \text{infinite sequences of natural numbers that are eventually $0$},\]
equipped with the (well-founded) lexicographical order 
\[a<b \qquad\text{iff}\qquad \text{$a\neq b$ and $a_m<b_m$ where $m=\max \{ i \mid a_i\neq b_i \}$}.\]
The \emph{rank} of a tree is the sequence $a\in \Nat^{\omega,0}$ where $a_i$ is the number of leaves of depth $i$. (The \emph{depth} of a node $x$ in a tree is the length of the unique path from the root to $x$.)
Note that if the tree $T$ has rank $a=(a_0,a_1,\ldots,a_{k-1},a_k,0,0,0,\ldots)$ where $a_k\neq 0$, then every successor tree $T'\in h(T)$ has a rank of the form
$a'=(a_0',a_1',\ldots,a'_{j-1},a_{j}-1,a_{j+1},\ldots ,a_k,0,0,0,\ldots)$, where $j$ is the depth of the deleted leaf, whence $a'<a$. This shows that the map $r\colon \Trees\to \Nat^{\omega,0}$ sending a tree to its rank is a ranking function for $(\Trees,h)$.
\end{proof}

By using a slightly different coalgebraic model, we can also capture the complexity (measured by the number of rounds until termination) of the Hydra game. We consider the functor $FX=(\Powf X)^\Nat$ and
the coalgebra 
\[ \ol{h}\colon \Trees \to (\Powf \Trees)^\Nat \]
such that $\ol{h}(T)(n)$ is the set of possible successor trees of $T$ in the Hydra game when the player chooses a pair of the form $(-,n)$. Note that the number of elements of $\ol{h}(T)(n)$ is bounded from above by the number of leaves of $T$, so it is a finite set.

\begin{proposition}\label{thm:hydra-recursive}
The coalgebra $(\Trees,\ol{h})$ is recursive.
\end{proposition}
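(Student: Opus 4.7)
The plan is to reduce recursivity of $(\Trees,\overline{h})$ to \autoref{thm:well-founded-recursive-coprod}\ref{thm:well-founded-recursive-coprod:1} by exhibiting $(\Trees,\overline{h})$ as a ranked $F$-coalgebra, where $FX = (\Powf X)^\Nat$. The ranking function we will use is the very same map $r\colon \Trees \to \Nat^{\omega,0}$ from the proof of \autoref{thm:hydra-well-founded}, sending a tree to the eventually-zero sequence counting leaves by depth, with the set $I = \Nat^{\omega,0}$ equipped with the well-founded lexicographical order.

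The key observation is that the constructions $\overline{h}$ and $h$ see the same set of successor trees: for every $T\in \Trees$ we have $\bigcup_{n\in\Nat} \overline{h}(T)(n) = h(T)$. Therefore the argument in the proof of \autoref{thm:hydra-well-founded} applies verbatim to show that for all $n\in\Nat$ and all $T'\in \overline{h}(T)(n)$, we have $r(T') < r(T)$. Consequently $\overline{h}$ restricts, for each $a\in I$, to a map
\[ c_a\colon r^{-1}(a) \longrightarrow F\bigl(r^{-1}({<}a)\bigr) = F\Bigl(\coprod_{b<a} r^{-1}(b)\Bigr), \]
simply because each set $\overline{h}(T)(n)$ is a subset of $\{T' \mid r(T')<r(T)\}$, and hence lies in $\Powf(r^{-1}({<}a))$ when $r(T)=a$.

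The family $(c_a)_{a\in I}$ is then a ranked family whose induced coalgebra \eqref{eq:induced-coalgebra} coincides with $(\Trees,\overline{h})$ under the bijection $\Trees \cong \coprod_{a\in I} r^{-1}(a)$. By \autoref{thm:well-founded-recursive-coprod}\ref{thm:well-founded-recursive-coprod:1}, ranked coalgebras are recursive, so $(\Trees,\overline{h})$ is recursive as claimed. Alternatively, one could invoke \autoref{thm:ranked-coalg-set} directly (noting that $FX=(\Powf X)^\Nat$ preserves wide intersections, as both $\Powf$ and $(-)^\Nat$ do), obtaining rankedness and hence recursivity in one step. The main point requiring care is verifying that \emph{all} the finitely many successor trees produced by the $n$-th component of $\overline{h}(T)$ are strictly smaller in rank than $T$; this follows immediately from the leaf-depth analysis already carried out for $h$.
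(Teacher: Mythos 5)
Your proposal is correct and follows essentially the same route as the paper: the paper's proof simply observes that the ranking function $r\colon \Trees\to\Nat^{\omega,0}$ from \autoref{thm:hydra-well-founded} is also a ranking function for $(\Trees,\ol{h})$ and then invokes \autoref{thm:well-founded-recursive-coprod}. You spell out the details the paper leaves implicit (that each $\ol{h}(T)(n)\seq h(T)$ consists of strictly smaller-ranked trees, and how the ranking function yields a ranked family), which is a faithful elaboration rather than a different argument.
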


\begin{proof}
The function $r\colon \Trees\to \Nat^{\omega,0}$ defined in the proof of \autoref{thm:hydra-well-founded} is also a ranking function for $(\Trees,\ol{h})$. Therefore this coalgebra is recursive by \autoref{thm:well-founded-recursive-coprod}.
\end{proof}
 A \emph{strategy} for the player is a sequence $\sigma\in \Nat^\omega$. We say that the player \emph{follows} $\sigma$ if in each round~$i$ before termination, a pair of the form $(l,\sigma_i)$ is chosen, that is, the number of leaves added to $l$'s grandparent (if any) is $\sigma_i$. The complexity of the Hydra game is measured by the function 
\[\maxsteps\colon \Trees \to \Nat^{\Nat^\omega}\] where $\maxsteps(T)(\sigma)$ is the maximum number of steps until termination that the game on $T$ takes when the player follows the strategy $\sigma$. To see that this maximum actually exists and thus the function $\maxsteps$ is total, observe that it adheres to the equation
\[
\maxsteps(T)(n:\sigma) = \max \{ \maxsteps(T')(\sigma) : \text{$T'$ successor of $T$ for any pair $(l,n)$} \}.
\]
Here $n : \sigma$ denotes the infinite sequence with head $n\in \Nat$ and tail $\sigma\in \Nat^\omega$. The above equation says precisely that $\maxsteps$ is a coalgebra-to-algebra morphism
for the $F$-algebra $m$ given by
\[ m(f)(n:\sigma)= \max \{ f_0(\sigma) \mid f_0\in f(n) \} \qquad \text{for $f\in (\Powf (\Nat^{\Nat^\omega}))^\Nat$, $n\in \Nat$, $\sigma\in \Nat^\omega$}.  \]
By recursivity of $\ol{h}$, the function $\maxsteps$ is total.

\takeout{
\subsection{Extended Euclidian Algorithm}
The extended Euclidian algorithm computes for any pair $(m,n)$ of integers a triple $\gcd(m,n)=(g,s,t)$ where $g$ is the greatest common divisor of $m$ and $n$ and $s$ and $t$ are integers such that $g=s\cdot m + t\cdot n$. The algorithm computes the function
\[ \gcd\colon \Nat\times\Nat \to \Nat\times \Int\times \Int \]  
recursively by
\[ \gcd(m,0) = (m,1,0) \]
and, for $n\neq 0$, 
\[ \gcd(m,n) = (g,t,s-tq)\qquad \text{where}\qquad (q,r)=(m \mathop{\mathrm{div}} n, m \mathop{\mathrm{mod}} n)\quad\text{and}\quad (g,s,t)=\gcd(n,r). \] 
As already observed by \citet{jeanninWellfoundedCoalgebrasRevisited2017}, this recursive definition says precisely that $\gcd$ is a coalgebra-to-algebra morphism
\[
\begin{tikzcd}
\Nat \ar{r}{\gcd} \ar{d}[swap]{\gamma} \times \Nat & \Nat\times \Int\times \Int \\
F(\Nat\times \Nat) \ar{r}{F\gcd} & F(\Nat\times\Int\times\Int) \ar{u}[swap]{\alpha}
\end{tikzcd}
\]
where $FX=\Nat+X\times\Nat$ on $\Set$ and the coalgebra $\gamma$ and algebra $\alpha$ are given by

\noindent\begin{minipage}{.5\textwidth}
\[
\gamma(m,n)=\begin{cases}
m & n=0\\
(n, m\mathop{\mathrm{mod}} n, m\mathop{\mathrm{div}}n) & n\neq 0
\end{cases}
\]
\end{minipage}
\begin{minipage}{.45\textwidth}
\begin{align*}
\alpha(g) &= (g,1,0)\\
\alpha((g,s,t),q)&=(g,t,s-tq)
\end{align*}
\vspace{.2cm}
\end{minipage}

\noindent To show that this diagram indeed defines $\gcd$ uniquely, we use our general recursivity result:

\begin{theorem}
The coalgebra $(\Nat\times\Nat,\gamma)$ is recursive.
\end{theorem}

\begin{proof}
By \autoref{thm:recursive-coprod} is suffices to show that this coalgebra is ranked. To this end, we use the well-founded relation 
\[
I=(\Nat\cup \{\top\},<)
\]  
given by the linear order of natural numbers with an additional top element (i.e.\ $n<\top$ for all $n\in\Nat$). Then the function
\[ r\colon \Nat\times \Nat\to \Nat\qquad\text{defined by}\qquad  r(m,n) = \begin{cases}
0 & n=0 \\  
  \top & m<n  \\
m & m\geq n>0 
\end{cases} \]   
is a ranking function for $(\Nat\times\Nat,\gamma)$.
\end{proof}
}

\section{Conclusion, Related Work, and Future Work}

We have presented a uniform foundation for proving intrinsic correctness of recursive algorithms, building on the key idea of capturing recursive branching by recursive coalgebras over indexed families. The technical centerpiece of our paper is the notion of {well-founded functors}, which allow designing those coalgebras in such a way that they are intrinsically recursive. We have shown through a number of fully formalized case studies that our coalgebraic framework is broadly applicable to various types of algorithms and provides a simple and principled approach to their formal verification.

Our approach bears some similarity to the Bove-Capretta method \cite{boveModellingGeneralRecursion2005}. The latter translates a generally recursive definition into an \emph{accessibility predicate}, which is somewhat similar to the base functor underlying our intrinsically recursive coalgebras over categories of indexed families. However, the accessibility predicate is used as an argument to a modified version of the original function. Also, it is indexed by the type of all the arguments to the original function, whereas our base functor can make dual use of indices already needed for intrinsic correctness. Exploring the precise formal connections between the two approaches is an interesting direction.

\citet[§~3.4]{castroperezProgramOptimisationsHylomorphisms2025} define a criterion for a specific class of coalgebras to be recursive.
They define this criterion specifically in the Rocq~\cite{herbelinRocqproverRocqRocq2026} prover for functors that can be modeled as containers~\cite{abbottContainersConstructingStrictly2005} and use setoids.
They focus on \emph{code extractability} and thus by design avoid indexing.
Our approach, in contrast, focuses on the indexed setting, allowing for synergies with intrinsic verification, allows proving recursivity for \emph{all} coalgebras of a given functor, and works for functors that are not containers.

\citet{schaeferIntrinsicVerificationParsers2025} define intrinsically correct parsers by working in \(\Set^{\mathsf{String}}\). In \autoref{sec:cyk-parsing}, we use the same correctness criteria as they do, however, our approach allows the definition of \emph{coalgebraic} parsing algorithms. We aim to investigate more such parsing algorithms
in future work.

Finally, we aim to lift the coalgebraic termination analysis techniques (\autoref{sec:ranked-coalgebras}) from $\Set$ to more general categories. This could be useful, for instance, for the analysis of algorithms that handle data, which are modelled by coalgebras over nominal sets. This direction is supported by recent advances in the theory of well-founded coalgebras, such as the coalgebraic K\H{o}nig's lemma~\cite{uw25}, and will require techniques such as canonical graphs and Ramsey's theorem beyond $\Set$.

\begin{acks}
Henning Urbat is supported by  \grantsponsor{018mejw64}{Deutsche Forschungsgemeinschaft (DFG, German
  Research Foundation)}{https://www.dfg.de} -- project numbers \grantnum{018mejw64}{470467389} and \grantnum{018mejw64}{569130867}.
We thank Max New and Steven Schaefer for a discussion about recursive coalgebras in the context of parsing.
\end{acks}

\section*{Data Availability Statement}
\label{sec:data-avail-stat}
The git development of our source code has been archived on Software Heritage \cite{swh-rev-b2dd125}. A reproducible build environment in which our formalization can be checked is available on Zenodo \cite{alexandruAgdaFormalizationIntrinsically2026}. The inline links to our formalization are to code compiled to html and hosted persistently as ancillary files to the {arXiv} version of our paper at \nicehref{\onlineHtmlURL index.html}{\onlineHtmlURL index.html}.
%\clearpage
\printbibliography
\label{submissionpagelast}

}{
}
\ifthenelse{\boolean{showappendix}}{
\appendix
\section{Preservation Properties for Recursive Coalgebras}\label{sec:rec-preservation}
We present two general results on recursive coalgebras regarding their preservation by coalgebraic constructions. These results will be instrumental for the proof of \autoref{thm:well-founded-recursive-coprod}.
The first result says that recursivity is preserved by natural transformations:
\begin{envcite}
\begin{lemma}[{\cite[Prop.~1]{eppendahlFixedPointObjects2000}}]\label{recursiveNat}
   For every natural transformation $\alpha\colon G\to F$
   and every recursive $G$-coalgebra $c\colon C\to GC$,
   the $F$-coalgebra $\alpha_C\cdot c\colon C\to FC$ is recursive.
\end{lemma}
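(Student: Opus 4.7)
The plan is to exhibit an adjoint-like correspondence between $F$-coalgebra-to-algebra morphisms out of $(C,\alpha_C\cdot c)$ and $G$-coalgebra-to-algebra morphisms out of $(C,c)$, and to then invoke recursivity of $(C,c)$. The key construction is simple: given any $F$-algebra $a\colon FA\to A$, precompose with $\alpha_A$ to obtain the $G$-algebra
\[
\tilde a \;:=\; a\cdot \alpha_A\colon GA\longrightarrow FA\longrightarrow A.
\]

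First, I would establish \emph{existence}. By recursivity of $(C,c)$, there is a unique map $h\colon C\to A$ satisfying $h = \tilde a\cdot Gh\cdot c = a\cdot \alpha_A\cdot Gh\cdot c$. Applying naturality of $\alpha$ to $h$, i.e.\ $\alpha_A\cdot Gh = Fh\cdot \alpha_C$, this equation rewrites to $h = a\cdot Fh\cdot \alpha_C\cdot c$, showing that $h$ is a coalgebra-to-algebra morphism from the $F$-coalgebra $(C,\alpha_C\cdot c)$ to $(A,a)$.

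Next, I would handle \emph{uniqueness} by running the same translation in reverse. If $h'\colon C\to A$ satisfies $h' = a\cdot Fh'\cdot \alpha_C\cdot c$, then naturality of $\alpha$ gives $Fh'\cdot \alpha_C = \alpha_A\cdot Gh'$, so $h' = a\cdot \alpha_A\cdot Gh'\cdot c = \tilde a\cdot Gh'\cdot c$, making $h'$ a $G$-coalgebra-to-algebra morphism from $(C,c)$ to $(A,\tilde a)$. By recursivity of $(C,c)$, such a morphism is unique, so $h'=h$.

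There is no real obstacle here: the argument is entirely formal and amounts to two applications of naturality sandwiching an invocation of recursivity. The only care needed is to make sure the naturality square of $\alpha$ is used in both directions, which ensures that the correspondence between $F$- and $G$-coalgebra-to-algebra morphisms is a bijection rather than merely a one-way translation.
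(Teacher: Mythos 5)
Your proof is correct and is exactly the standard argument: the paper itself does not prove this lemma but cites it from \citet{eppendahlFixedPointObjects2000}, and your translation $a\mapsto a\cdot\alpha_A$ together with the two naturality rewrites is precisely how that cited proof goes. The bijective correspondence between coalgebra-to-algebra morphisms out of $(C,\alpha_C\cdot c)$ and out of $(C,c)$ is the right way to package both existence and uniqueness, so nothing is missing.
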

\end{envcite}
The second preservation result asserts that recursivity is preserved by determinization.

\begin{notation}
We write $\Coalg(F)$ for the category of coalgebras for an endofunctor $F$. Its morphisms $h\colon (C,c)\to (D,d)$ are morphisms $h\colon C\to D$ of $\C$ such that $d\cdot h = Fh\cdot c$. 
\end{notation}

\begin{definition}[Determinization]\label{def:determinize}
  For an endofunctor $F\colon \C\to\C$ and an adjunction $L\dashv R\colon \C\to \D$, we define
  the \emph{determinization} functor $\bar L\colon \Coalg(RFL)\to \Coalg(F)$ via adjoint transposition:
  \[
    \begin{tikzcd}
      (C,c\colon C\to RFLC)\quad
      \arrow[mapsto]{r}{\bar L}
      & \quad(LC, c^\sharp\colon LC\to FLC).
    \end{tikzcd}
  \]
\end{definition}

\begin{remark}
  Determinization is closely related to the \emph{generalized powerset construction}~\cite{generalizeddeterminization}. The latter
  considers $GT$-coalgebras for a functor $G\colon \D\to\D$ and a monad
  $T\colon \D\to \D$ for which~$G$ lifts to the Eilenberg-Moore category of
  $T$. For the construction itself, it suffices to consider an arbitrary
  decomposition of the monad $T$ into adjoint functors $L\dashv R\colon \C\to \D$ as in \autoref{def:determinize}.
  The condition that the functor $G$ lifts means that there is some functor $F\colon \C\to \C$ such that 
  \[
    \begin{tikzcd}
      \C
      \arrow{r}{F}
      \arrow{d}[swap]{R}
      & \C
      \arrow{d}{R}
      \\
      \D
      \arrow{r}{G}
      & \D
    \end{tikzcd}
  \]
  commutes. Since
$
    GT
    = GRL
    = RFL,
$
  the generalized powerset construction is then the functor
  \[
    \Coalg(GT)
    = \Coalg(RFL)
    \xrightarrow{~~\bar L~~}
    \Coalg(F)
  \]
  as defined above.
  The familiar powerset construction for non-deterministic automata corresponds to the setting where $G X = 2\times X^A$ on $\D = \Set$, $T=\Pow$ (the powerset monad), and $\C$ is the category of complete semilattices (equivalently, algebras for the monad $\Pow$).
  The lifting of $G$ is given by $FX=2\times X^A$ where $2$ is the semilattice $0 \le 1$ and the
  semilattice structure on $2\times X^A$ is defined componentwise.
  Then $GT$-coalgebras are non-deterministic automata and $F$-coalgebras
  are deterministic automata whose state space carries a semilattice
  structure and whose transitions and final states respect that structure.
  For the present purposes, we will consider the case where $\D=\C^I$ and $L\dashv R$ is the adjunction $\coprod \dashv \Delta$ of coproducts
  for $I$-indexed families.
\end{remark}

Given that left adjoints preserve initial objects and that
recursivity of a coalgebra is in spirit some kind of initiality (namely with respect to algebras), the following result is not surprising:
\begin{lemma}\label{determinizationRecursive}
  For every functor $F\colon \C\to \C$  and every adjunction $L\dashv R\colon \C\to \C$, the 
  determinization functor $\bar L\colon \Coalg(RFL)\to \Coalg(F)$ preserves
  recursive coalgebras.
\end{lemma}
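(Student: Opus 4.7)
The plan is to use adjoint transposition to set up a bijective correspondence between coalgebra-to-algebra morphisms on the two sides, and then transport the uniqueness property of recursivity across this bijection.

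First, I would observe that every $F$-algebra $a\colon FA \to A$ induces a canonical $RFL$-algebra structure $\alpha = Ra \cdot RF\epsilon_A \colon RFLRA \to RA$, where $\epsilon\colon LR \to \Id$ is the counit of the adjunction. This construction is dual to the formula $c^\sharp = \epsilon_{FLC}\cdot Lc$ defining the determinized coalgebra structure. By recursivity of $(C,c)$ applied to the $RFL$-algebra $(RA,\alpha)$, there exists a unique morphism $g\colon C \to RA$ satisfying $g = \alpha \cdot RFLg \cdot c$. My candidate for the coalgebra-to-algebra morphism into $(A,a)$ is the adjoint transpose $h := \epsilon_A \cdot Lg \colon LC \to A$.

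To verify that $h$ satisfies $h = a \cdot Fh \cdot c^\sharp$, I would transpose the defining equation for $g$ across the adjunction, using naturality of $\epsilon$ three times (at the morphisms $a\colon FA \to A$, $F\epsilon_A\colon FLRA \to FA$, and $FLg\colon FLC \to FLRA$). This rewrites the left adjoint of the right-hand side of $g$'s equation into exactly $a \cdot Fh \cdot c^\sharp$. For uniqueness, any coalgebra-to-algebra morphism $h'\colon LC \to A$ for $(LC, c^\sharp)$ and $(A,a)$ transposes back to $(h')^\flat = Rh' \cdot \eta_C\colon C \to RA$, and by the dual computation (using naturality of the unit $\eta$ and the triangle identities) this is a coalgebra-to-algebra morphism $(C,c) \to (RA,\alpha)$. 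The uniqueness part of recursivity then forces $(h')^\flat = g$, hence $h' = h$.

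The main obstacle is the diagram chase establishing this bijective correspondence; it is routine but notationally dense and must carefully apply naturality of $\epsilon$ (or $\eta$) at the right morphisms. A more conceptual alternative would be to exhibit $\bar L$ as a left adjoint to a suitable lifting $\bar R\colon \Coalg(F) \to \Coalg(RFL)$ and then appeal to the general principle that left adjoints preserve the kind of initiality encoded by recursivity (cf.\ the remark preceding the statement). However, since only one direction is claimed in the lemma, I would expect the direct transposition argument sketched above to yield the most economical proof.
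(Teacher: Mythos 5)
Your proposal is correct and follows essentially the same route as the paper: you define the same transported algebra structure $Ra\cdot RF\epsilon_A$ on $RA$, obtain the unique coalgebra-to-algebra morphism $g$ there, take its adjoint transpose $\epsilon_A\cdot Lg$ as the candidate, and transport uniqueness back across the adjunction exactly as the paper does.
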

\begin{proof}
  Let $c\colon C\to RFLC$ be a recursive $RFL$-coalgebra. To prove that $c^\#\colon FC\to FLC$ is a recursive $F$-coalgebra, suppose that
  $a\colon FA\to A$ is an $F$-algebra. Using the counit $\epsilon\colon LR\to \Id_\C$ of the
  adjunction $L\dashv R$, we define the following $RFL$-algebra structure $a'$ on $RA$:
  \[
    a'\equiv\big(
      \begin{tikzcd}
        RFL(RA)
        \arrow{r}{RF\epsilon_A}
        & RF A
        \arrow{r}{Ra}
        & RA
      \end{tikzcd}
    \big).
  \]
  Since $(C,c)$ is recursive, we obtain a unique $h\colon C\to RA$ making the diagram on the left below commute. By adjointness, this is equivalent to commutativity of the diagram on the right.
  \[
    \begin{tikzcd}
      C
      \arrow{rr}{c}
      \arrow{d}[swap]{h}
      & & RFLC
      \arrow{d}{RFLh}
      \\
      RA
      &
      RFA
      \arrow{l}{Ra}
      & RFLRA
      \arrow{l}{RF\epsilon_A}
      \arrow[shiftarr={yshift=-6mm}]{ll}{a'}
    \end{tikzcd}
    \quad\overset{L\,\dashv\, R}{\Longleftrightarrow}\quad
    \begin{tikzcd}
      LC
      \arrow{rr}{c^\sharp}
      \arrow{d}[swap]{h^\sharp}
      & & FLC
      \arrow{d}{FLh}
      \\
      A
      &
      FA
      \arrow{l}{a}
      & FLRA
      \arrow{l}{F\epsilon_A}
    \end{tikzcd}
  \]
We show that the adjoint transpose $h^\#$ is the unique coalgebra-to-algebra morphism from $(LC,c^\#)$ to $(A,a)$. Indeed, since $h^\sharp=\epsilon_A\cdot Lh$,
  the second diagram above shows that $h^\sharp$ is a coalgebra-to-algebra morphism 
  ($h^\sharp = a\cdot F h^\sharp \cdot c^\sharp$).
  For uniqueness, suppose that $u\colon LC\to A$ is a coalgebra-to-algebra morphism ($u=a\cdot Fu\cdot c^\sharp$).
  Taking the adjoint transpose $u^\flat\colon C\to RA$ of $u$ yields by above the correspondence
  a coalgebra-to-algebra morphism from $(C,c)$ to $(RA,a')$, whence $u^\flat=h$
  by uniqueness of $h$, and so $u=h^\#$. This proves that the coalgebra $(FC,c^\#)$ is recursive.
\end{proof}

\ifthenelse{\boolean{proofsinappendix}}{%
\section{Omitted Proofs}
\closeoutputstream{proofstream}
\input{\maintexname-proofs.out}
}{}

\section{Index of Formalized Results}\label{agdarefsection}
Below we list the \textcolor{srcfilenamecolor}{Agda file} containing the referenced result and (if applicable) mention a concrete identifier in this file.
The respective HTML files and the Agda source code files can be found in the ancillary files on arxiv and on
\begin{center}
\nicehref{\onlineHtmlURL index.html}{\onlineHtmlURL index.html}
\end{center}
and are also directly linked below.
The code builds with Agda 2.8.0, cubical-0.9, and the agda standard library v2.3.
\printcoqreferences

}{
}

\end{document}